\algnewcommand\algorithmicinput{\textbf{Input:}}
\algnewcommand\algorithmicoutput{\textbf{Output:}}
\algnewcommand\Input{\item[\algorithmicinput]}%
\algnewcommand\Output{\item[\algorithmicoutput]}
\DeclareMathOperator{\Lap}{Lap}
\DeclareMathOperator{\avg}{avg}
\DeclareMathOperator*{\argmax}{arg\,max}
\DeclareMathOperator*{\argmin}{arg\,min}
\DeclareMathOperator{\maxerr}{maxerr}
\DeclareMathOperator{\adderr}{adderr}
\newcommand{\inparanth}[1]{\left( #1 \right)}
\newcommand{\bp}[1]{\left( #1 \right)}
\newcommand{\prob}[1]{\mathbb{P} \left\{ #1 \right\}}
\newcommand{\bigo}[1]{\mathcal{O}\left( #1 \right)}
\def \npx {\mathbb{N}_0^{\mathcal{X}}}
\def \N {\mathbb{N}}
\def \P {\mathbb{P}}
\def \R {\mathbb{R}}
\def \AA {\mathcal{A}}
\def \CC {\mathcal{C}}
\def \RR {\mathcal{R}}
\def \XX {\mathcal{X}}
\def \a {\alpha}
\def \b {\beta}
\def \e {\varepsilon}
\def \d {\delta}
\def \ind {{\mathds 1}}
\newcommand\restr[2]{{\left.\kern-\nulldelimiterspace{#1}\right|_{#2}}}
\begin{document}

\title{Differentially Private Synthetic High-dimensional Tabular Stream}


\author{Girish Kumar}
\email{gkum@ucdavis.edu}
\affiliation{%
  \institution{Department of Mathematics, University of California}
  \city{Davis}
  \state{CA}
  \country{USA}
}

\author{Thomas Strohmer}
\email{strohmer@math.ucdavis.edu}
\affiliation{%
  \institution{Department of Mathematics, University of California}
  \city{Davis}
  \state{CA}
  \country{USA}
}

\author{Roman Vershynin}
\email{rvershyn@uci.edu}
\affiliation{%
  \institution{Department of Mathematics, University of California}
  \city{Irvine}
  \state{CA}
  \country{USA}
}

\renewcommand{\shortauthors}{Kumar et al.}

\begin{abstract}
  While differentially private synthetic data generation has been explored extensively in the literature, how to update this data in the future if the underlying private data changes is much less understood.  We propose an algorithmic framework for streaming data that generates multiple synthetic datasets over time, tracking changes in the underlying private data. Our algorithm satisfies differential privacy for the entire input stream (continual differential privacy) and can be used for high-dimensional tabular data. Furthermore, we show the utility of our method via experiments on real-world datasets. The proposed algorithm builds upon a popular {\em select, measure, fit, and iterate} paradigm (used by offline synthetic data generation algorithms) and private {\em counters} for streams.
\end{abstract}

\keywords{differential privacy, synthetic data, tabular, stream}

\maketitle

\section{Introduction}
Data availability has become crucial to technological advancement in today's world. Publicly available datasets have the additional advantage that people across various domains, affiliations, and geographic locations can contribute to the research on such datasets. However, in many critical domains such as healthcare, public policy, and market research, it is challenging to release curated datasets publicly without compromising user privacy. A potential solution explored in many previous works is the release of a synthetic dataset that mimics some relevant statistical properties of the private data. Differential privacy has emerged as a standard notion to provide a mathematical guarantee that the synthetic data preserves the privacy of individuals contributing to the private data. Achieving differential privacy is non-trivial and typically requires adding carefully calibrated noise to measurements of true data. Many existing works in the differential privacy literature have explored the task of generating synthetic data such as \cite{tabular_pgm_mckenna19a,aydore_rap,pate_gan,tao2021benchmarking,dp_cgan,xu2019ganobfuscator,yue2022synthetic,rapporGoogle}.

Most of the research in differential privacy has focused on generating the synthetic dataset once, based on all private data available at the time. However, in many real-world scenarios, the private data may change over time resulting in a requirement to update the synthetic data with time as well. For example, consider the electronic health records of patients admitted to the hospital in the middle of a pandemic such as COVID-19. The availability of public data which can be updated over time will allow research developments in real time.

Motivated by this scenario, in this work, we are interested in developing an algorithm to generate a privacy-preserving synthetic high-dimensional {\em tabular stream}. A high-dimensional tabular data is where each record in the dataset consists of (say) $p$ fixed attributes, where $p$ is sufficiently large. A {\em stream} is a collection of such datasets over time. Moreover, our algorithm is streaming (online) in the sense that we generate the updated synthetic dataset at each time, only using the private stream until that time. To preserve privacy, we will use the notion of {\em continual differential privacy}, which extends the concept of (offline) differential privacy to streaming algorithms. We discuss these terms and the setup rigorously in Section~\ref{s:intro}.

\section{Overview and Related Work}

The notion of differential privacy for streaming algorithms, such that the privacy guarantee spans the entire time horizon, was introduced in the seminal work of  \cite{Chan2010ContinualPrivateStats} and \cite{Dwork2010ContinualDP}. They also introduced the concept of {\em counters}, which are differentially private streaming algorithms that can efficiently count the occurrence of an event over an input stream. Counters have been used as building blocks of many streaming algorithms such as  \cite{cdp_wang2021continuousStream,cdp_chen2017pegasus,cdp_jain2023countingDistinct}. Our proposed method also uses counters as sub-routines.

However, differential privacy has not been explored as much for other complicated streaming tasks such as synthetic data generation. To the best of our knowledge \cite{cdp_bun2023continualSynthData}, \cite{He2024OnlineDP}, and \cite{kumar_algorithm_2024} are the only other works that explore differentially private synthetic data generation with streaming algorithms. \cite{cdp_bun2023continualSynthData} approach a different problem of generating synthetic streams for a fixed universe of users such that each user contributes at all times. Moreover, they assume that each record in data is a boolean. Both \cite{He2024OnlineDP} and \cite{kumar_algorithm_2024} provide an algorithm that works with a hierarchical decomposition of the data space. \cite{He2024OnlineDP} provides superior theoretical guarantees of their proposed algorithm but does not provide any experimental evaluation. \cite{kumar_algorithm_2024} limits the experiments to low-dimensional domains such as spatial streams. However, algorithms based on hierarchical decomposition typically do not scale well for high-dimensional data as the number of nodes in the tree grows exponentially with dimension leading to large time complexity and poor utility. To that end, this work is the first to provide a differentially private streaming algorithm for synthetic data generation that is tractable for real-world datasets and provides better utility than the trivial baseline. In the discussion that follows, we give an overview of our method and discuss how research in offline differential privacy has motivated it.

There is a plethora of research in differential privacy on offline algorithms generating synthetic tabular datasets such as \cite{tabular_pgm_mckenna19a,aydore_rap,pate_gan,hardt2012mwem,zhang2017privbayes,tao2021benchmarking,mckenna2021winning,mckenna2021hdmm,Liu2021IterativeMF,dual_query}. While they cannot be directly applied for our use case, they are the motivation behind our proposed method. First, similar to most of these works, we measure the quality of the synthetic stream using marginal queries. A marginal query counts the number of instances in a dataset where a particular combination of the values of some attributes occurs. For the previously mentioned data space of electronic health records, a marginal query may be - ``how many patients are more than 50 years old, have been previously diagnosed with Asthma, and have tested positive for COVID-19''? In this work, we are interested in a pre-defined set of marginal queries and we target that at any time the synthetic stream has almost the same value for any query as the true stream. 

Second, similar to most of the offline algorithms mentioned earlier, we use the {\em select, measure, fit, and iterate} paradigm. In this approach, the algorithm iteratively selects a query (typically the worst-performing query), measures it, and fits the data-generating model according to this noisy measurement. An early work that used this paradigm is MWEM \cite{hardt2012mwem} which combined the Multiplicative Weights algorithm with the Exponential Mechanism (a differentially private selection algorithm) to generate synthetic data. The MW algorithm directly maintains an estimated distribution on the entire data space and adjusts the distribution to comply with the noisy measurements of the selected marginal queries. The algorithm thus quickly becomes intractable for reasonably high dimensions of data. MWEM is typically the best-performing algorithm for low data dimensionality \cite{Liu2021IterativeMF}. In this work, we build upon MWEM+PGM \cite{mckenna2021winning}, a scalable version of the MWEM approach that replaces modeling the data distribution from MW to a Probabilistic Graphical Model (PGM), and has been shown to perform well for a variety of real-world datasets \cite{tao2021benchmarking}.

A related task to our problem is generating answers to histogram queries for a stream. This problem has been explored in works such as \cite{henzinger_differentially_2023} and \cite{jain_price_2023}. However, they limit the data space to the set $\set{0,1}^d$ (for some dimension $d$) and the histogram queries to the column sum. Moreover, while the accuracy scales logarithmically in time, it scales linearly with $d$. In contrast, our method outperforms a baseline streaming algorithm, which exhibits accuracy scaling poly-logarithmically with the number of queries. Thus, it is more practical for answering $k$-way marginal queries for high-dimensional datasets.

There exists a simple baseline differentially private streaming algorithm for our problem - run an independent instance of an offline algorithm (such as MWEM+PGM) on the differential data at any time to create the stream \cite{kumar_algorithm_2024}. We show in our experiments that our proposed method outperforms such a baseline. In principle, our method has two key differences: (1) we use counters as a sub-routine to measure any query over time, and (2) at any time, we recycle some information about the queries from the synthetic stream generated so far. The contributions of this work are summarized below:
\begin{enumerate}
    \item we present a novel framework that extends the {\em select, measure, fit, and iterate} paradigm from offline to streaming algorithms for synthetic data generation;
    \item we give a theoretical accuracy guarantee for the baseline algorithm;
    \item furthermore, we demonstrate using experiments that our proposed method outperforms the baseline on real-world high-dimensional streams.
\end{enumerate}

\section{Problem setup and useful tools}\label{s:intro}

\subsection{Notation}
Let $\N_0=\N\cup\set{0}$. Let $[n]$ denote the set $\{1, 2, \ldots, n\}$ for any $n\in\N$. Let $\XX = \XX_1 \times \XX_2 \times \ldots \times \XX_p$ denote a space of $p$-dimensional points such that $\XX_i$ be a finite set of cardinality $|\XX_i|$ for any $i \in [p]$. For example, if each data point results from a survey of $p$ boolean questions, then $\XX=\{0,1\}^p$. We refer to any mapping of the form $h:\XX\to\N_0$ as a {\em dataset} such that $h(x)$ represents the number of times a point $x\in\XX$ appears. We denote by $|h|$ the total number of points in the dataset, that is $|h| = \sum_{x\in\XX}h(x)$.

In this work, we are interested in a dataset that changes with time, resulting in a data {\em stream}. Let $f: \XX\times\N \to \N_0$ be an input stream where $f(x, t)$ denotes the number of instances of point $x$ at time $t$. For example, consider that $\XX$ is the set of all possible demographics in a country's voting population. Then we can represent the eligible voters in the country as a stream $f:\XX\times\N\to\N_0$ where $f(x,t)$ denotes the number of individuals in the population with demographics $x\in\XX$ that are eligible to vote at time $t$. We provide a streaming algorithm that generates a privacy-preserving synthetic data stream $g: \XX\times\N \to \N_0$ such that $g$ accurately represents the input stream $f$. We define the terms ``streaming algorithm", ``privacy-preserving", and ``accurately" rigorously in the subsequent subsections.

For any $N \subseteq \N$, we will use the notation $f_N$ to denote the restriction of the stream $f$ to the time indices in the set $N$, that is $f_N : \XX \times N \to \N_0$ such that $f_N(x,t)=f(x,t)$ for all $t \in N$ and $x \in \XX$. Similarly, for any time $t \in \N$, $f_t: \XX \to \N_0$ denotes a restriction of $f$ to time $t$ such that $f_t(x) = f(x,t)$ for all $x \in \XX$. 

We have assumed that $\XX_i$ is a finite set for all $i\in[p]$. While this assumption may not hold in practice, we can {\em discretize} a continuous space by spending some of the privacy budget to create a differentially private histogram and mapping each point to the histogram bin \cite{tao2021benchmarking}.

\subsection{Streaming algorithm} 
As a motivating example, consider an algorithm $\AA$ that converts a given input data stream $f:\XX\times\N\to\N_0$ into a synthetic data stream $g:\XX\times\N\to\N_0$. The key idea of a streaming algorithm (Definition~\ref{def:streaming_algo}) is that at each time $t$, it can only see the part of the input stream $f(x,t')$ for all $x \in \XX$ and $t' \le t$, and at that time the algorithm outputs $g(x,t)$ for all $x \in \XX$.

\begin{definition}[Streaming algorithm]\label{def:streaming_algo}
    Given an input stream $f:\XX\times\N\to\N_0$, an algorithm $\AA$ with output stream $g:\XX\times\N\to\N_0$ is said to be streaming if at any time $t \in \N$ it maps $f_{[t]}$ to $g_t$, that is, $g(t, x)\coloneqq \AA(f_{[t]})(t, x)$ for all $x \in \XX$.
\end{definition}

\subsection{Differential Privacy}	\label{s:dp_streaming_algorithm}
To rigorously define privacy, we use the notion of Differential Privacy \cite{dwork2006differential} but for a streaming algorithm. Intuitively, differential privacy ensures that the output of an algorithm does not depend extensively on any particular user's data by ensuring robustness in the output probability distribution against change in a single data point. To this end, we need a concept of streams that differ in a single data point. Borrowing from \cite{kumar_algorithm_2024}, we first provide definitions of differential and neighboring streams and finally the extension of differential privacy to streaming algorithms.

\begin{definition}[Differential stream]
    A {\em differential stream} for $f$ is the stream $\nabla f$ defined as,
    \begin{equation}	\label{eq: differential stream}
    	\nabla f(x,t) \coloneqq f(x,t)-f(x,t-1), \quad t \in \N,
    \end{equation}
    where we set $f(x,0)=0$. The total change of $f$ over all times and points is the quantity
    \begin{equation}
        \norm{f}_\nabla \coloneqq \sum_{x \in \XX} \sum_{t \in \N} \abs{\nabla f(x,t)},    
    \end{equation}
    which defines a seminorm on the space of data streams.
\end{definition}

\begin{definition}[Neighboring streams]\label{def:neighboring_streams}
    Two streams $f$ and $\tilde f$ are said to be neighbors if 
    \begin{equation}	\label{eq:neighboring_stream}
        \norm[0]{f-\tilde{f}}_\nabla = 1,
    \end{equation}    
\end{definition}
that is if $\tilde{f}$ can be obtained from $f$ by changing the count of a single data point at some particular time. 

\begin{definition}[Differential privacy (for streams)] \label{def:dp_stream}
  A randomized streaming algorithm $\AA$ that takes data streams as input is $\e$-differentially private if for any two neighboring streams $f$ and $\tilde f$ that satisfy $\norm[0]{f-\tilde{f}}=1$, the inequality
  \begin{equation} \label{eq:dp_stream}
  \P\{\AA(\tilde{f}) \in S\} \le e^\e \cdot \P\{\AA(f) \in S\}
  \end{equation}
  holds for any measurable set of outputs $S$.
\end{definition}

\subsection{Accuracy over marginal queries}
In this work, we measure the accuracy of our output stream using marginal queries. A marginal query is a low-dimensional counting query. For example, suppose we have a census-like dataset where some of the demographics are age, marital status, and income. An example marginal query on these attributes is -  how many individuals in the dataset are age $30$, never married, and have income more than $\$100,000$ per annum. We define marginal query more formally in Definition~\ref{def:marginal_query}.

\begin{definition}[k-way marginal query] \label{def:marginal_query}
  A $k$-way marginal query $q:\XX \to \{0,1\}$ is a mapping defined by a tuple $(c_1, c_2, \ldots, c_k)$ of $k$ attribute indices and their corresponding values $(v_1, v_2, \ldots, v_k)$ such that $v_i \in \XX_{c_i}$ for all $i \in [k]$ and the mapping is defined as,
  \begin{equation}\label{eq:marginal_query_for_point}
      q(x) = \prod_{i=1}^k \left(\ind_{\{{x_{c_i}=v_i\}}} \right),
  \end{equation}
  for any $x \in \XX$. With a slight abuse of notation, we extend the definition of the marginal query $q$ from points to datasets as,
  \begin{equation}\label{eq:marginal_query_distribution}
      q(h) = \sum_{x \in \XX} h(x) q(x),
  \end{equation}
  for any dataset $h: \XX \to \N_0$.
\end{definition}

\subsubsection{Accuracy}
We first discuss the accuracy of offline algorithms and then extend it to streaming algorithms. The quality of an algorithm generating a synthetic dataset is typically measured by the aggregate performance of the resulting synthetic data over a predefined set of marginal queries (say) $Q$. We define the accuracy formally in Definition~\ref{def:accuracy_dataset}.

\begin{definition}[Accuracy of an algorithm generating synthetic dataset] \label{def:accuracy_dataset}
  Let $\AA$ be a randomized algorithm that maps an input dataset $f:\XX\to\N_0$ to a synthetic dataset $g: \XX\to\N_0$. Then, for any $\b>0$, $\AA$ is said have an accuracy of $(\a, \b)$, with respect to a set of marginal queries $Q$, if 
  \begin{equation}\label{acc_dataset}
      \prob{\max_{q \in Q} \abs{q(g)-q(f)} \geq \a} \leq \b,
  \end{equation}
  with probability taken over the randomness of algorithm $\AA$.
\end{definition}

A natural extension of Definition~\ref{def:accuracy_dataset}, to streams can be created by restricting the stream to any time $t\in\N$ and looking at the accuracy of the dataset present at that time. 

\begin{definition}[Accuracy of a streaming algorithm] \label{def:accuracy_stream}
  Let $\AA$ be a randomized streaming algorithm that maps an input stream $f:\XX\times\N\to\N_0$ to a synthetic stream $g: \XX\times\N\to \N_0$. Then, at any time $t\in\N$ and for any $\b>0$, $\AA$ is said have an accuracy of $(\a, \b)$, with respect to a set of marginal queries $Q$, if 
  \begin{equation}\label{acc_dataset}
      \prob{\max_{q \in Q} \abs{q(g_t)-q(f_t)} \geq \a} \leq \b,
  \end{equation}
  where the probability is taken over the randomness of the algorithm $\AA$. Here, $\a$ may be a function of $\b$ and $t$.
\end{definition}

\subsection{Exponential Mechanism}\label{s:dp_algos_selection}
Let $\RR$ be a finite set. Let $u: \npx \times \RR \to \R$ be a function such that $u(h, r)$ denotes the utility of an element $r \in \RR$ for a dataset $h \in \npx$. Our task is to find an element in $\RR$ with maximum utility while preserving differential privacy. Note that the term utility is very general and its exact definition is governed by the problem. As an example, suppose we want to find a mode of a given dataset $h\in\npx$. The mode is any point $x_* \in \XX$ such that $x_*=\argmax_{x\in\XX}h(x)$. We can use the exponential mechanism in this case with the utility being the absolute difference between the frequency of any point from the maximum possible frequency in $h$. Thus in this case $\RR = \XX$, and $u(h, x;x_*) = \abs{h(x)-h(x_*)}$ hor all $x\in\XX$.

\begin{definition}[Exponential mechanism]\label{def:exp_mech}
    Let $\Delta_u$ denote the sensitivity of the utility function defined as,
    \begin{equation}\label{eq:exp_mech_sensitivity}
        \Delta_u \coloneqq \max_{ \substack{ h, \tilde h \in \npx \\ \norm[0]{h-\tilde h} =1 } } \max_{r \in \RR} \abs{ u(h, r) - u(\tilde h, r)}.
    \end{equation}
    Then, the exponential mechanism is defined as the algorithm $\AA: \npx \to \RR$ such that, for all $r \in \RR$,
    $$
        \prob{\AA(h)=r} \propto \exp{ \bp{ -\frac{\e}{2 \Delta_u} u(h,r) } }.
    $$
\end{definition}

\begin{theorem}[Privacy of exponential mechanism]\label{thm:exp_mech_privacy}
    The exponential mechanism, as defined in Definition~\ref{def:exp_mech}, satisfies $\e$-differential privacy.
\end{theorem}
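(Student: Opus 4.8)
The plan is to verify the defining inequality of Definition~\ref{def:dp_stream} directly by bounding the pointwise likelihood ratio of the mechanism on any fixed output, and then summing over outputs. Since $\RR$ is finite, it suffices to show that for every pair of neighboring datasets $h, \tilde h \in \npx$ with $\norm[0]{h-\tilde h}=1$ and every fixed $r \in \RR$,
\begin{equation}\label{eq:exp_plan_pointwise}
    \prob{\AA(h)=r} \le e^\e \cdot \prob{\AA(\tilde h)=r};
\end{equation}
summing \eqref{eq:exp_plan_pointwise} over all $r \in S$ for any output set $S \subseteq \RR$ then yields the required guarantee. So the whole argument reduces to establishing \eqref{eq:exp_plan_pointwise}.

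To prove \eqref{eq:exp_plan_pointwise}, first I would write out the normalized probabilities from Definition~\ref{def:exp_mech} and form the ratio, so that
\begin{equation}\label{eq:exp_plan_ratio}
    \frac{\prob{\AA(h)=r}}{\prob{\AA(\tilde h)=r}}
    = \underbrace{\frac{\exp\bp{-\tfrac{\e}{2\Delta_u} u(h,r)}}{\exp\bp{-\tfrac{\e}{2\Delta_u} u(\tilde h,r)}}}_{\text{(I)}}
    \cdot \underbrace{\frac{\sum_{r' \in \RR}\exp\bp{-\tfrac{\e}{2\Delta_u} u(\tilde h,r')}}{\sum_{r' \in \RR}\exp\bp{-\tfrac{\e}{2\Delta_u} u(h,r')}}}_{\text{(II)}}.
\end{equation}
The factorization in \eqref{eq:exp_plan_ratio} is the crux: it separates the dependence on the fixed output $r$ (factor (I), a ratio of numerators) from the dependence on the normalizing constants (factor (II), a ratio of partition functions). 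Both factors are controlled by the same quantity, the sensitivity $\Delta_u$ from \eqref{eq:exp_mech_sensitivity}.

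For factor (I), I would combine the exponents and invoke the sensitivity bound $\abs{u(h,r)-u(\tilde h,r)} \le \Delta_u$, which gives $\text{(I)} \le \exp\bp{\tfrac{\e}{2\Delta_u}\cdot\Delta_u} = e^{\e/2}$. For factor (II), the key observation is that the same per-term sensitivity bound lets me compare the two sums term by term: each summand $\exp(-\tfrac{\e}{2\Delta_u}u(\tilde h,r'))$ in the numerator is at most $e^{\e/2}$ times the corresponding summand $\exp(-\tfrac{\e}{2\Delta_u}u(h,r'))$ in the denominator, so the factor $e^{\e/2}$ pulls outside the sum and $\text{(II)} \le e^{\e/2}$. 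Multiplying the two bounds yields $e^\e$, which is exactly \eqref{eq:exp_plan_pointwise}. I do not expect a genuine obstacle here; the one point that warrants care is factor (II), where I must bound a ratio of sums rather than a single exponential, and I should note that the minus sign in the mechanism (it favors \emph{low} utility, as in the mode example) is harmless because the sensitivity \eqref{eq:exp_mech_sensitivity} bounds the absolute difference $\abs{u(h,r)-u(\tilde h,r)}$ symmetrically, so the argument is unaffected by the sign convention.
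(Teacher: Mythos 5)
Your argument is the standard McSherry--Talwar proof and it is correct: the reduction to the pointwise ratio, the split into the numerator factor (I) and the partition-function factor (II), and the term-by-term comparison giving $e^{\e/2}$ for each factor are all sound, and you are right that the minus sign in Definition~\ref{def:exp_mech} is immaterial because the sensitivity bounds $\abs{u(h,r)-u(\tilde h,r)}$ symmetrically. The paper itself states Theorem~\ref{thm:exp_mech_privacy} without proof, quoting it as a known result, so there is no in-paper argument to compare against; your write-up supplies exactly the standard one.
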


\begin{theorem}[Accuracy of exponential mechanism]\label{thm:exp_mech_accuracy}
    For any $\b>0$, the exponential mechanism, as defined in Definition~\ref{def:exp_mech}, satisfies
    \begin{equation*}
        \prob{ u(h,\AA(h)) \leq u_{OPT} - \frac{2\Delta_u}{\e} \ln\bp{{\frac{|\RR|}{\b}} } }
        \leq \b,
    \end{equation*}
    where $u_{OPT} = \max_{r\in\RR} u(h,r)$ denotes the maximum possible utility.
\end{theorem}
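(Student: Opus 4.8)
The plan is to run the classical tail-bound argument for selection mechanisms: bound the total probability mass the mechanism places on the ``bad'' outputs, those whose utility falls more than the stated margin below the optimum. Throughout I take the mechanism in its standard form, where mass concentrates on elements of \emph{high} utility, so that for the normalizer $Z = \sum_{r\in\RR}\exp\bp{\frac{\e}{2\Delta_u}u(h,r)}$ we have $\prob{\AA(h)=r} = \exp\bp{\frac{\e}{2\Delta_u}u(h,r)}\big/ Z$; this is the sign convention under which a bound toward $u_{OPT}=\max_{r}u(h,r)$ is the meaningful one.

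First I would fix the deviation parameter $c \coloneqq \frac{2\Delta_u}{\e}\ln\bp{\frac{|\RR|}{\b}}$ and collect the outputs to be ruled out into the bad set $\RR_{\mathrm{bad}} \coloneqq \set{r\in\RR : u(h,r) \le u_{OPT}-c}$. The event in the theorem is exactly $\set{\AA(h)\in\RR_{\mathrm{bad}}}$, so it suffices to show $\prob{\AA(h)\in\RR_{\mathrm{bad}}} \le \b$. Writing this probability as the ratio $\sum_{r\in\RR_{\mathrm{bad}}}\exp\bp{\frac{\e}{2\Delta_u}u(h,r)}\big/ Z$ reduces the whole problem to estimating the numerator from above and the denominator from below.

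For the numerator, every $r\in\RR_{\mathrm{bad}}$ satisfies $u(h,r)\le u_{OPT}-c$, so each summand is at most $\exp\bp{\frac{\e}{2\Delta_u}(u_{OPT}-c)}$; since there are at most $|\RR|$ such terms, the numerator is bounded by $|\RR|\exp\bp{\frac{\e}{2\Delta_u}(u_{OPT}-c)}$. For the denominator I would simply discard all but one term: picking $r^\star$ with $u(h,r^\star)=u_{OPT}$, which exists because $\RR$ is finite, gives $Z \ge \exp\bp{\frac{\e}{2\Delta_u}u_{OPT}}$. Dividing, the two $u_{OPT}$ contributions cancel and I am left with $\prob{\AA(h)\in\RR_{\mathrm{bad}}} \le |\RR|\exp\bp{-\frac{\e}{2\Delta_u}c}$.

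Finally I would substitute the chosen $c$: the exponent becomes $-\ln(|\RR|/\b)$, so the right-hand side collapses to $|\RR|\cdot\frac{\b}{|\RR|}=\b$, which is the claim. I do not expect a genuine obstacle, since this is a short counting estimate, but the step carrying all the weight is the deliberate asymmetry between the two bounds: the numerator must pay a factor of $|\RR|$ for the union over all bad outcomes, while the denominator is allowed to keep only the single optimal term. The one place a careless reading could flip the final inequality is the sign of the exponent, which is why I pinned down the ``high utility is favored'' convention at the outset.
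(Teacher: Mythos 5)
Your proof is correct, and it is the standard tail-bound argument for the exponential mechanism; the paper itself states Theorem~\ref{thm:exp_mech_accuracy} without proof, treating it as a known result, so there is no in-paper argument to diverge from. Your decomposition --- bounding the mass on $\RR_{\mathrm{bad}}$ by $|\RR|\exp\bp{\frac{\e}{2\Delta_u}(u_{OPT}-c)}$ above and the normalizer by the single optimal term $\exp\bp{\frac{\e}{2\Delta_u}u_{OPT}}$ below --- is exactly the canonical one, and the algebra checks out. Your opening remark about the sign convention is also well taken: Definition~\ref{def:exp_mech} as printed carries a negative sign in the exponent (appropriate for a utility to be \emph{minimized}, as in the paper's mode example), which is inconsistent with a guarantee stated relative to $u_{OPT}=\max_{r\in\RR}u(h,r)$; the positive-sign convention you adopt is the one under which the theorem holds and is also the one actually used in Algorithms~\ref{alg:streaming_mwem} and~\ref{alg:main}, where the mechanism is applied to error vectors whose largest entry is to be selected.
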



\subsection{Counters}
We borrow Definition~\ref{def:counter} from \cite{kumar_algorithm_2024}. Intuitively, it is an algorithm that estimates the sum of a stream with a certain accuracy.

\begin{definition}[Counter]\label{def:counter}
    An $(\a, \d)$-accurate counter $\CC$ is a randomized streaming algorithm that estimates the sum of an input stream of values $f:\N \to\R$ and maps it to an output stream of values $g: \N\to\R$ such that at any time $t\in\N$, 
    $$\mathbb{P} \biggl\{ \bigg\lvert g(t)- \sum_{t'\leq t} f(t') \bigg\rvert \leq \a(t, \d) \biggr\} \geq 1-\d,$$
    where the probability is over the randomness of $\CC$ and $\d$ is a small constant.
\end{definition}

\cite{Chan2010ContinualPrivateStats} and \cite{Dwork2010ContinualDP} first introduced differentially private counters that efficiently find the sum of a bit stream. The algorithms proposed in these works satisfy Definition~\ref{def:counter}. In this work we will be using the Simple II, Two-Level, and Hybrid Mechanism from \cite{Chan2010ContinualPrivateStats}, hereafter referred to as Simple, Block, and Binary Tree Counters respectively. As explained in \cite{Chan2010ContinualPrivateStats}, the key principle behind the design of these algorithms is dividing the time horizon into intervals and adding together the {\em noisy partial sums} from these intervals. For a fixed failure probability, the simple, block and binary tree counters are $\bigo{\sqrt{t}}$, $\bigo{t^{1/4}}$, and $\bigo{\inparanth{\ln{t}}^{3/2}}$ accurate respectively.
\section{Offline tabular synthetic dataset generation}
Our streaming algorithm uses many ideas from offline algorithms in the literature for dataset generation. Let us first discuss these ideas. Consider the task of generating synthetic tabular data when the dataset is available at once (offline). Let $f:\XX \to \N_0$ be a dataset. Assume we are interested in generating a synthetic dataset $g:\XX\to\N_0$ that is accurate for marginal queries $Q$. A straightforward approach to generating the synthetic dataset would be: (1) generate a differentially private measurement for all queries using the Laplace Mechanism as
$$
m = \bp{ q(f) + \Lap \bp{ \frac{\Delta q}{\e / |Q|} } }_{q\in Q};
$$ 
(2) find a dataset that minimizes the maximum error over the query set by solving the following optimization problem,
\begin{equation}\label{eq:minimization_problem}
    \argmin_{g \in \npx} \max_{q \in Q}  \abs{ m_q - q(g) }.
\end{equation}

There are however two key problems with this approach: (1) the size of the query set is typically polynomial in the dimension $p$ which leads to a very small budget for answering an individual query, that is a large amount of noise is added in Laplace Mechanism, and (2) the optimization problem in equation~\eqref{eq:minimization_problem} is a high-dimensional discrete optimization problem which is NP-Hard and cannot be solved in time polynomial in dimension $p$. Many existing algorithms thus circumvent the above two problems by: (1) measuring only a subset of the queries in $Q$ which have the largest error, and (2) approximating the optimization problem in Equation~\ref{eq:minimization_problem}. Let us first assume that we have a way to model the data distribution given the noisy measurements. Let $\AA_{Dataset}$ be one such subroutine that takes as input noisy measurements of the queries in $Q$ and an initialization dataset (say) $h\in\npx$, to provide a dataset $g\in\npx$ that complies with the measurements. In Section~\ref{s:select_measure_iterate}, we use $\AA_{Dataset}$ as a black-box subroutine and discuss how to iteratively select and measure a subset of the queries. In Section~\ref{s:model_distribution}, we then look at some of these methods for creating the dataset given a value of the queries.

\subsection{The {\em Select, Measure, Fit, and Iterate} paradigm} \label{s:select_measure_iterate}

\begin{algorithm}[t]
    \begin{algorithmic}[1]
        \State {\bfseries Input:} Given dataset $f$, an ordered set of queries $Q$, privacy budget $\e$, a differentially private selection mechanism $\AA_{Select}$, a subroutine $\AA_{Dataset}$ to find a dataset given noisy query measurements, and the number of iterations $k$.
        \State {\bfseries Output:} A dataset $g \in \npx$.
        \State Create a dataset $h_0 \in \npx$ with  $h_0(x)=1$ for all $x \in \XX$.
        \State Set $M \leftarrow \emptyset$ as a set of selected queries and their measurements.
        \For{$i=1, 2, \ldots, k$}
            \State Set $e_i \leftarrow \bp {\abs{q(h_{i-1})-q(f)}}_{q \in Q} $ as the error in queries.
            \State {\bf Select:} $l_i \leftarrow \AA_{Select}(e_i, 2/\e)$, an index of query.
            \State {\bf Measure:} $m_i \leftarrow q_{l_i}(f) + \Lap\bp{2\Delta_{q_{l_i}}/\e}$, value of query.
            \State Set $M \leftarrow M \cup \{ (q_{l_i}, m_i) \}$; add selected query and its value.
            \State \label{lst:meta_offline_algo_optimize} {\bf Optimize: } Dataset $h_i \leftarrow \AA_{Dataset}(M, h_{i-1})$.
        \EndFor
    \end{algorithmic}
    \caption{Meta algorithm: generating differentially private synthetic tabular dataset}
    \label{alg:meta_dp_synthetic_dataset}
\end{algorithm}

Algorithm~\ref{alg:meta_dp_synthetic_dataset} is a meta-algorithm describing the {\em select, measure, fit, and iterate} paradigm. This paradigm is used in several existing methods and has been shown to achieve good empirical accuracy \cite{Liu2021IterativeMF}. The algorithm has a fixed number of iterations $k$. It receives two subroutine algorithms $\AA_{Select}$ and $\AA_{Dataset}$ which can be treated as a black box for now. $\AA_{Select}$ is a differentially private algorithm used for selection, whereas $\AA_{Dataset}$ does not guarantee differential privacy and is used to create a dataset based on the values of the queries. Algorithm~\ref{alg:meta_dp_synthetic_dataset} iteratively produces a series of synthetic datasets $h_1, h_2, \ldots, h_k$ that are, hopefully, more and more closer to the true data $f$ as per the queries $Q$. In each iteration $i$, the following happens: (1) using the subroutine $\AA_{Select}$, while upholding differential privacy, we select a query $q_{l_i}$ that has the most error on the current synthetic dataset $h_{i-1}$; (2) an approximation of the value of this query is generated as $m_i$ using the Laplace Mechanism; and finally (3) the dataset is updated from $h_{i-1}$ to $h_i$ by using the sub-routine $\AA_{Dataset}$. 

\subsection{Dataset complying with queries}\label{s:model_distribution}

In this subsection, we discuss some algorithms that can be used for $\AA_{Dataset}$ in Algorithm~\ref{alg:meta_dp_synthetic_dataset}. 

\subsubsection{Multiplicative Weights (MW)}\label{s:offline_mwem}

\cite{hardt2012mwem} first introduced the idea of Algorithm~\ref{alg:meta_dp_synthetic_dataset} and used the {\em Multiplicative Weights} (MW) algorithm as $\AA_{Dataset}$ together with the Exponentia Mechanism for $\AA_{Select}$. With the MW algorithm, Step~\ref{lst:meta_offline_algo_optimize} of Algorithm~\ref{alg:meta_dp_synthetic_dataset} results in a dataset $h_{i}$ that is $|f|$ times the distribution that satisfies
$$
    h_i(x) \propto h_{i-1}(x) \cdot \exp{ \bp{ q_{l_i}(x) \cdot \frac{m_i - q_{l_i}(h_{i-1})}{2|f|} } }.
$$

MWEM solves a convex approximation of Equation~\eqref{eq:minimization_problem} over the probability simplex in $\XX$, we refer the reader to \cite{Liu2021IterativeMF} for more details. The algorithm comes with a theoretical guarantee and works quite well for low-dimensional datasets. However, since it requires maintaining a probability distribution over $\XX$, it becomes computationally intractable for many real-world datasets.

\subsubsection{Probabilistic Graphical Model (PGM)} \label{s:offline_pgm}

An alternative to the MW algorithm as $\AA_{Dataset}$ in Algorithm~\ref{alg:meta_dp_synthetic_dataset} is the Probabilistic Graphical Model (PGM) algorithm \cite{tabular_pgm_mckenna19a}. PGM further approximates the optimization problem by restricting the solution space from all possible distributions on $\XX$ to distributions that can be represented as a graphical model of the form 
$$
    P_{\theta}(x) = \frac{1}{Z} \exp{\bp{ \sum_{C\in\CC} \theta_C(x_C)}},
$$
for all $x\in\XX$. Here, $\CC\subseteq2^{[d]}$ is a collection of subsets of $[d]$, $\theta_C$ is a function for each $C\in\CC$, $x_C$ is the restriction of $x\in\XX$ on the column indices in $C$, and $Z$ is a normalization constant. Thus the model $P_\theta$ is defined by low-dimensional functions $\theta_C$, one for each $C\in\CC$. PGM uses a proximal algorithm to solve the resulting convex optimization problem. PGM has been shown to perform very well in practice and we will be using it in our experiments.

\section{Baseline: Streaming MWEM}
Let us get to our problem of producing synthetic stream $g$ for private stream $f$. In this section, we propose a baseline algorithm which can convert any offline algorithm $\AA$ to a streaming algorithm (say) $\AA^+$. The idea is very simple: given an input stream $f$, at any time $t$, $\AA^+$ runs an independent instance of algorithm $\AA$ on the differential dataset at time $t$, that is $\nabla f_t$, and produces the differential synthetic dataset $\nabla g_t$. It can be shown that if $\AA$ satisfies (offline) $\e$-differential privacy, then $\AA^+$ satisfies $\e$-differential privacy as per Definition~\ref{def:dp_stream}. $\AA$ can be any differentially private algorithm that generates a synthetic dataset. For our current discussion, we create our baseline streaming algorithm by fixing the offline mechanism $\AA$ as the MWEM algorithm \cite{hardt2012mwem}. Let us refer to the streaming version of MWEM as {\em StreamingMWEM} and we present the complete algorithm in  Algorithm~\ref{alg:streaming_mwem}. 

\begin{theorem}[Accuacy of StreamingMWEM]\label{thm:streaming_mwem_acc}
    At any time $t\in\N$, StreamingMWEM (Algorithm~\ref{alg:streaming_mwem}) is 
    $$
    \bp{ \bigo{ |f_t|^{2/3} \bp{ \frac{ \bp{t\log{t}} \ln{|\XX|}\ln{|Q|} }{\e\b} }^{1/3} }, \b}
    $$ accurate with respect to the set of marginal queries $Q$.
\end{theorem}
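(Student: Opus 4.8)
The plan is to reduce the time-$t$ accuracy of StreamingMWEM to the accuracy of the $t$ independent offline MWEM instances that generate the differential synthetic datasets, and then aggregate the per-instance guarantees. Since $\AA^+$ accumulates its differential outputs, $g_t=\sum_{t'\le t}\nabla g_{t'}$ and $f_t=\sum_{t'\le t}\nabla f_{t'}$; because each marginal query is linear in the dataset, the triangle inequality gives $\abs{q(g_t)-q(f_t)}\le\sum_{t'=1}^{t}\abs{q(\nabla g_{t'})-q(\nabla f_{t'})}$ for every $q\in Q$. It therefore suffices to bound each summand by the offline MWEM error on input $\nabla f_{t'}$ and then sum. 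Two neighboring streams differ in a single $\nabla f_{t_0}$ (the differential of a one-person step perturbation is supported at a single time), so the instances act on disjoint pieces of data and each may spend the full budget $\e$ by parallel composition; this is also what yields the asserted $\e$-differential privacy.

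\textbf{Single-instance offline bound (the crux).} First I would establish that one run of MWEM with $k$ iterations and budget $\e$ on a dataset $h$ satisfies, with probability at least $1-\beta'$, the bound $\max_{q\in Q}\abs{q(\text{out})-q(h)}\le 2\abs{h}\sqrt{\ln\abs{\XX}/k}+\bigo{k\ln(\abs{Q}/\beta')/\e}$. The second (noise) term comes from spending $\e/2k$ per iteration on the exponential-mechanism selection (controlled by Theorem~\ref{thm:exp_mech_accuracy}) and $\e/2k$ on the Laplace measurement, with a union bound over the $k$ iterations and $\abs{Q}$ queries producing the $\ln(\abs{Q}/\beta')$ factor. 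The first (approximation) term is the standard Multiplicative-Weights potential argument: tracking the relative entropy between $h/\abs{h}$ and the normalized iterate, each step decreases the potential by at least the squared normalized error of the selected worst query, and since the total potential is at most $\ln\abs{\XX}$, the average squared normalized error over $k$ steps is $\bigo{\ln\abs{\XX}/k}$, giving error $\bigo{\abs{h}\sqrt{\ln\abs{\XX}/k}}$. Choosing $k\asymp\bp{\abs{h}\e\sqrt{\ln\abs{\XX}}/\ln(\abs{Q}/\beta')}^{2/3}$ to balance the two terms yields $\max_{q}\abs{q(\text{out})-q(h)}\le c\,\abs{h}^{2/3}\bp{\ln\abs{\XX}\,\ln(\abs{Q}/\beta')/\e}^{1/3}$. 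This potential argument together with the correct balancing of the $2/3$ and $1/3$ exponents is the main obstacle; the rest is bookkeeping.

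\textbf{Aggregation over time.} Applying the single-instance bound to each $\nabla f_{t'}$ with failure probability $\beta'=\beta/t$ and taking a union bound over $t'\in[t]$, with probability at least $1-\beta$ every term obeys $\abs{q(\nabla g_{t'})-q(\nabla f_{t'})}\le c\,\abs{\nabla f_{t'}}^{2/3}\bp{\ln\abs{\XX}\,\ln(t\abs{Q}/\beta)/\e}^{1/3}$ simultaneously for all queries. Summing and applying H\"older's inequality with exponents $3/2$ and $3$,
\[
\sum_{t'=1}^{t}\abs{\nabla f_{t'}}^{2/3}\le t^{1/3}\bp{\sum_{t'=1}^{t}\abs{\nabla f_{t'}}}^{2/3}=t^{1/3}\abs{f_t}^{2/3},
\]
where the last equality uses $\sum_{t'\le t}\abs{\nabla f_{t'}}=\abs{f_t}$ for insertion-only streams (for general streams this sum upper-bounds the relevant flux). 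Finally, under the mild assumptions $t,\abs{Q}\ge 2$ and $\beta\le1$, the crude estimate $\ln(t\abs{Q}/\beta)=\ln t+\ln\abs{Q}+\ln(1/\beta)\le C\,\ln t\,\ln\abs{Q}/\beta$ converts the sum of logarithms into the product form of the statement. Combining everything gives $\max_{q}\abs{q(g_t)-q(f_t)}\le\bigo{\abs{f_t}^{2/3}\bp{(t\log t)\ln\abs{\XX}\ln\abs{Q}/(\e\beta)}^{1/3}}$ with probability at least $1-\beta$, which is exactly the claimed accuracy.
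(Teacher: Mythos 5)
Your proposal follows essentially the same route as the paper: decompose $\abs{q(g_t)-q(f_t)}\le\sum_{t'\le t}\abs{q(\nabla g_{t'})-q(\nabla f_{t'})}$ by linearity of marginal queries, analyze each offline MWEM instance via the exponential-mechanism utility bound, Laplace concentration, and the relative-entropy potential argument, take a union bound over time with failure probability $\b/t$, and balance the $a/\sqrt{k}+bk$ trade-off to get the $2/3$--$1/3$ exponents. The one wrinkle is the order of operations: you optimize $k$ separately for each instance, choosing $k\asymp\bp{\abs{\nabla f_{t'}}\e\sqrt{\ln\abs{\XX}}/\ln(\abs{Q}/\b')}^{2/3}$, but $k$ is a single input parameter of Algorithm~\ref{alg:streaming_mwem} shared by all times, so these per-instance choices are mutually inconsistent whenever the batch sizes $\abs{\nabla f_{t'}}$ vary; your H\"older step then aggregates bounds that no single run of the algorithm simultaneously satisfies. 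The paper avoids this by keeping $k$ symbolic, summing over time first (using $\sum_{t'}\abs{\nabla f_{t'}}=\abs{f_t}$ on the linear $\sqrt{\ln\abs{\XX}/k}$ term, so no H\"older is needed), and optimizing $k$ once at the end; this yields the same final bound and is the one-line fix your argument needs.
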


We provide the proof of Theorem~\ref{thm:streaming_mwem_acc} in Appendix~\ref{s:appendix_baseline_accuracy}.

\begin{algorithm}[ht]
    \begin{algorithmic}[1]
        \State {\bf Input:} An input data stream $f$, an ordered set of marginal queries $Q$, number of marginals to select at any time $k$, the privacy budget $\e$.
        \State {\bf Output:} A synthetic stream $g$.
        \State Initialize $g(0,x) \leftarrow 1$ for all $x \in \XX$.
        \For{$t=1, 2, \ldots$}
            \State Set $I_{t,0} \leftarrow \emptyset$.
            \For{$l=1,2,\dots,k$}
                \State Set $J_{t, l} \leftarrow [|Q|] \setminus I_{t, l-1}$; as query indices not selected.
                \State Set $e_{t, l} \leftarrow \bp{ \vert q_i(\nabla f_t) - q_i(h_{t, l-1})\vert }_{i \in J_{t, l}} $.
                \Statex
                \State // Exponential Mechanism \label{lst:alg_smwem_selection}
                \State Sample a query index $\eta_{t,l}$ such that for any $i \in J_{t,l}$,
                $$
                    \prob{\eta_{t,l} = i} \propto \exp{ \bp{\frac{\e}{2k} (e_{t,l})_i } }.
                $$
                \State Using $j$ as a shorthand for $\eta_{t,l}$.
                \State Set $I_{t, l} \leftarrow I_{t, l-1} \cup \{ j \}$.
                \Statex
                \State // Laplace Mechanism \label{lst:alg_smwem_measure}
                \State Sample $\d_{t,l} \sim \Lap\bp{\frac{2k}{\e}}$.
                \State Set $m(t,j) \leftarrow q_j(\nabla f_t) + \d_{t,l}$.
                \Statex
                \State // Multiplicative weights \label{lst:alg_smwem_mw}
                \State Set $h_{t, l}$ as $|\nabla f_t|$ times the distribution that satisfies
                $$
                    h_{t,l}(x) \propto h_{t, l-1}(x) \cdot \exp{ \bp{ q_j(x) \cdot \frac{m_j - q_j(h_{t,l-1})}{2|\nabla f_t|} } }
                $$
            \EndFor
            \State Set $g_t \leftarrow g_{t-1} + \avg_{l \in [k]} h_{t,l}$.
        \EndFor
    \end{algorithmic}
    \caption{Baseline algorithm: StreamingMWEM}
    \label{alg:streaming_mwem}
\end{algorithm}

\section{Proposed algorithm}

In this section, we present out proposed algorithm.

\subsection{Outline}
In a nutshell, our algorithm also follows the {\em select, measure, fit, and iterate} paradigm described in Algorithm~\ref{alg:meta_dp_synthetic_dataset}. At any time $t\in\N$, the goal is to ensure that $f_t$ and $g_t$ are close to each other as evaluated using the queries in the enumerated set $Q$. We start with a dataset $h_{t,0}=g_{t-1}$ and update it over $k$ iterations from $h_{t,0}, h_{t,1}, \dots$, to $h_{t,k}$. At any iteration $l \in [k]$, we select a query index $\eta_{t,l} \in [|Q|]$ for which our dataset $h_{t,l-1}$ has approximately the highest error when compared to $f_t$. We will discuss how exactly this selection is done soon, but for now, let us accept it as a black-box. At the end of the $k$ iterations, $g_t$ is set to some aggregate of the datasets $h_{t,1}, h_{t,2}, \ldots$, and $h_{t,k}$. We present our proposed method as a meta-algorithm in Algorithm~\ref{alg:main}.

\begin{algorithm}[ht]
    \begin{algorithmic}[1]
        \State {\bf Input:} An input data stream $f$, an ordered set of marginal queries $Q$, number of marginals to select at any time $k$, the privacy budget $\e$, a counter algorithm $\AA_{Counter}$, and a subroutine $\AA_{Dataset}$ to find a dataset given noisy query measurements.
        \State {\bf Output:} A synthetic stream $g$.
        \State Initialize $C_1, C_2, \ldots, C_{|Q|}$ as independent instances of the counter algorithm $\AA_{Counter}$, one for each query in the set $Q$, with privacy budget $\e/2k$.
        \State Initialize $g(0,x) \leftarrow 1$ for all $x \in \XX$.
        \State Initialize $m(0, i) \leftarrow 0$ for all $i \in [|Q|]$; query measurements of selected queries
        \State Initialize $r(0, i) \leftarrow 0$ for all $i \in [|Q|]$; remainder of query value for times when the query is not selected.
        \For{$t=1, 2, \ldots$}
            \State Set $I_{t,0} \leftarrow \emptyset$.
            \For{$l=1,2,\dots,k$}
                \State Set $J_{t, l} \leftarrow [|Q|] \setminus I_{t, l-1}$; as query indices not selected.
                \State Set $e_{t, l} \leftarrow \bp{ \vert q_i(\nabla f_t + g_{t-1}) - q_i(h_{t, l-1})\vert }_{i \in J_{t, l}} $.
                \State $\eta_{t,l} \leftarrow ExponentialMechanism \bp{ e_{t,l}, \e/2k }$.
                \State Using shorthand $j$ for $\eta_{t,l}$.
                \State Set $I_{t, l} \leftarrow I_{t, l-1} \cup \{ j \}$.
                \State Invoke counter subroutine $C_j$ with input $\nabla f_t$.
                \State Set $r(t,j) \leftarrow r(t-1, j)$.
                \State Set $m(t,j) \leftarrow C_j + r(t, j)$.
                \State Set $h_{t, l} \leftarrow \AA_{Dataset} \bp{ \set{ \bp{ q_i, m(t, i) } }_{i \in I_{t,l}}, h_{t, l-1} }$.
            \EndFor
            \State Set $g_t \leftarrow \avg_{l \in [k]} h_{t,l}$.
            \State Set $C_i(t) \leftarrow  C_i(t-1)$ for all $i \in [|Q|]\setminus I_{t,k}$.
            \State Set $r(t, i) \leftarrow  q_i(g_t) - C_i(t)$ for all $i \in [|Q|]\setminus I_{t,k}$.
        \EndFor
    \end{algorithmic}
    \caption{Main algorithm: streaming differentially private synthetic tabular stream}
    \label{alg:main}
\end{algorithm}

\subsection{Measure}
Let $m:\N\times[|Q|] \to \R$ be a map such that $m(t,i)$ denotes our differentially private approximation of $q_i(f_t)$, that is the value of query $q_i \in Q$ at time $t$. Since a single query may be selected at multiple time instances, we use a counter algorithm to measure the value of the query efficiently over time. We associate each query in $Q$ with an instance of some counter Algorithm, say~$\AA_{Counter}$. Consider a query $q_i\in Q$ and let $C_i$ be its corresponding counter. We use the notation $C_i(t)$ to conveniently refer to the value of the counter $C_i$ at time $t$. Let $N_i(t) \subseteq [t]$ be the time instances until time $t$ when the query $q_i$ was selected to be measured using the true data. Also, let $\bar N_i(t) \coloneqq [t]\setminus N_i(t)$ be the time instances until time $t$ at which query $q_i$ was not selected. Then the output $C_i(t)$ of the counter algorithm is based solely on the stream $\nabla f_{N_i(t)}$.

However, to generate the dataset $g_t$ we need an approximate measurement of the value $q_i(f_t)$. In other words, we are missing the measurement of the query on times $\bar N_i(t)$ when the index $i$ was not selected. At any such time $\tau \in \bar N_i(t)$, since $q_i$ was not selected, we assume that the query value $q_i(g_{\tau})$ on the synthetic dataset $g_\tau$ is close to the true value $q_i(f_{\tau})$. We create a map $r:\N\times[|Q|] \to \R$ such that $r(t,i)$ denotes our differentially private approximation of the value of query $q_i$ over times in $\bar N_i(t)$. Assuming $r(0,i)=0$, we define $r(t,i)$ for any $t \in \N$ as,
$$
    r(t,i) = \begin{cases}
        q_i(g_t) - C_i(t), &t \in N_i(t),\\
        r(t-1, i), &otherwise.
    \end{cases}
$$
Finally, our differentially private approximation $m(t,i)$ of the query $q_i$ at time $t$ becomes
$$
    m(t,i) = C_i(t) + r(t,i).
$$

\subsection{Fit}
At any time $t$ and iteration $l$, Algorithm~\ref{alg:main} uses the Algorithm~$\AA_{Dataset}$ as a subroutine to generate the synthetic dataset $h_{t,l}$ using the query indices selected so far at time $t$, that is $\{\eta_{t,1}, \ldots, \eta_{t,l}\}$, and their corresponding differentially private values $\{m(t, \eta_{t,1}),\ldots, m(t, \eta_{t,l})\}$. $\AA_{Dataset}$ can be any algorithm and is not required to satisfy differential privacy.

\subsection{Select}
We are finally ready to talk about query selection. During iteration $l$ of time $t$, we want to select the query with maximum error over the synthetic dataset $h_{t,l-1}$ as compared to the true dataset $f_t$. However, accessing $q(f_t)$ results in high sensitivity. Indeed a simple change at some time $\tau \in \N$ can affect the selection at all times $t>\tau$.

To control the sensitivity, we follow a trick inspired by \cite{kumar_algorithm_2024} and approximate $f_t$ as $g_{t-1} + \nabla f_t$ for selection. For any query $q_i \in Q$, $l \in [k]$, and $t\in\N$, we define
$$
    e_{t, l} \coloneqq \bp{ \vert q_i(\nabla f_t + g_{t-1}) - q_i(h_{t, l-1})\vert }_{i \in [|Q|]}.
$$
Finally, we use the Exponential Mecnahism as defined in Definition~\ref{def:exp_mech} for selecting a query index $\eta_{t,l}$ given the vector of query utilities $e_{t,l}$. Note that Algorithm~\ref{alg:main} does not find the error for all queries but instead only for queries that have not been chosen so far at iteration $l$ of time $t$ (whose indices are in the set $J_{t,l}$).

\begin{theorem}[Privacy of Algorithm~\ref{alg:main}]
    If the Algorithms $\AA_{Counter}$ and $\AA_{Dataset}$ satisfy differential privacy, then Algorithm~\ref{alg:main} is $\e$-differentially private.
\end{theorem}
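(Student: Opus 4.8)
The plan is to isolate the only two places where Algorithm~\ref{alg:main} actually reads the raw stream $\nabla f$—the exponential-mechanism selections and the counter invocations—and to argue that everything else is post-processing. Concretely, I would introduce the composite mechanism $\MM$ whose output is the joint transcript consisting of all selected indices $\{\eta_{t,l}\}$ together with the full output streams of the counters $\{C_i\}_{i\in[|Q|]}$. Every other quantity produced by the algorithm is a deterministic function of this transcript: the remainder bookkeeping $r(t,i)$ and the measurement $m(t,i)=C_i(t)+r(t,i)$ are built from counter outputs and from $g$; the fit $h_{t,l}=\AA_{Dataset}(\cdot)$ reads only the privatized measurements and the previously produced datasets (so, although the theorem assumes $\AA_{Dataset}$ is differentially private, it in fact acts purely as post-processing); and $g_t=\avg_{l\in[k]} h_{t,l}$ is likewise post-processing. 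Hence it suffices to show $\MM$ is $\e$-differentially private, after which the post-processing property gives the claim for $g$.

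The second ingredient is localization of a single change. For neighboring $f,\tilde f$, the differential streams $\nabla f$ and $\nabla\tilde f$ differ by exactly $1$ in a single coordinate $(x^*,\tau)$. Since the algorithm touches the raw data only through the values $q_i(\nabla f_t)$ (inside the utilities $e_{t,l}$ and as the counter inputs), and these agree under $f$ and $\tilde f$ for every $t\neq\tau$, a perturbation confined to time $\tau$ can affect the transcript only through the computations executed at time $\tau$. I would formalize this with the adaptive sequential composition theorem for pure differential privacy: order the primitive sub-mechanisms in execution order, condition on the portion of the transcript generated strictly before time $\tau$ (which has identical law under $f$ and $\tilde f$), and note that every sub-mechanism after time $\tau$ is post-processing of $g_\tau$ and of coordinates of $\nabla f$ that are unchanged.

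It then remains to account for the privacy cost at time $\tau$, where exactly $k$ selections and $k$ counter invocations occur. \emph{Selections:} a unit change in $\nabla f_\tau(x^*)$ changes each utility $e_{\tau,l}^{(i)}=\abs{q_i(\nabla f_\tau+g_{\tau-1})-q_i(h_{\tau,l-1})}$ by at most $q_i(x^*)\le 1$, because marginal queries take values in $\{0,1\}$; thus the utility has sensitivity $1$ and, by Theorem~\ref{thm:exp_mech_privacy}, each selection with budget $\e/2k$ is $(\e/2k)$-differentially private, so composing over $l=1,\dots,k$ costs $\e/2$. \emph{Counters:} each $C_i$ is an $(\e/2k)$-differentially private streaming mechanism, the change perturbs the input stream of any single counter in at most one coordinate (time $\tau$) and by at most one—i.e.\ it yields a neighboring input stream—and at most $k$ counters are invoked at time $\tau$, so composing over these $\le k$ counters costs $\e/2$. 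Adding the two groups gives a total of $\e$, so $\MM$ is $\e$-differentially private and hence so is $g$.

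The delicate point, which I expect to be the main obstacle, is the interaction between adaptivity and the streaming nature of the counters. Which queries are selected at time $\tau$—and therefore which counters receive the perturbed input and become ``active''—is itself a function of the private data through the earlier outputs, and unlike the selections a counter cannot be attributed to a single time step, so it must be charged over its whole lifetime even while the perturbation is localized to one coordinate. Making this rigorous requires treating each counter and each exponential mechanism as a single factor in adaptive pure-DP composition and verifying, conditioned on the pre-$\tau$ transcript, that no sub-mechanism outside time $\tau$ contributes to the privacy loss; in particular one must check that $r(t,i)$, $m(t,i)$, and $\AA_{Dataset}$ never reintroduce a dependence on $\nabla f_\tau$ beyond what the counters and selections already expose, which is precisely what legitimizes the post-processing reduction.
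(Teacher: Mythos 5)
Your argument is correct and is essentially the same as the paper's: the paper's proof simply asserts that Algorithm~\ref{alg:main} is an instance of the ``selective counting'' scheme of \cite{kumar_algorithm_2024} with the budget split as $\e/2$ for the $k$ exponential-mechanism selections and $\e/2$ for the counters, which is exactly the decomposition, localization to the single perturbed time $\tau$, and adaptive composition that you carry out explicitly. The only difference is that you supply the full composition/post-processing argument (including the correct observation that $\AA_{Dataset}$ and the $r(t,i)$ bookkeeping are pure post-processing, so the DP assumption on $\AA_{Dataset}$ is not actually needed) where the paper defers to the cited prior work.
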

\begin{proof}
    Note that Algorithm~\ref{alg:main} is an instance of the selective counting algorithm from \cite{kumar_algorithm_2024}. Moreover, since we split the budget as $\e/2$ for the selection (Exponential Mechanism) subroutine and the remaining $\e/2$ for the counters at any time $t$, Algorithm~\ref{alg:main} satisfies $\e$-differential privacy.
\end{proof}

\section{Experiments and results}\label{s:results}

 \begin{figure*}[ht]
     \centering
     \begin{subfigure}[t]{0.23\linewidth}
         \centering
         \includegraphics[height=0.15\textheight]{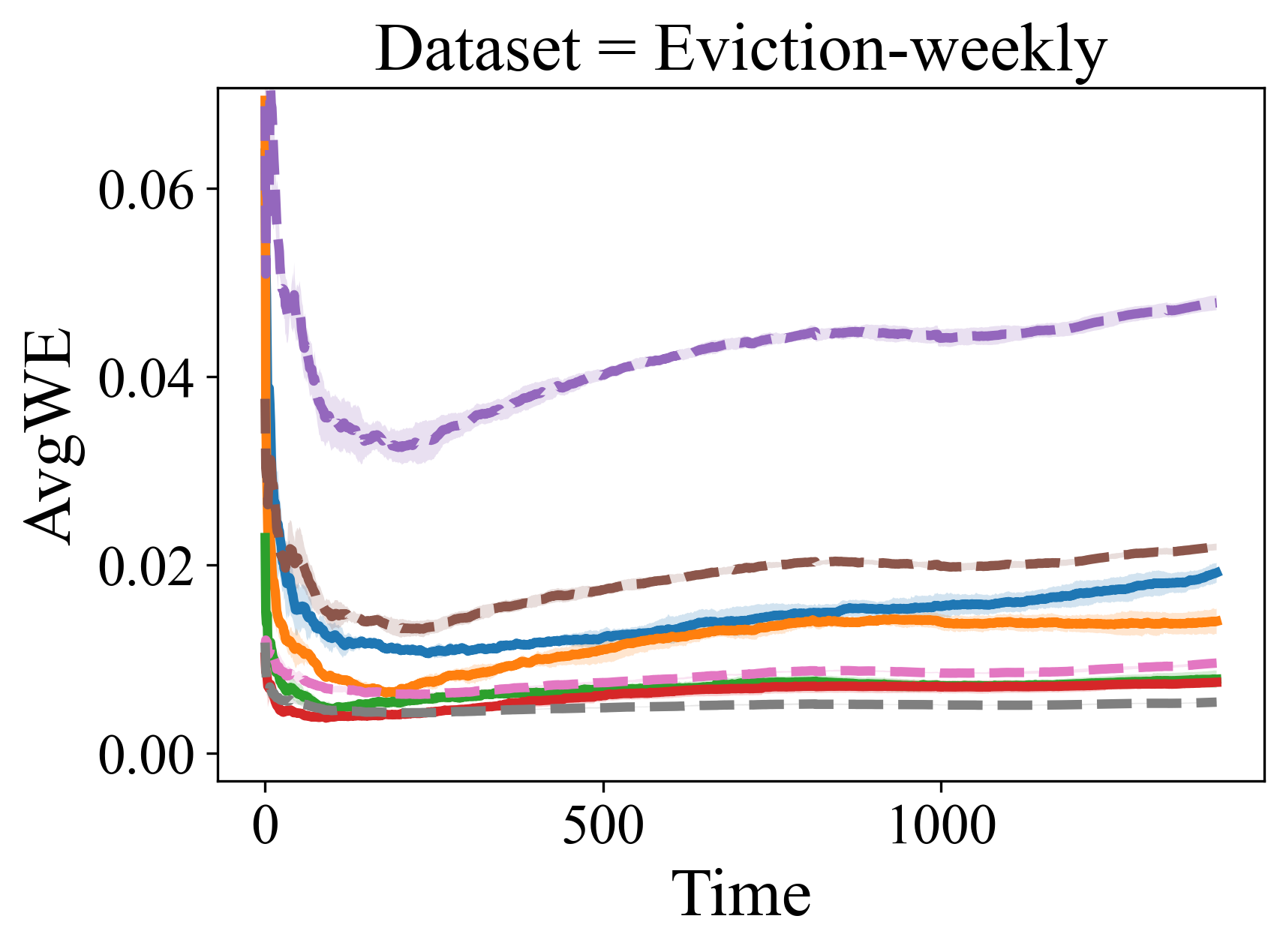}
        \caption{Average of workload errors}
     \end{subfigure}
     \hspace*{\fill}
     \begin{subfigure}[t]{0.23\linewidth}
         \centering
         \includegraphics[height=0.15\textheight]{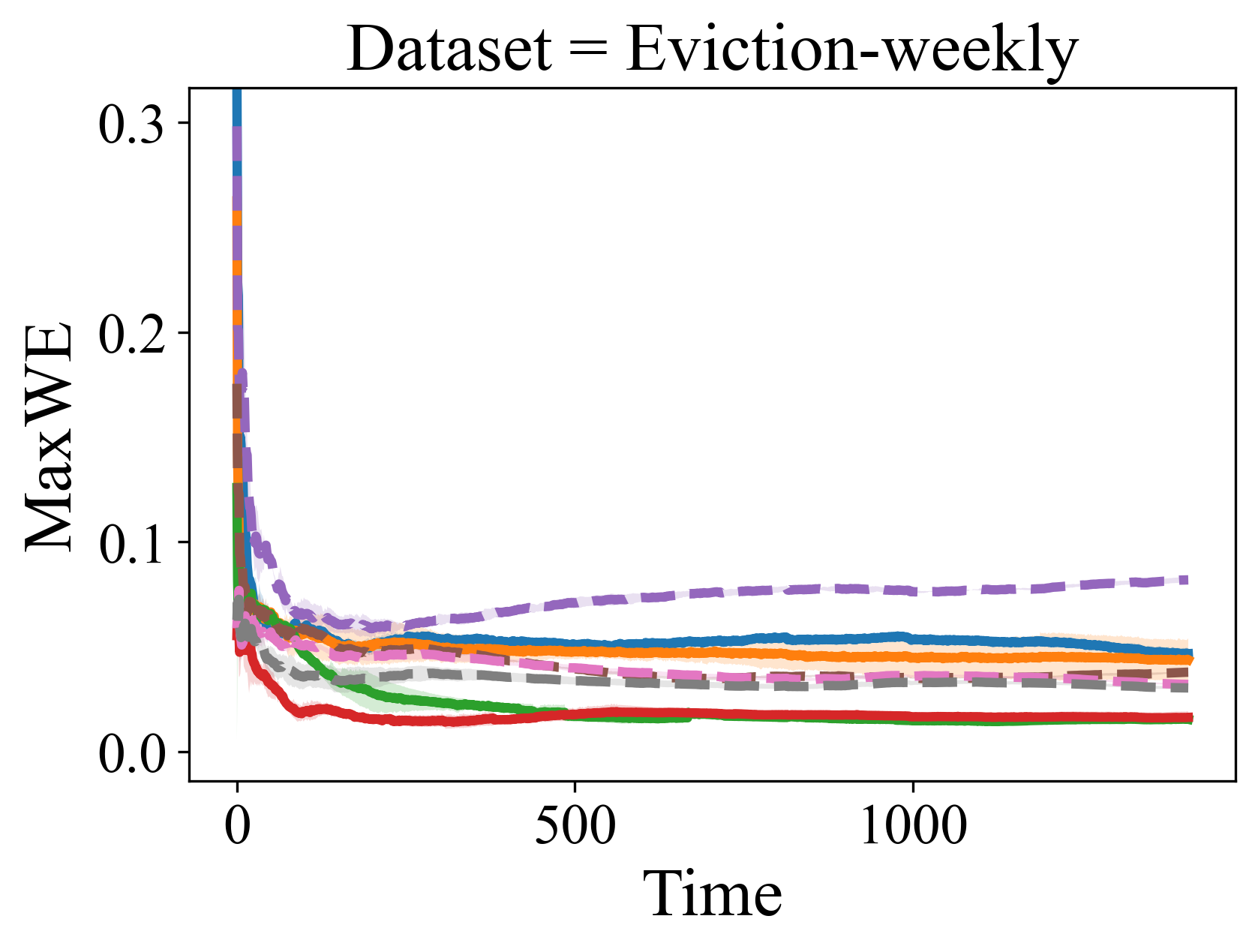}
        \caption{Maximum of workload errors}
     \end{subfigure}
     \hspace*{\fill}
     \begin{subfigure}[t]{0.23\linewidth}
         \centering
         \includegraphics[height=0.15\textheight]{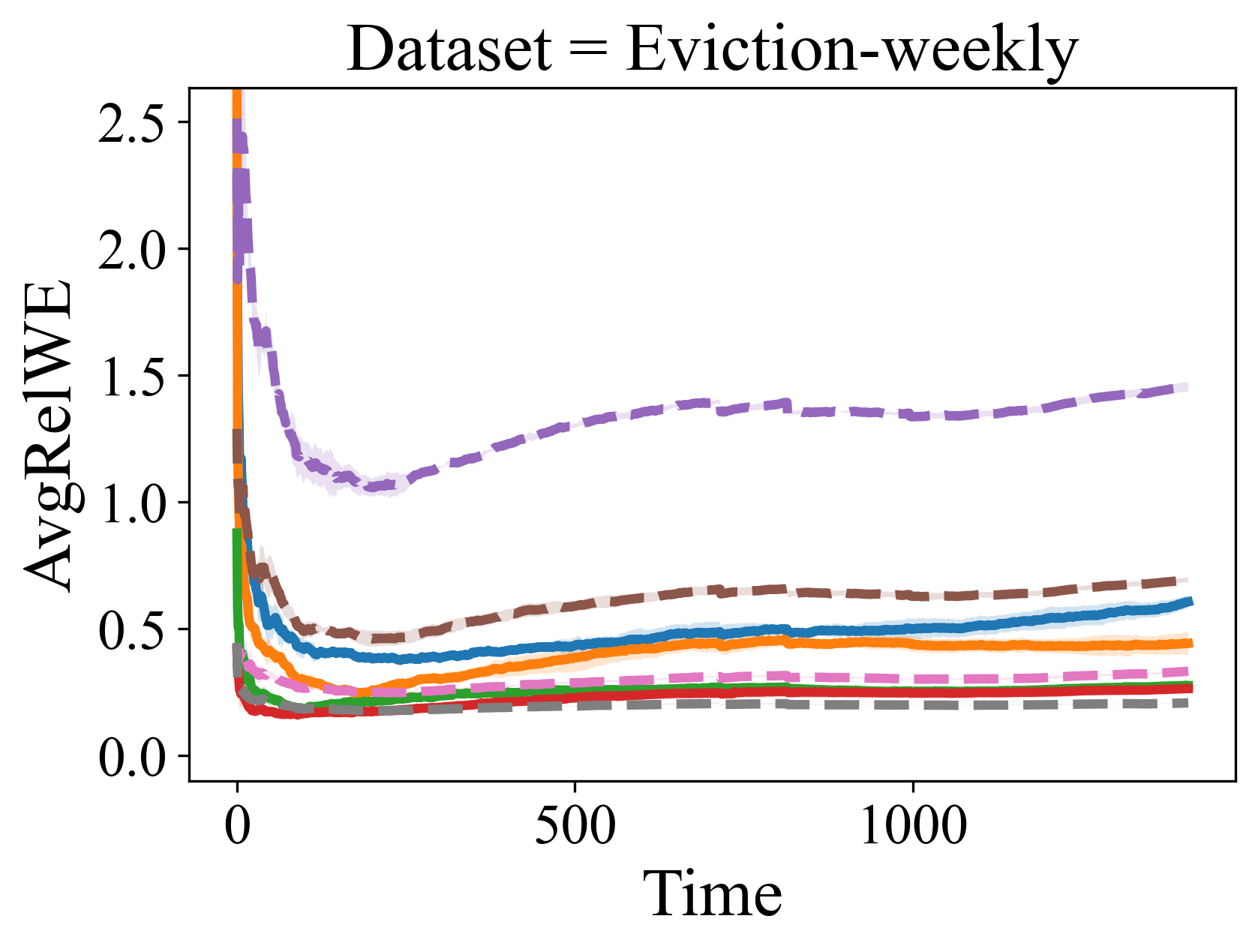}
        \caption{Average of relative workload errors}
     \end{subfigure}
     \hspace*{\fill}
     \begin{subfigure}[t]{0.23\linewidth}
         \centering
         \includegraphics[height=0.15\textheight]{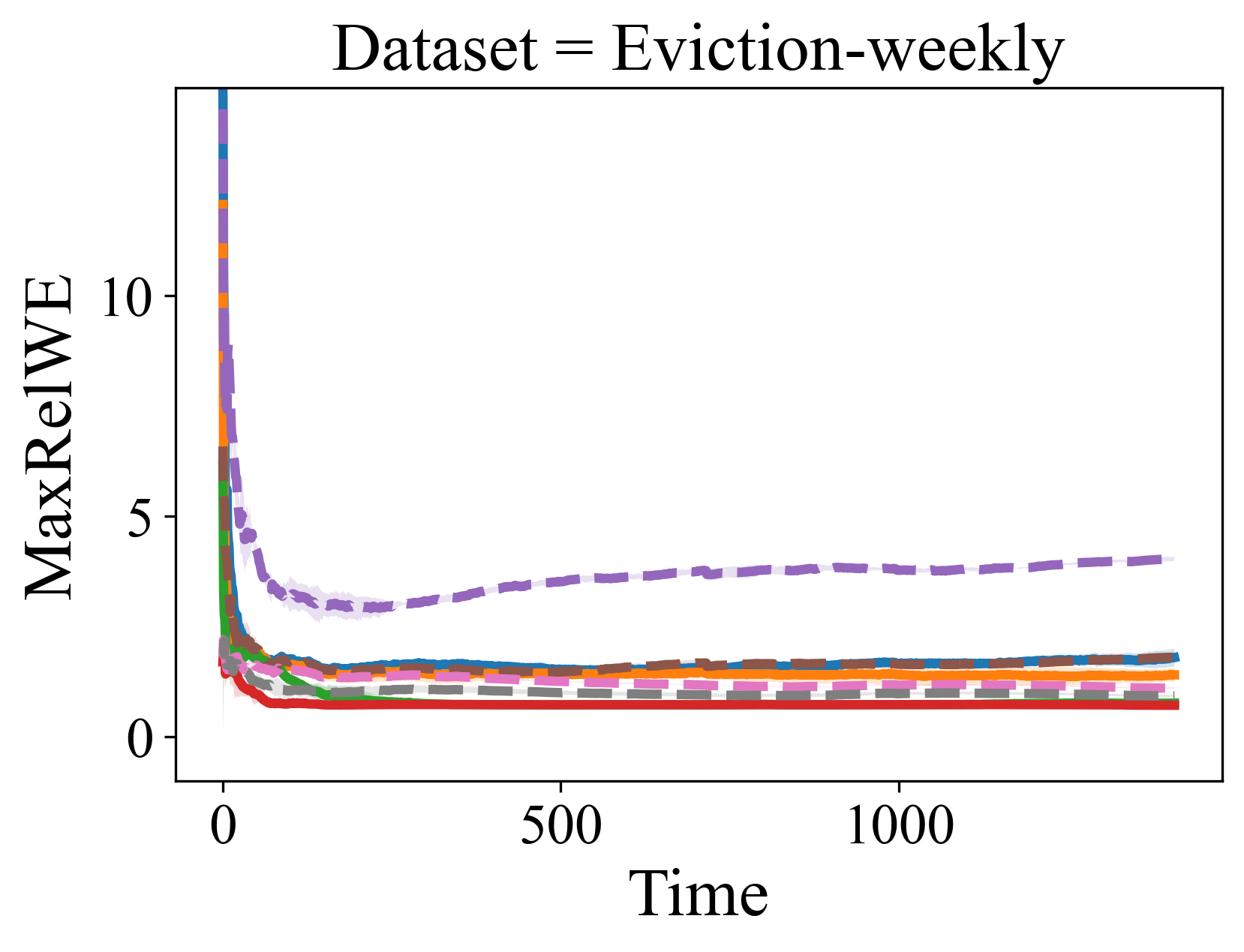}
        \caption{Maximum of relative workload errors}
     \end{subfigure}
    \hfill
     \begin{subfigure}[t]{\linewidth}
         \centering
         \includegraphics[height=0.03\textheight]{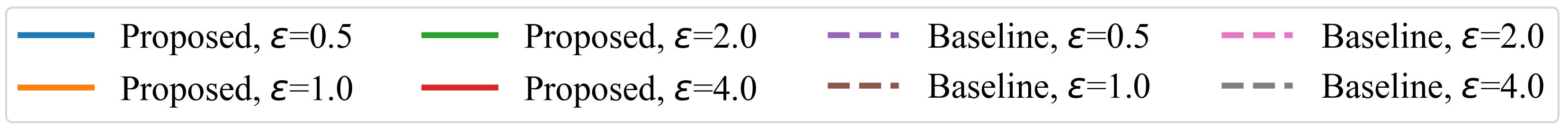}
     \end{subfigure}
    \caption{Metrics over time to compare the baseline and proposed method for the Eviction-weekly dataset.}
    \label{fig:Eviction-weekly}
\end{figure*}

 \begin{figure*}[ht]
     \centering
     \begin{subfigure}[t]{0.23\linewidth}
         \centering
         \includegraphics[height=0.15\textheight]{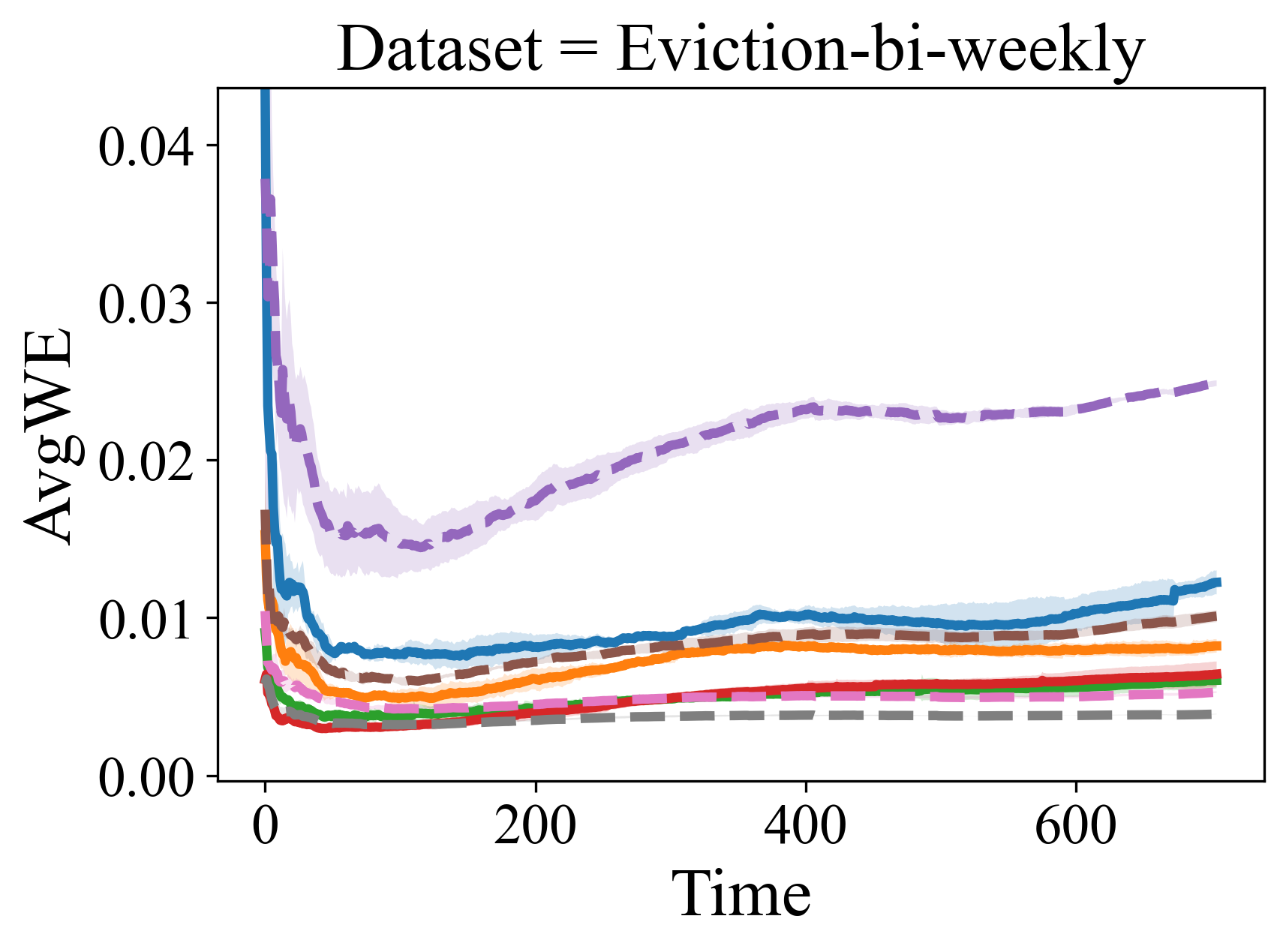}
        \caption{Average of workload errors}
     \end{subfigure}\hspace*{\fill}
     \begin{subfigure}[t]{0.23\linewidth}
         \centering
         \includegraphics[height=0.15\textheight]{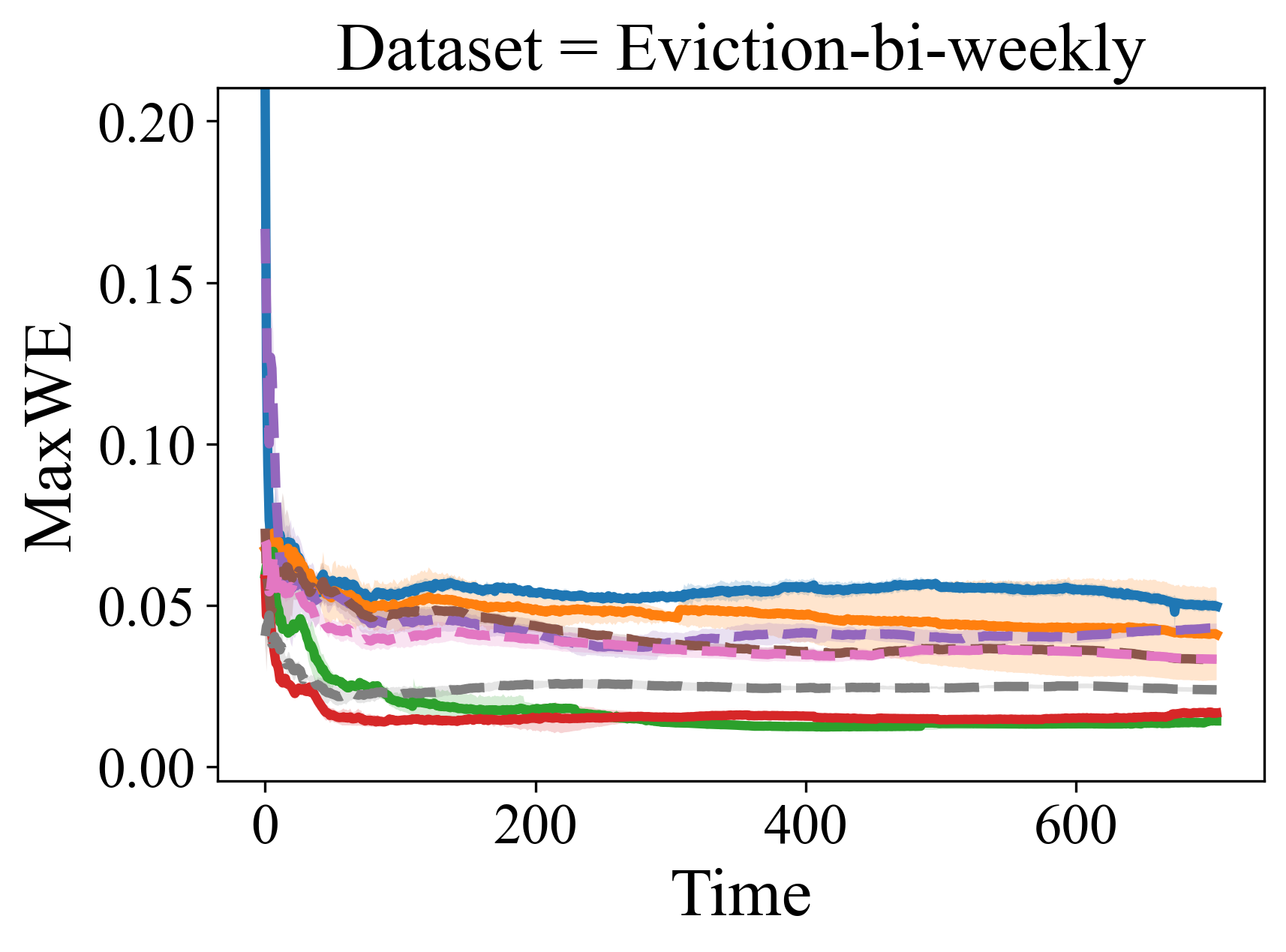}
        \caption{Maximum of workload errors}
     \end{subfigure}\hspace*{\fill}
     \begin{subfigure}[t]{0.23\linewidth}
         \centering
         \includegraphics[height=0.15\textheight]{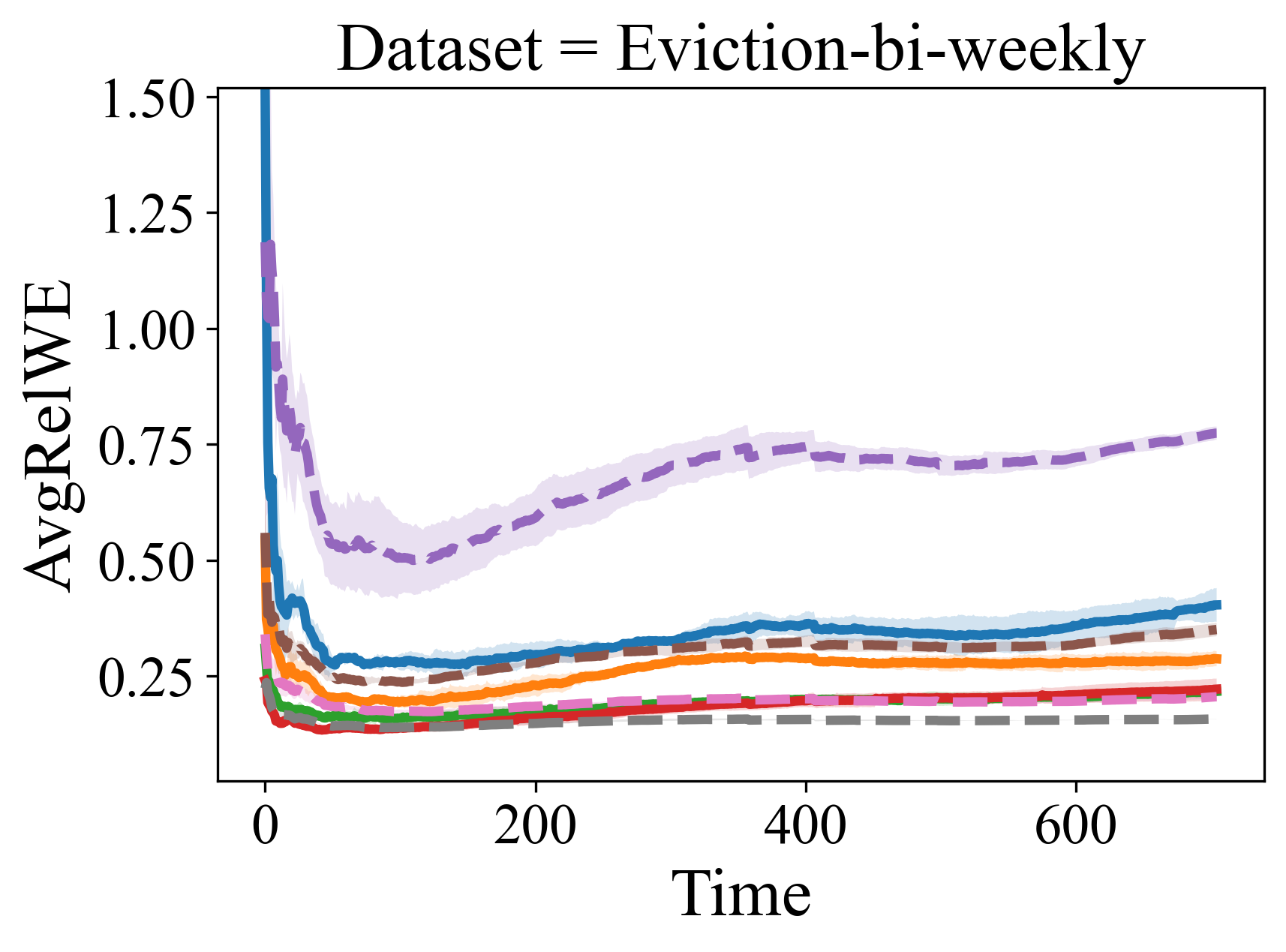}
        \caption{Average of relative workload errors}
     \end{subfigure}\hspace*{\fill}
     \begin{subfigure}[t]{0.23\linewidth}
         \centering
         \includegraphics[height=0.15\textheight]{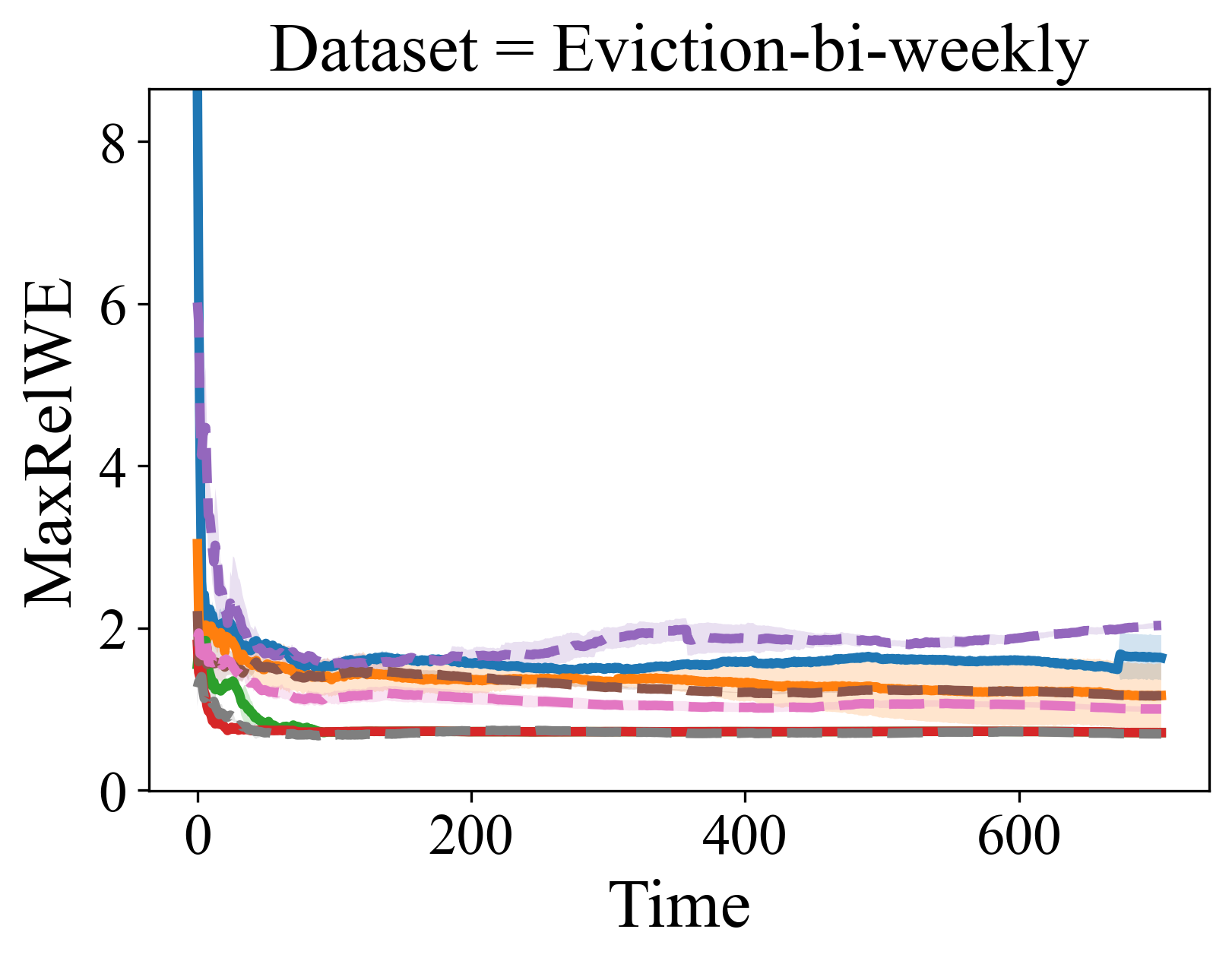}
        \caption{Maximum of relative workload errors}
     \end{subfigure}
    \hfill
     \begin{subfigure}[t]{\linewidth}
         \centering
         \includegraphics[height=0.03\textheight]{figures/legend.png}
     \end{subfigure}
    \caption{Metrics over time to compare the baseline and proposed method for the Eviction-bi-weekly dataset.}
    \label{fig:Eviction-bi-weekly}
\end{figure*}

 \begin{figure*}[ht]
     \centering
     \begin{subfigure}[t]{0.23\linewidth}
         \centering
         \includegraphics[height=0.15\textheight]{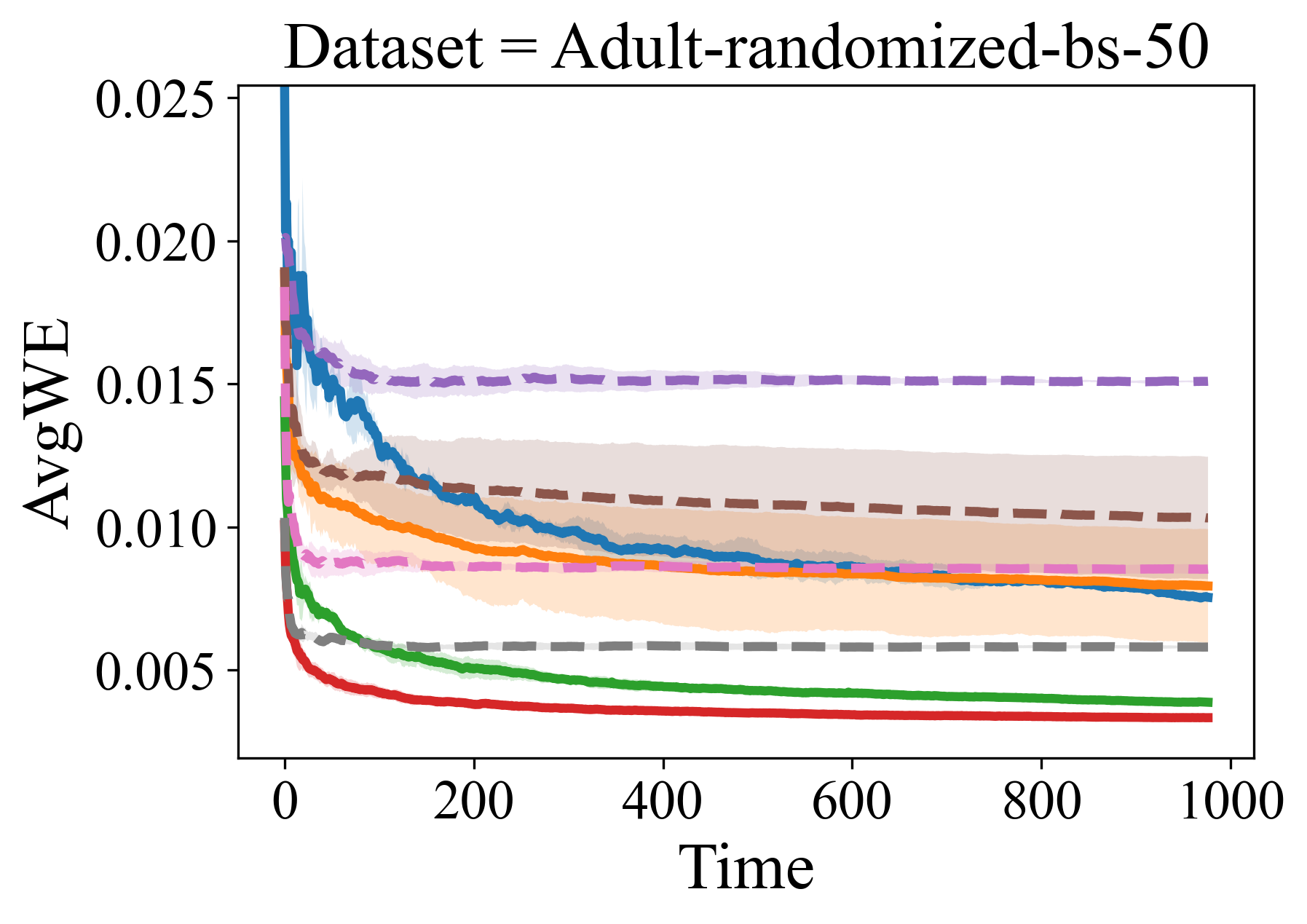}
        \caption{Average of workload errors}
     \end{subfigure}\hspace*{\fill}
     \begin{subfigure}[t]{0.23\linewidth}
         \centering
         \includegraphics[height=0.15\textheight]{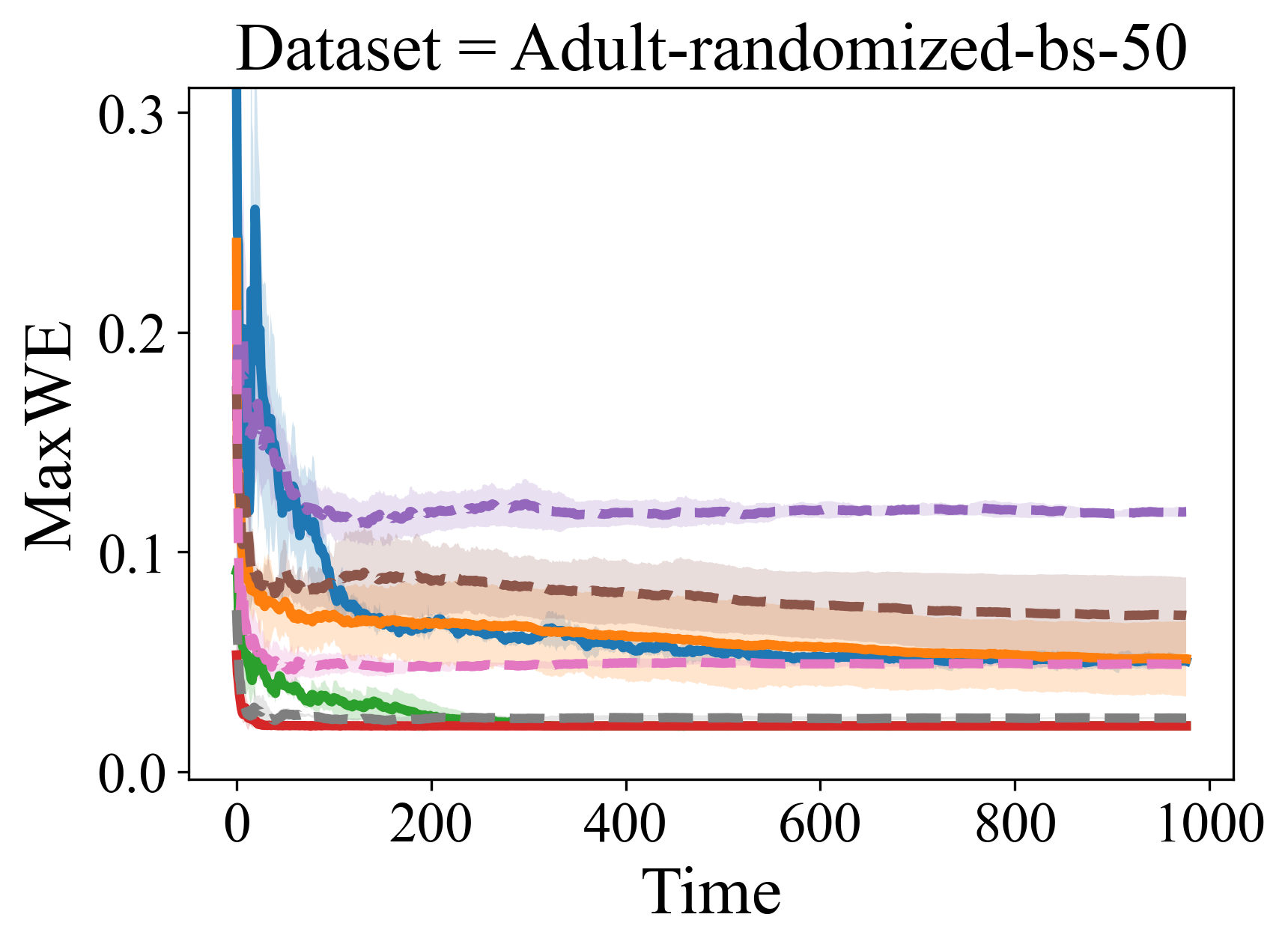}
        \caption{Maximum of workload errors}
     \end{subfigure}\hspace*{\fill}
     \begin{subfigure}[t]{0.23\linewidth}
         \centering
         \includegraphics[height=0.15\textheight]{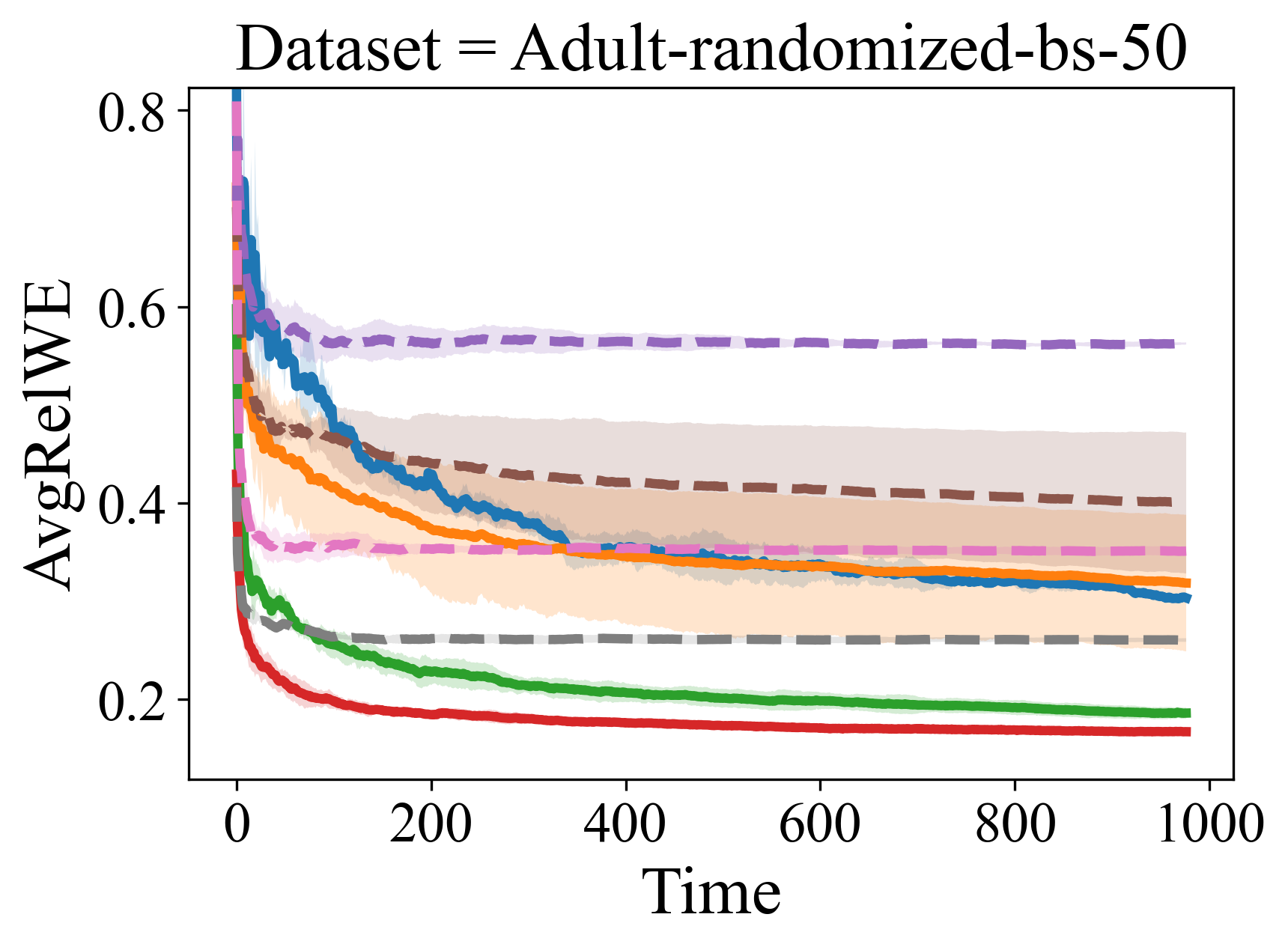}
        \caption{Average of relative workload errors}
     \end{subfigure}\hspace*{\fill}
     \begin{subfigure}[t]{0.23\linewidth}
         \centering
         \includegraphics[height=0.15\textheight]{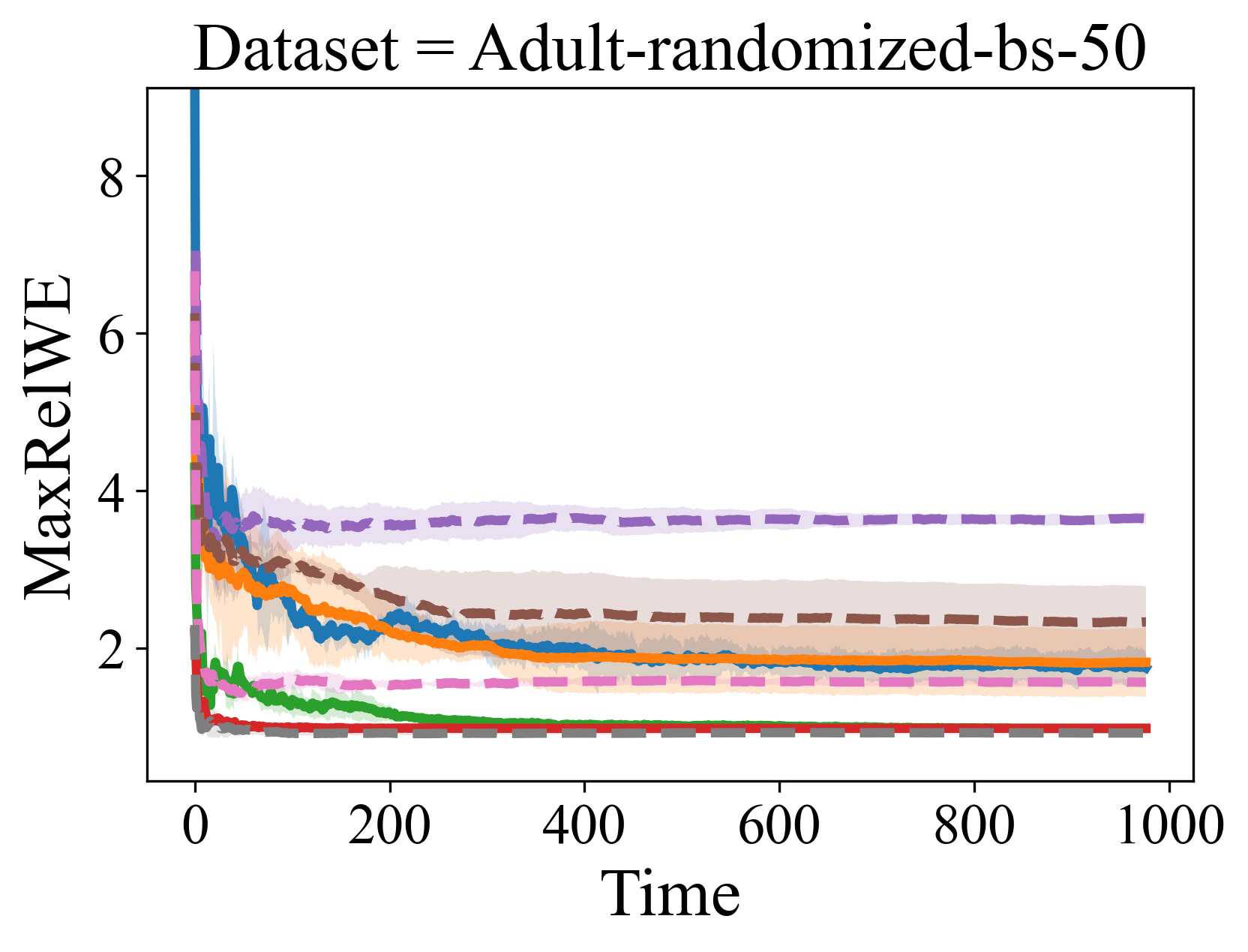}
        \caption{Maximum of relative workload errors}
     \end{subfigure}
    \hfill
     \begin{subfigure}[t]{\linewidth}
         \centering
         \includegraphics[height=0.03\textheight]{figures/legend.png}
     \end{subfigure}
    \caption{Metrics over time to compare the baseline and proposed method for the Adult-randomized-bs-50 dataset.}
    \label{fig:Adult-randomized-bs-50}
\end{figure*}

 \begin{figure*}[ht]
     \centering
     \begin{subfigure}[t]{0.23\linewidth}
         \centering
         \includegraphics[height=0.15\textheight]{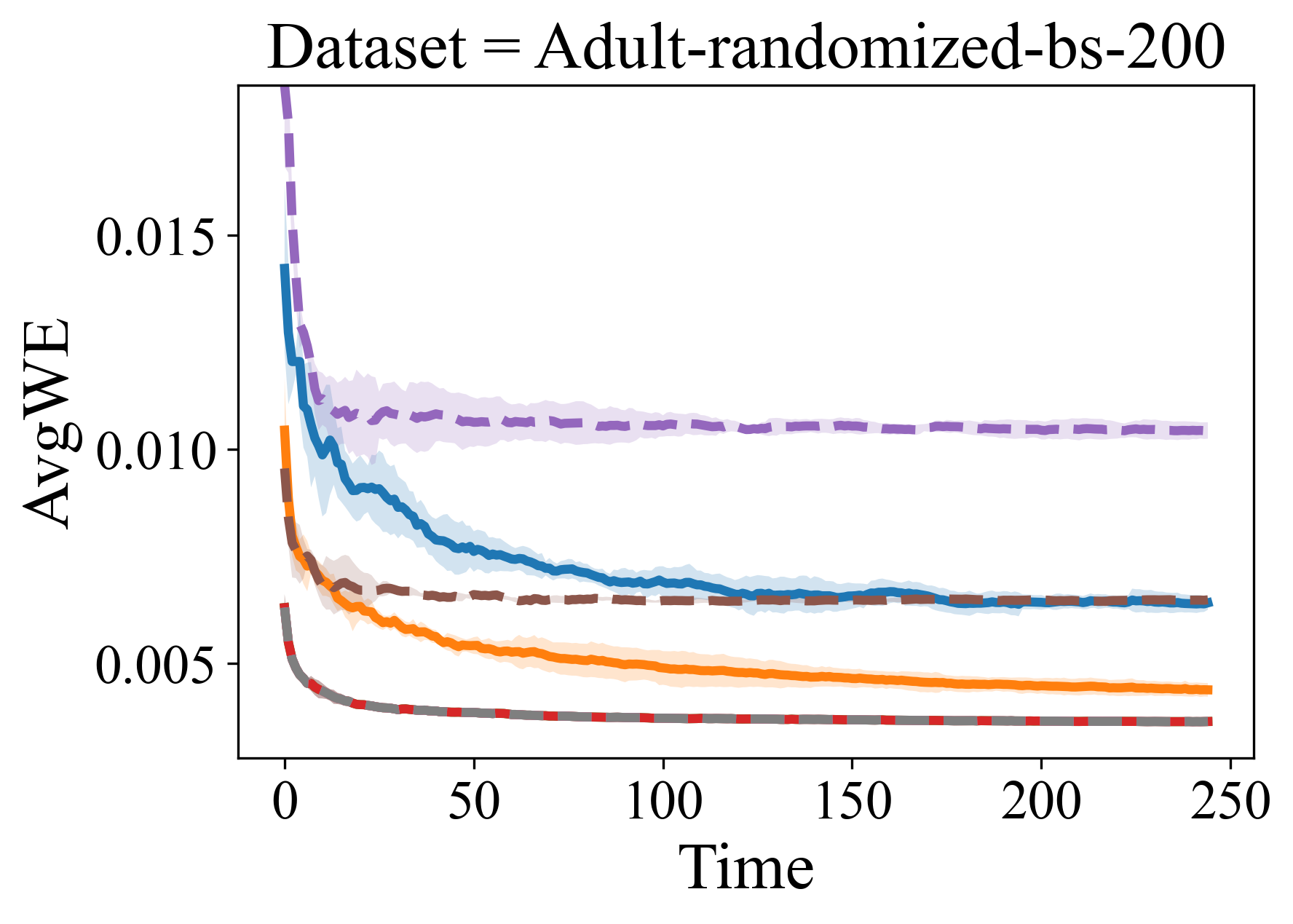}
        \caption{Average of workload errors}
     \end{subfigure}\hspace*{\fill}
     \begin{subfigure}[t]{0.23\linewidth}
         \centering
         \includegraphics[height=0.15\textheight]{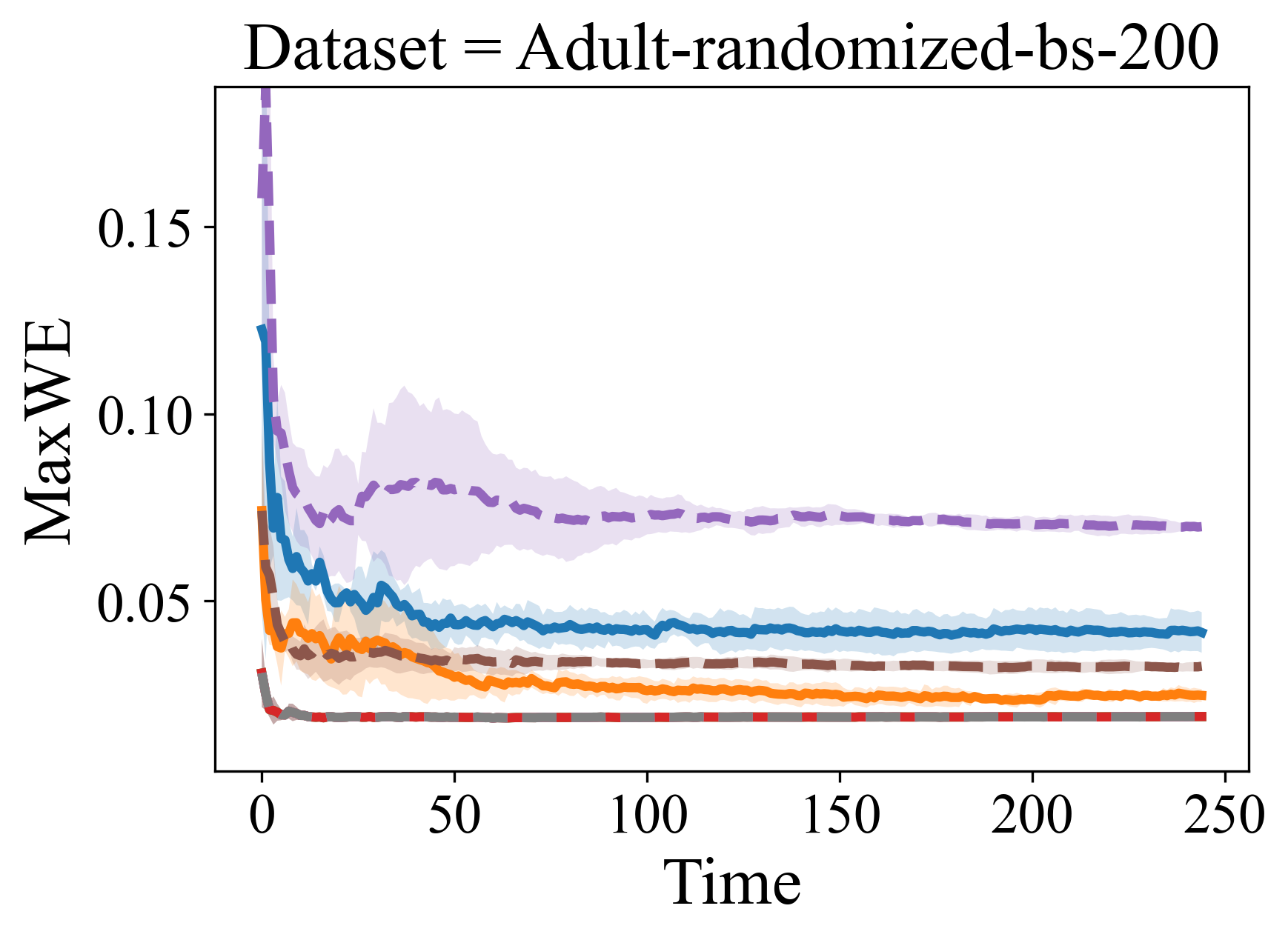}
        \caption{Maximum of workload errors}
     \end{subfigure}\hspace*{\fill}
     \begin{subfigure}[t]{0.23\linewidth}
         \centering
         \includegraphics[height=0.15\textheight]{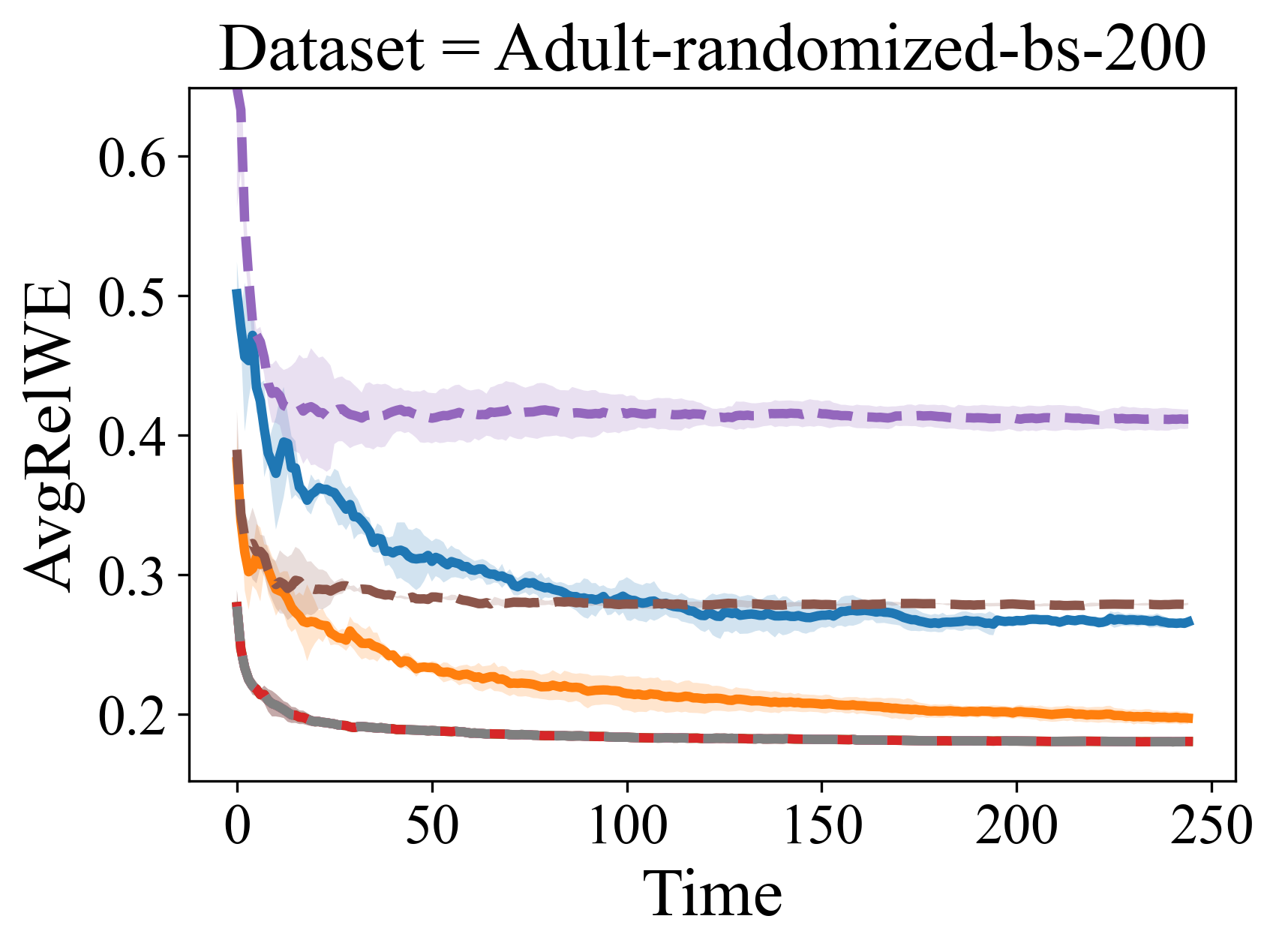}
        \caption{Average of relative workload errors}
     \end{subfigure}\hspace*{\fill}
     \begin{subfigure}[t]{0.23\linewidth}
         \centering
         \includegraphics[height=0.15\textheight]{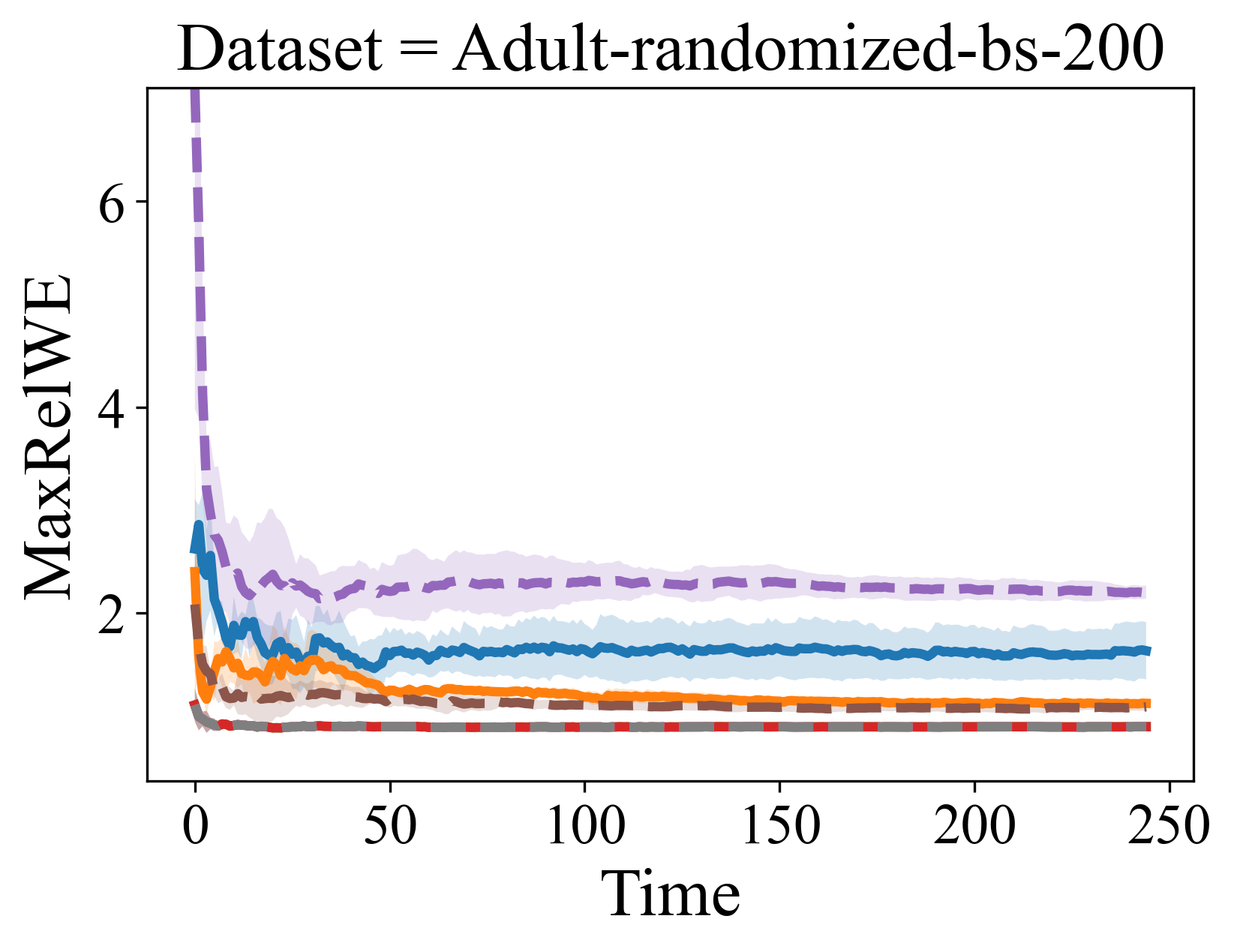}
        \caption{Maximum of relative workload errors}
     \end{subfigure}
    \hfill
     \begin{subfigure}[t]{\linewidth}
         \centering
         \includegraphics[height=0.03\textheight]{figures/legend.png}
     \end{subfigure}
    \caption{Metrics over time to compare the baseline and proposed method for the Adult-randomized-bs-200 dataset.}
    \label{fig:Adult-randomized-bs-200}
\end{figure*}

 \begin{figure*}[ht]
     \centering
     \begin{subfigure}[t]{0.23\linewidth}
         \centering
         \includegraphics[height=0.15\textheight]{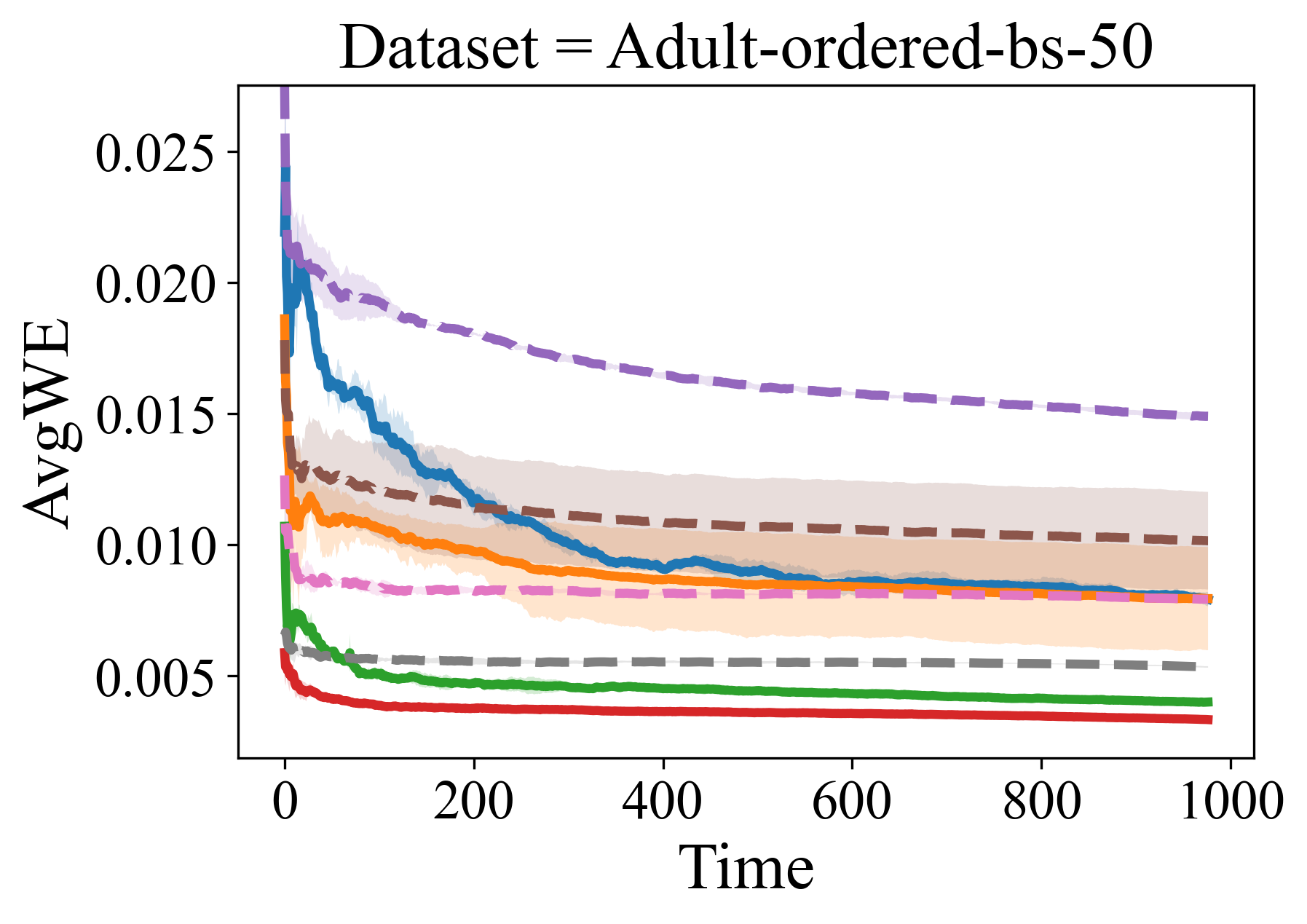}
        \caption{Average of workload errors}
     \end{subfigure}\hspace*{\fill}
     \begin{subfigure}[t]{0.23\linewidth}
         \centering
         \includegraphics[height=0.15\textheight]{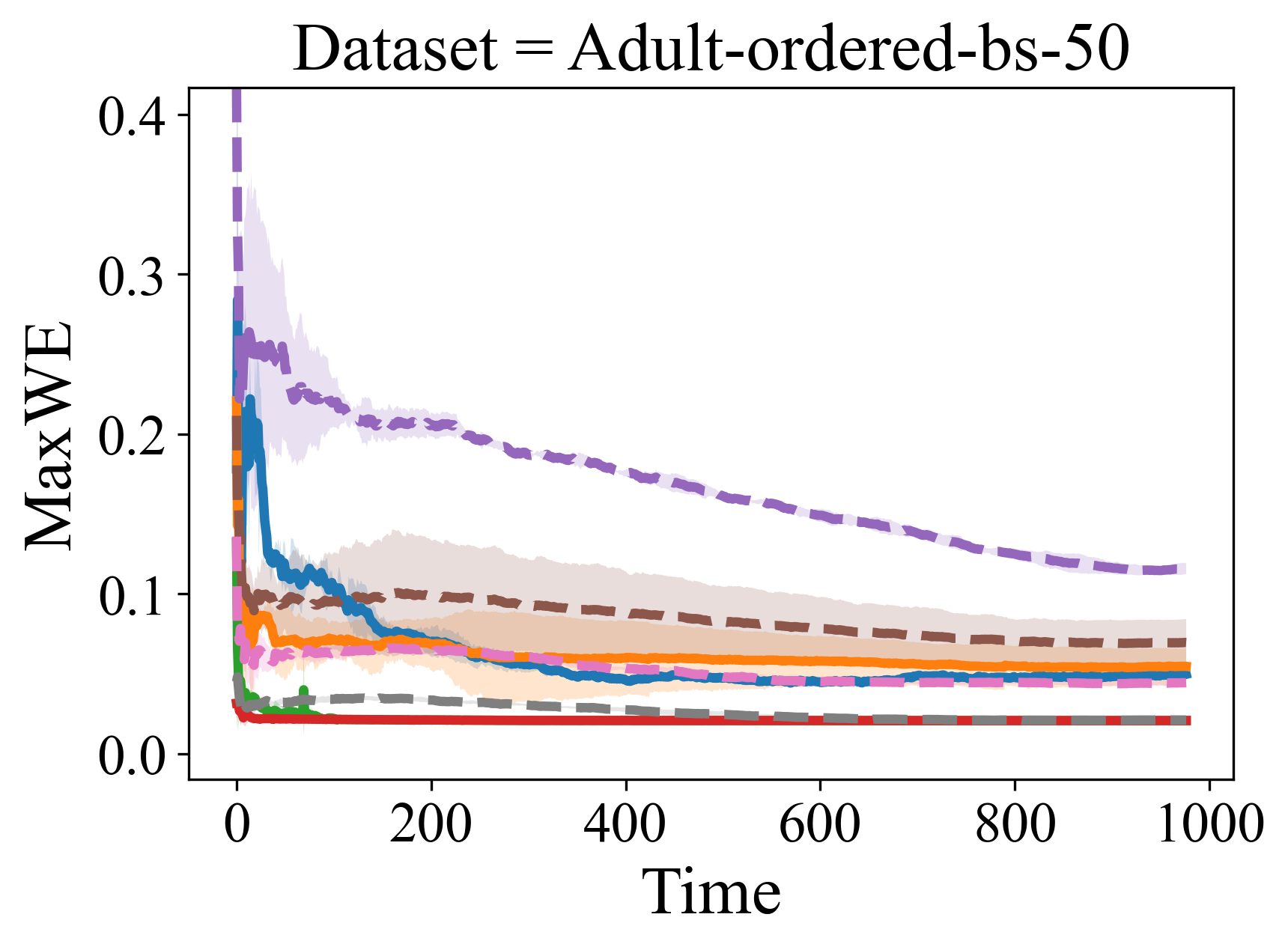}
        \caption{Maximum of workload errors}
     \end{subfigure}\hspace*{\fill}
     \begin{subfigure}[t]{0.23\linewidth}
         \centering
         \includegraphics[height=0.15\textheight]{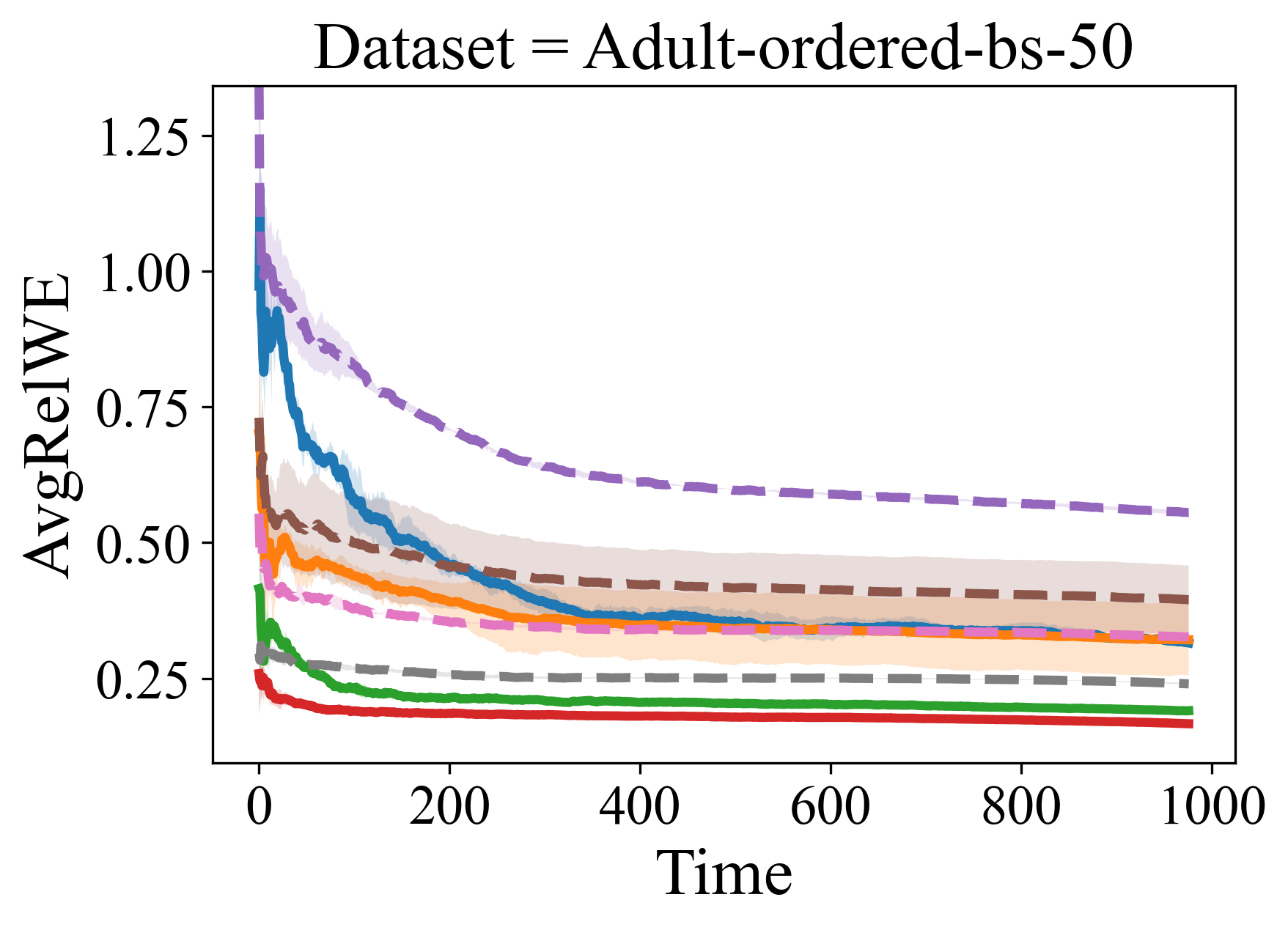}
        \caption{Average of relative workload errors}
     \end{subfigure}\hspace*{\fill}
     \begin{subfigure}[t]{0.23\linewidth}
         \centering
         \includegraphics[height=0.15\textheight]{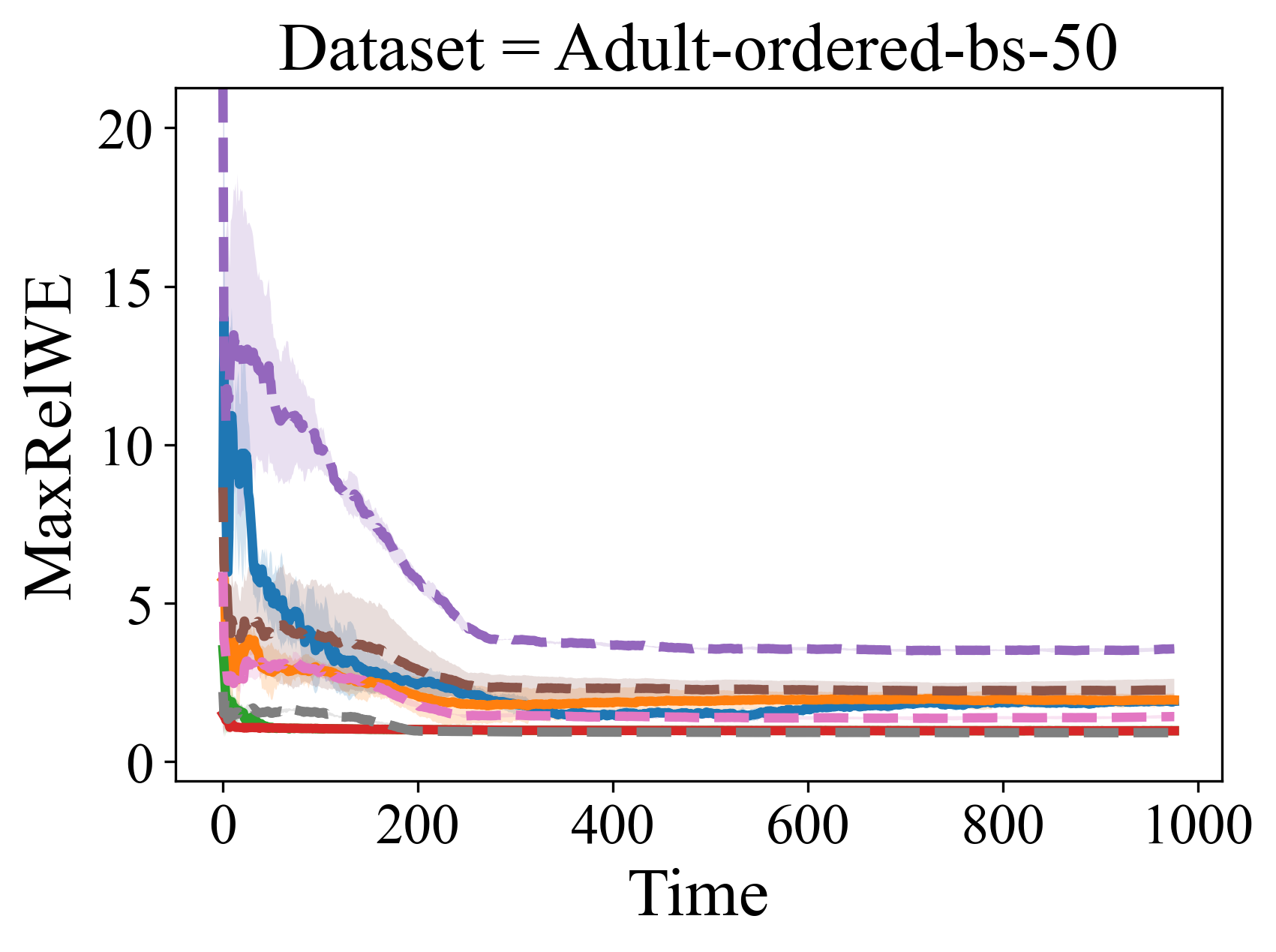}
        \caption{Maximum of relative workload errors}
     \end{subfigure}
    \hfill
     \begin{subfigure}[t]{\linewidth}
         \centering
         \includegraphics[height=0.03\textheight]{figures/legend.png}
     \end{subfigure}
    \caption{Metrics over time to compare the baseline and proposed method for the Adult-ordered-bs-50 dataset.}
    \label{fig:Adult-ordered-bs-50}
\end{figure*}

 \begin{figure*}[ht]
     \centering
     \begin{subfigure}[t]{0.23\linewidth}
         \centering
         \includegraphics[height=0.15\textheight]{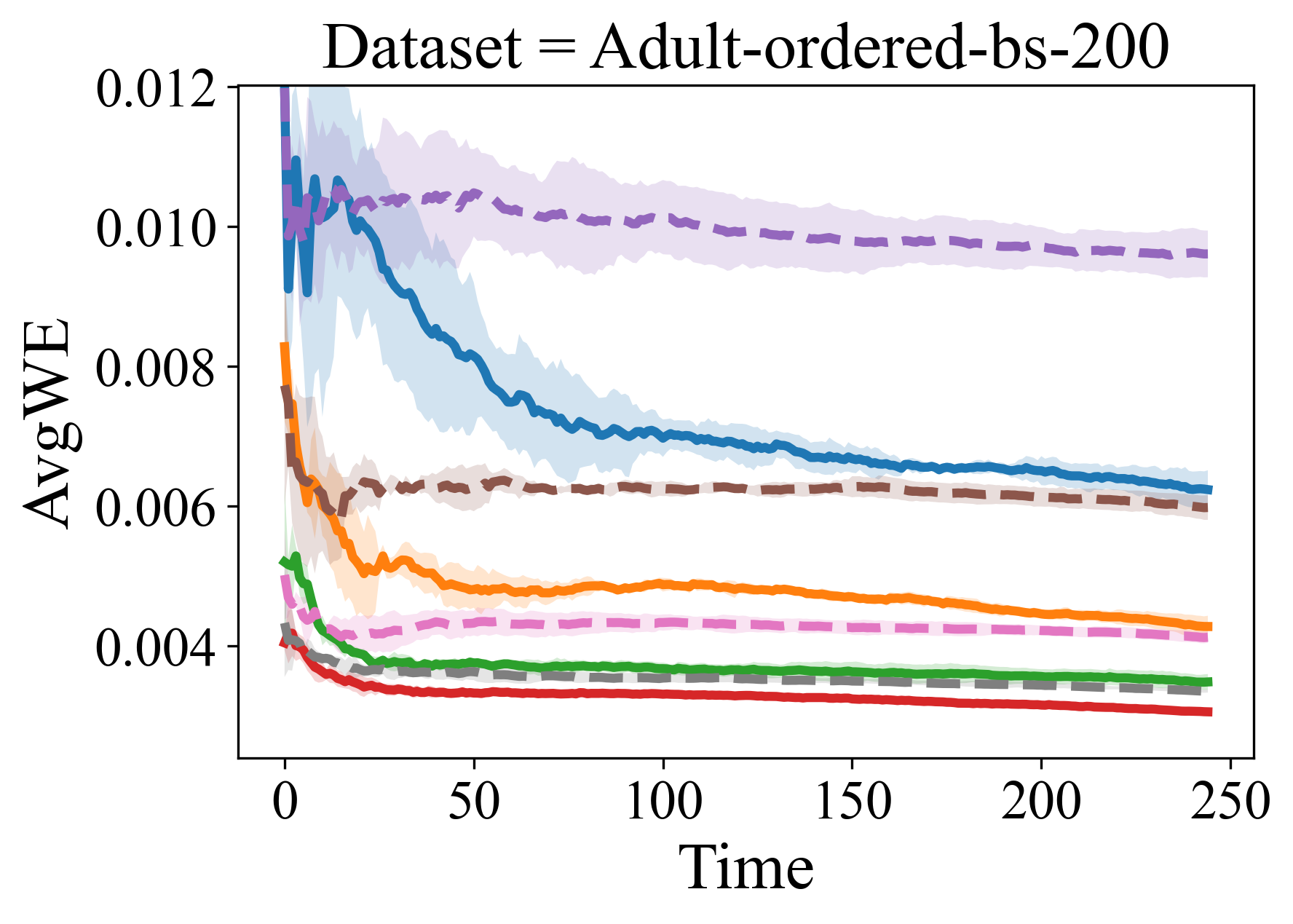}
        \caption{Average of workload errors}
     \end{subfigure}\hspace*{\fill}
     \begin{subfigure}[t]{0.23\linewidth}
         \centering
         \includegraphics[height=0.15\textheight]{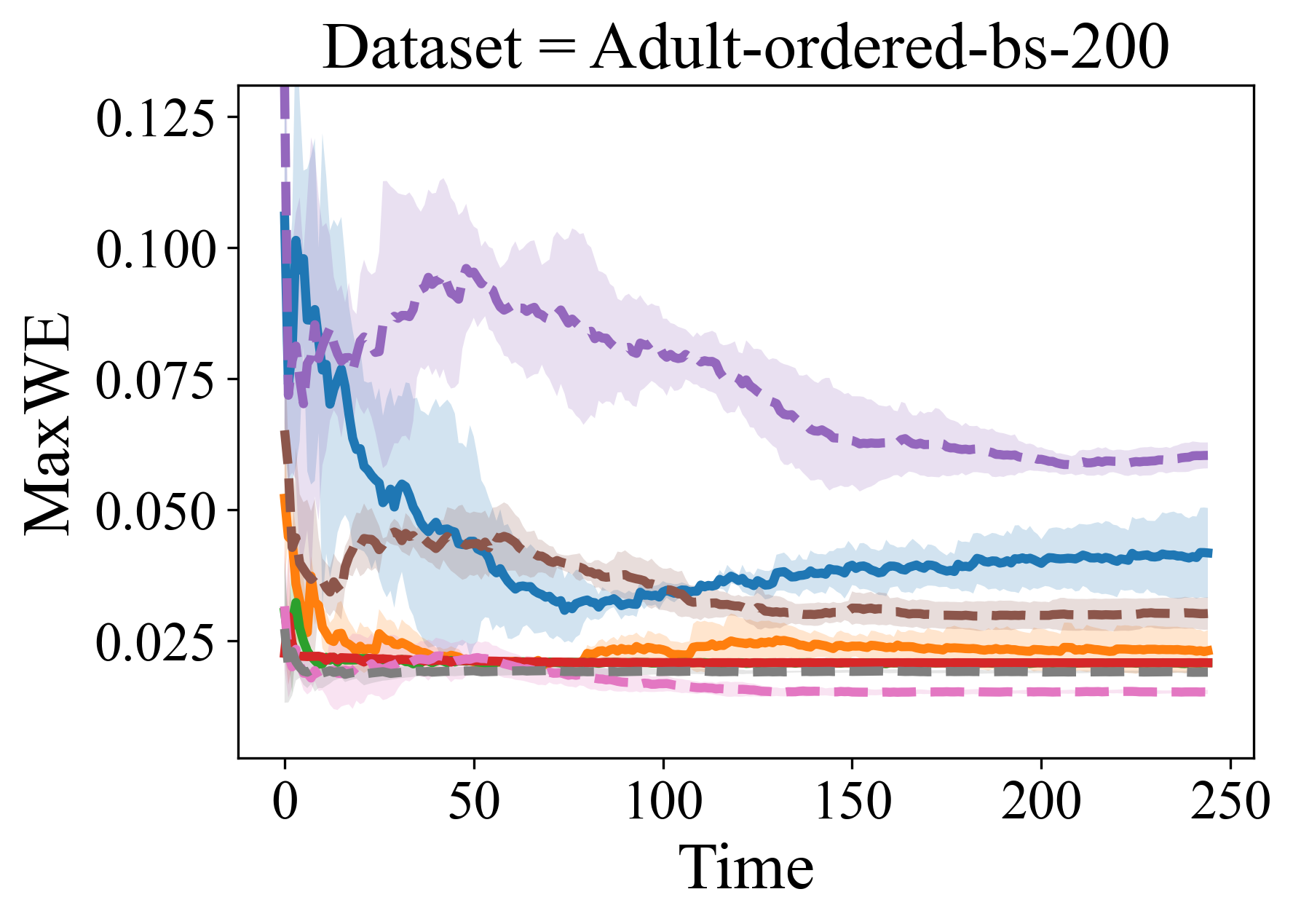}
        \caption{Maximum of workload errors}
     \end{subfigure}\hspace*{\fill}
     \begin{subfigure}[t]{0.23\linewidth}
         \centering
         \includegraphics[height=0.15\textheight]{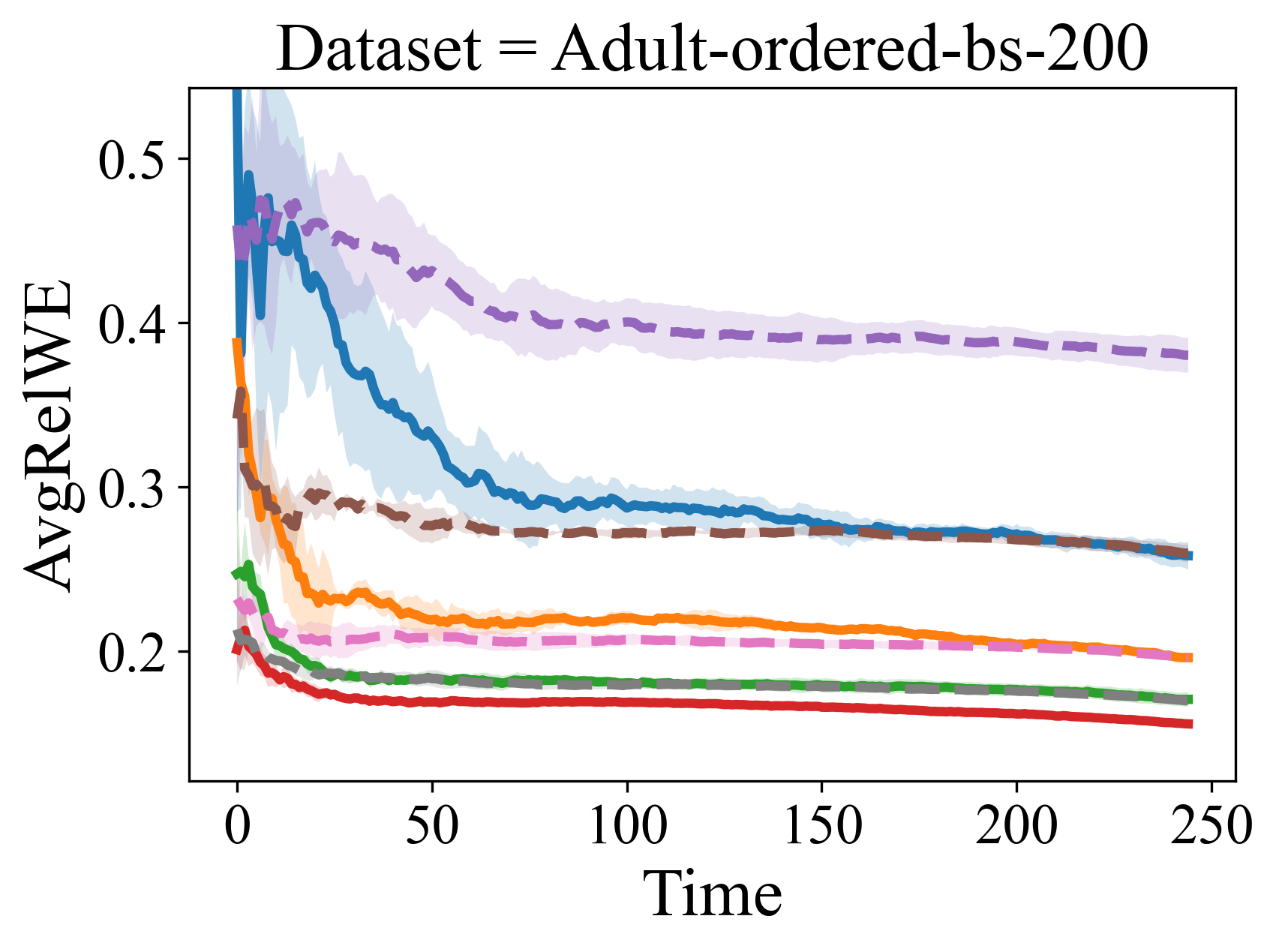}
        \caption{Average of relative workload errors}
     \end{subfigure}\hspace*{\fill}
     \begin{subfigure}[t]{0.23\linewidth}
         \centering
         \includegraphics[height=0.15\textheight]{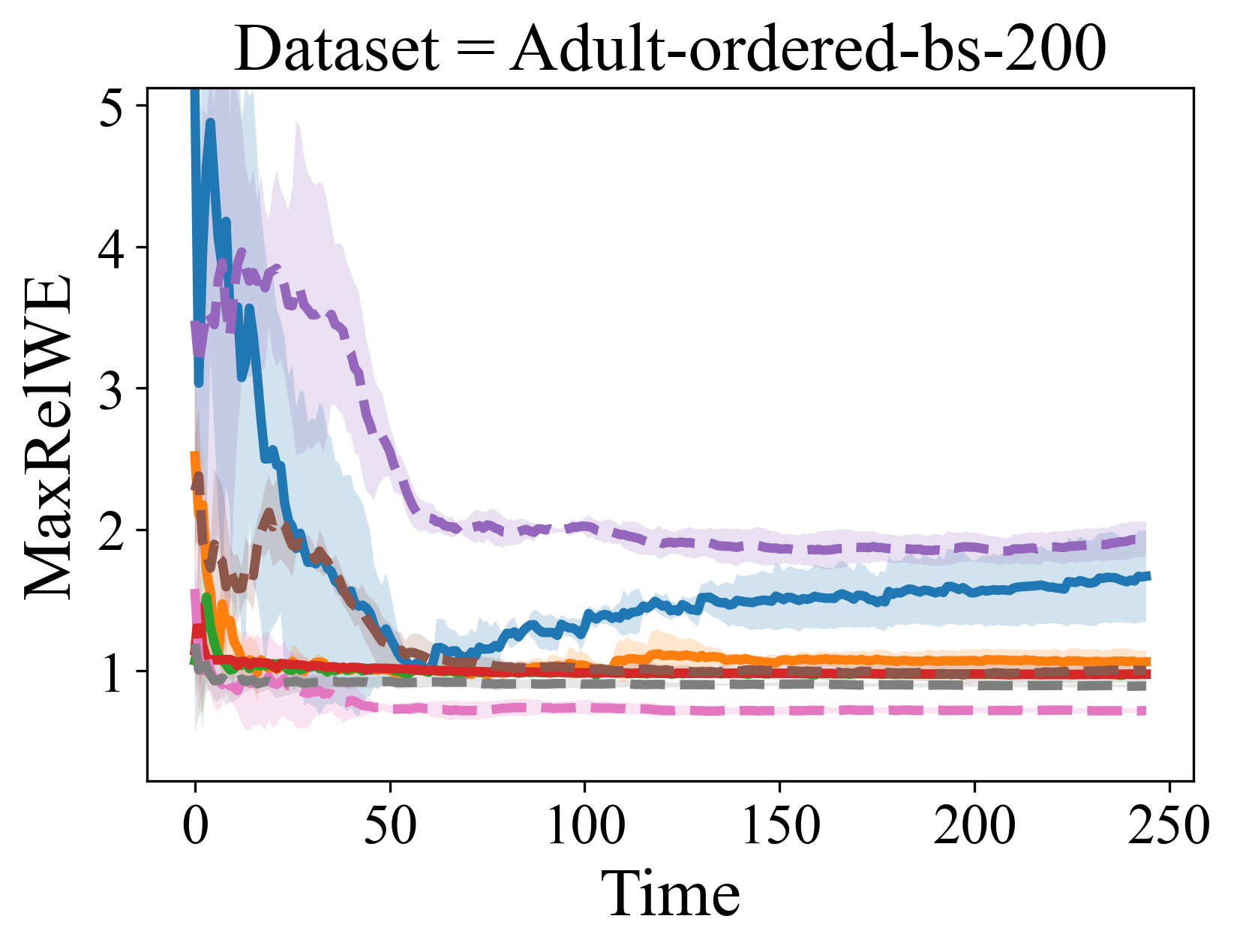}
        \caption{Maximum of relative workload errors}
     \end{subfigure}
    \hfill
     \begin{subfigure}[t]{\linewidth}
         \centering
         \includegraphics[height=0.03\textheight]{figures/legend.png}
     \end{subfigure}
    \caption{Metrics over time to compare the baseline and proposed method for the Adult-ordered-bs-200 dataset.}
    \label{fig:Adult-ordered-bs-200}
\end{figure*}

\begin{table}
    \caption{Query error metrics for the baseline and proposed methods for the {\em Eviction-weekly}, {\em Eviction-bi-weekly}, and {\em Adult-randomized-bs-50} datasets.}
    \label{tab:metrics_tab_1}
    \begin{tabular}{|p{6em}|p{1.5em}|p{4.5em}|p{3.5em}|p{4em}|}
        \hline
        \textbf{Dataset} & $\bm{\e}$ & \textbf{Metric} & \textbf{Baseline} & \textbf{Proposed} \\
        \hline
        \multirow{16}{6em}{Eviction-weekly} & \multirow{4}{1.5em}{0.5} & AvgRelWE & 1.4523 & \textbf{0.6024}
        \\
        &  & AvgWE & 0.0478 & \textbf{0.0191}
        \\
        &  & MaxRelWE & 4.0261 & \textbf{1.7817}
        \\
        &  & MaxWE & 0.218 & \textbf{0.0467}
        \\
        \cline{2-5}
        & \multirow{4}{1.5em}{1.0} & AvgRelWE & 0.6906 & \textbf{0.4409}
        \\
        &  & AvgWE & 0.0219 & \textbf{0.014}
        \\
        &  & MaxRelWE & 1.7988 & \textbf{1.3978}
        \\
        &  & MaxWE & \textbf{0.0378} & 0.0436
        \\
        \cline{2-5}
        & \multirow{4}{1.5em}{2.0} & AvgRelWE & 0.3309 & \textbf{0.2746}
        \\
        &  & AvgWE & 0.0096 & \textbf{0.0079}
        \\
        &  & MaxRelWE & 1.1009 & \textbf{0.7573}
        \\
        &  & MaxWE & 0.032 & \textbf{0.0154}
        \\
        \cline{2-5}
        & \multirow{4}{1.5em}{4.0} & AvgRelWE & \textbf{0.2066} & 0.2632
        \\
        &  & AvgWE & \textbf{0.0054} & 0.0075
        \\
        &  & MaxRelWE & 0.923 & \textbf{0.71}
        \\
        &  & MaxWE & 0.0303 & \textbf{0.0162}
        \\
        \hline
        \multirow{16}{6em}{Eviction-bi-weekly} & \multirow{4}{1.5em}{0.5} & AvgRelWE & 0.7719 & \textbf{0.4009}
        \\
        &  & AvgWE & 0.0248 & \textbf{0.0121}
        \\
        &  & MaxRelWE & 2.0291 & \textbf{1.6396}
        \\
        &  & MaxWE & \textbf{0.043} & 0.0499
        \\
        \cline{2-5}
        & \multirow{4}{1.5em}{1.0} & AvgRelWE & 0.3487 & \textbf{0.2864}
        \\
        &  & AvgWE & 0.01 & \textbf{0.0082}
        \\
        &  & MaxRelWE & \textbf{1.1644} & 1.1651
        \\
        &  & MaxWE & \textbf{0.0331} & 0.0411
        \\
        \cline{2-5}
        & \multirow{4}{1.5em}{2.0} & AvgRelWE & \textbf{0.2046} & 0.2166
        \\
        &  & AvgWE & \textbf{0.0053} & 0.006
        \\
        &  & MaxRelWE & 1.0008 & \textbf{0.71}
        \\
        &  & MaxWE & 0.0333 & \textbf{0.0141}
        \\
        \cline{2-5}
        & \multirow{4}{1.5em}{4.0} & AvgRelWE & \textbf{0.1571} & 0.2207
        \\
        &  & AvgWE & \textbf{0.0039} & 0.0064
        \\
        &  & MaxRelWE & \textbf{0.6945} & 0.71
        \\
        &  & MaxWE & 0.0238 & \textbf{0.0167}
        \\
        \hline
        \multirow{16}{6em}{Adult-randomized-bs-50} & \multirow{4}{1.5em}{0.5} & AvgRelWE & 0.5624 & \textbf{0.3035}
        \\
        &  & AvgWE & 0.0151 & \textbf{0.0075}
        \\
        &  & MaxRelWE & 3.6469 & \textbf{1.7575}
        \\
        &  & MaxWE & 0.1182 & \textbf{0.0504}
        \\
        \cline{2-5}
        & \multirow{4}{1.5em}{1.0} & AvgRelWE & 0.4005 & \textbf{0.3191}
        \\
        &  & AvgWE & 0.0103 & \textbf{0.0079}
        \\
        &  & MaxRelWE & 2.3292 & \textbf{1.8162}
        \\
        &  & MaxWE & 0.0712 & \textbf{0.0514}
        \\
        \cline{2-5}
        & \multirow{4}{1.5em}{2.0} & AvgRelWE & 0.351 & \textbf{0.1863}
        \\
        &  & AvgWE & 0.0085 & \textbf{0.0039}
        \\
        &  & MaxRelWE & 1.5646 & \textbf{0.977}
        \\
        &  & MaxWE & 0.0488 & \textbf{0.0208}
        \\
        \cline{2-5}
        & \multirow{4}{1.5em}{4.0} & AvgRelWE & 0.2606 & \textbf{0.1673}
        \\
        &  & AvgWE & 0.0058 & \textbf{0.0033}
        \\
        &  & MaxRelWE & \textbf{0.9187} & 0.9778
        \\
        &  & MaxWE & 0.0243 & \textbf{0.0208}
        \\
        \hline
    \end{tabular}
\end{table}

\begin{table}
    \caption{Query error metrics for the baseline and proposed methods for the {\em Adult-randomized-bs-200}, {\em Adult-ordered-bs-50}, and {\em Adult-ordered-bs-200} datasets.}
    \label{tab:metrics_tab_2}
    \begin{tabular}{|p{6em}|p{1.5em}|p{4.5em}|p{3.5em}|p{4em}|}
        \hline
        \textbf{Dataset} & $\bm{\e}$ & \textbf{Metric} & \textbf{Baseline} & \textbf{Proposed} \\
        \hline
        \multirow{16}{6em}{Adult-randomized-bs-200} & \multirow{4}{1.5em}{0.5} & AvgRelWE & 0.4113 & \textbf{0.2658}
        \\
        &  & AvgWE & 0.0104 & \textbf{0.0064}
        \\
        &  & MaxRelWE & 2.1987 & \textbf{1.6246}
        \\
        &  & MaxWE & 0.0698 & \textbf{0.0419}
        \\
        \cline{2-5}
        & \multirow{4}{1.5em}{1.0} & AvgRelWE & 0.2786 & \textbf{0.1975}
        \\
        &  & AvgWE & 0.0065 & \textbf{0.0044}
        \\
        &  & MaxRelWE & \textbf{1.0789} & 1.1166
        \\
        &  & MaxWE & 0.0323 & \textbf{0.0249}
        \\
        \cline{2-5}
        & \multirow{4}{1.5em}{2.0} & AvgRelWE & 0.1802 & \textbf{0.1802}
        \\
        &  & AvgWE & 0.0036 & \textbf{0.0036}
        \\
        &  & MaxRelWE & 0.8907 & \textbf{0.8907}
        \\
        &  & MaxWE & 0.0191 & \textbf{0.0191}
        \\
        \cline{2-5}
        & \multirow{4}{1.5em}{4.0} & AvgRelWE & 0.1802 & \textbf{0.1802}
        \\
        &  & AvgWE & 0.0036 & \textbf{0.0036}
        \\
        &  & MaxRelWE & 0.8907 & \textbf{0.8907}
        \\
        &  & MaxWE & 0.0191 & \textbf{0.0191}
        \\
        \hline
        \multirow{16}{6em}{Adult-ordered-bs-50} & \multirow{4}{1.5em}{0.5} & AvgRelWE & 0.5559 & \textbf{0.3165}
        \\
        &  & AvgWE & 0.0149 & \textbf{0.008}
        \\
        &  & MaxRelWE & 3.5589 & \textbf{1.9028}
        \\
        &  & MaxWE & 0.1157 & \textbf{0.0497}
        \\
        \cline{2-5}
        & \multirow{4}{1.5em}{1.0} & AvgRelWE & 0.3953 & \textbf{0.3214}
        \\
        &  & AvgWE & 0.0102 & \textbf{0.008}
        \\
        &  & MaxRelWE & 2.2663 & \textbf{1.9528}
        \\
        &  & MaxWE & 0.0695 & \textbf{0.0547}
        \\
        \cline{2-5}
        & \multirow{4}{1.5em}{2.0} & AvgRelWE & 0.3264 & \textbf{0.1904}
        \\
        &  & AvgWE & 0.0079 & \textbf{0.004}
        \\
        &  & MaxRelWE & 1.4234 & \textbf{0.9798}
        \\
        &  & MaxWE & 0.0444 & \textbf{0.0208}
        \\
        \cline{2-5}
        & \multirow{4}{1.5em}{4.0} & AvgRelWE & 0.2402 & \textbf{0.1667}
        \\
        &  & AvgWE & 0.0053 & \textbf{0.0033}
        \\
        &  & MaxRelWE & \textbf{0.9192} & 0.978
        \\
        &  & MaxWE & 0.0211 & \textbf{0.0208}
        \\
        \hline
        \multirow{16}{6em}{Adult-ordered-bs-200} & \multirow{4}{1.5em}{0.5} & AvgRelWE & 0.381 & \textbf{0.2593}
        \\
        &  & AvgWE & 0.0096 & \textbf{0.0063}
        \\
        &  & MaxRelWE & 1.9203 & \textbf{1.6522}
        \\
        &  & MaxWE & 0.0601 & \textbf{0.0413}
        \\
        \cline{2-5}
        & \multirow{4}{1.5em}{1.0} & AvgRelWE & 0.2608 & \textbf{0.1972}
        \\
        &  & AvgWE & 0.006 & \textbf{0.0043}
        \\
        &  & MaxRelWE & \textbf{1.006} & 1.067
        \\
        &  & MaxWE & 0.0302 & \textbf{0.0232}
        \\
        \cline{2-5}
        & \multirow{4}{1.5em}{2.0} & AvgRelWE & 0.1969 & \textbf{0.1711}
        \\
        &  & AvgWE & 0.0041 & \textbf{0.0035}
        \\
        &  & MaxRelWE & \textbf{0.7189} & 0.9752
        \\
        &  & MaxWE & \textbf{0.0152} & 0.0208
        \\
        \cline{2-5}
        & \multirow{4}{1.5em}{4.0} & AvgRelWE & 0.1705 & \textbf{0.1563}
        \\
        &  & AvgWE & 0.0034 & \textbf{0.0031}
        \\
        &  & MaxRelWE & \textbf{0.8939} & 0.9776
        \\
        &  & MaxWE & \textbf{0.019} & 0.0208
        \\
        \hline
    \end{tabular}
\end{table}


We explore the performance of our algorithm on real-world datasets. We use the Probabilistic Graphical Model (PGM) \cite{tabular_pgm_mckenna19a} as the subroutine $\AA_{Dataset}$ for both the baseline and proposed algorithm. We briefly introduced PGM in Section~\ref{s:offline_pgm}. Similar to \cite{kumar_algorithm_2024}, we found that Simple counter works the best for our use case as we do not have a very large time horizon for the stream. So, for the experiments in this section, we use the Simple counter as the subroutine $\AA_{Counter}$. In the subsequent subsections, we discuss the details of over experiments.

\subsection{Datasets}
\subsubsection{Eviction}
The Eviction Dataset \cite{sfgovEvictionNotices} contains eviction notices filed with the San Francisco Rent Board from January 1, 1997. The dataset has an attribute ``File Date'' which represents the date on which the eviction notice was filed with the Rent Board of Arbitration. We use the value of this attribute to construct our time index for the stream. We fix a synthetic data release frequency as weekly or bi-weekly, and based on the attribute File Date and this frequency, create the time index for our data. In the results, we refer to the corresponding streams as \textit{Eviction-weekly} and \textit{Eviction-bi-weekly} respectively. We limit the dataset to $3$ location-based categorical attributes - ``Eviction Notice Source Zipcode'', ``Supervisor District'', and ``Neighborhoods'', and all binary attributes such as - ``Non Payment'', ``Breach'', and ``Illegal Use''. The data space of the resulting dataset was $22$ dimensional with $19$ binary and $3$ categorical attributes.

\subsubsection{Adult}
The {\em Adult} dataset \cite{adultDataset} has been used extensively in previous research in this area and so we also use this dataset. We use a processed version of the dataset released in the source code provided by \cite{Liu2021IterativeMF}. Note that there is no notion of time in this dataset. We artificially create time in two ways which results in the following two streams: (1) \textit{Adult-randomized}: we fix a constant batch size say $B$, that is the number of points that are added at each time, then at any time $t$, we simply add $B$ points to our stream that are sampled uniformly at random from the Adult dataset without replacement; (2) \textit{Adult-ordered}: the stream is also created by adding a fixed batch of $B$ points, except the points are selected deterministically, where we first sort the entire dataset in increasing order and then add the next $B$ points based on the indices at any time. We explore a small and large value of batch size $B$ as $50$ and $200$ respectively. Note that the data stream Adult-ordered is interesting in the sense that the query values may change very drastically over time.

\subsection{Workload of queries}
In the experiments, we aim to preserve all $2$-way marginals on the space $\XX$. However, instead of using the set of all $2$-way marginals queries as $Q$, we use the set of all $2$-way {\em workloads}. 
\begin{definition}[Workload]\label{def:workload}
    A $k$-way workload $W$ is defined by a tuple of $k$ column indices, $\bp{c_1, c_2, \ldots, c_k}$ such that $c_i \in [p]$ for all $i \in [k]$ and $c_1 <c_2<\ldots<c_k$. Let $columns(W) = \bp{c_1, c_2, \ldots, c_k}$. Then, $W$ is an ordered collection of all $k$-way marginal queries on $columns(W)$, where the order is taken as the lexicographic order of the values corresponding to queries. Furthermore, we denote the number of marginal queries in $W$ with $\abs{W}$. The value of a workload over a dataset $f:\XX\to\N_0$ is defined as the tuple $W(f)=\bp{q(f)}_{q\in W}$.
\end{definition}

Using workloads instead of marginal queries is a common practice in the literature and is sometimes referred to as the {\em marginal trick} \cite{Liu2021IterativeMF}. To see the advantage, note that the collections of queries in a workload create a disjoint space in the sense that a user can contribute data in at most one of them. Thus we can use the Parallel Composition of differential privacy and do not have to divide the privacy budget among the queries in a workload when estimating them. In other words, estimating any workload is a histogram query with sensitivity $1$ for any pair of neighboring datasets. This is true even though there are (likely) multiple queries in the workload.

Note that since we are now using workloads instead of marginal queries, the following changes are needed in Algorithm~\ref{alg:main} to ensure compatibility,
\begin{enumerate}
    \item $Q$ is an ordered set of workloads such that for any $i\in\abs{Q}$, $\abs{q_i}$ denotes the number of marginal queries in $q_i$;
    \item each counter instance $C_i$, corresponding to the workload $q_i$, is a multi-dimensional counter of dimension $\abs{q_i}$, as defined in \cite{kumar_algorithm_2024};
    \item at an iteration $l$ of time $t$, we define $e_{t, l}$ as
    \begin{equation*}
        \begin{split}
            e_{t, l} = \bp{ 
            \begin{aligned}
                & \frac{1}{\abs{q_i}} \norm{ q_i(\nabla f_t + g_{t-1}) - q_i(h_{t, l-1}) }_{\ell_1} 
                \\ & \quad - \abs{\XX_{columns(q_i)}}
            \end{aligned}
            }_{i \in J_{t, l}},
        \end{split}
    \end{equation*}
    where $\abs{\XX_{columns(q_i)}}$ is a bias correction term accounting for the number of queries in a workload;
    \item for $2$-way workloads, the resulting sensitivity of the exponential mechanism is $\frac{1}{4}$.
\end{enumerate}

\subsection{Evaluation metrics}
We measure the performance of our algorithm using the error introduced by the generated synthetic data in answering queries. Since we use workloads in our algorithm, we measure the error in queries grouped by the workloads. This results in two levels of aggregation, one for queries within a workload and the second over different workloads.

Let us assume that we are looking for error at time $t$ for the synthetic stream $g$ given the input stream $f$ and the set of workloads $Q$. Then, we define the workload errors as follows:

\begin{enumerate}
    \item \textit{Workload error (WE)}: For any workload $W$, we define workload error at time $t$ as the average error in queries within $W$, that is
    $$
        WE(W, f, g, t) \coloneqq \frac{1}{\abs{W}} \sum_{q\in W} \abs{q(f_t) - q(g_t)}.
    $$
    \item \textit{Relative workload error (RelWE)}: Similar to workload error, for any workload $W$, we define relative workload error at time $t$ as the average relative error in queries within $W$, that is
    $$
        RelWE(W, f, g, t) \coloneqq \frac{1}{\abs{W}} \sum_{q\in W} \abs{ \frac{q(f_t) - q(g_t)}{q(f_t)} }.
    $$
\end{enumerate}

Our final metric is the aggregated (average and maximum) error over all workloads. Given a set $Q$ containing workloads, an input stream $f$, and a synthetic stream $g$, at any time $t$ we define:

\begin{enumerate}
    \item \textit{Average over workload errors (AvgWE)}: as the average workload error over workloads in $Q$, that is
    $$
        AvgWE(Q, f, g, t) \coloneqq \frac{1}{\abs{Q}} \sum_{W\in Q} WE(W, f, g, t).
    $$
    \item \textit{Maximum over workload errors (MaxWE)}: as the maximum workload error over workloads in $Q$, that is
    $$
        MaxWE(Q, f, g, t) \coloneqq \max_{W\in Q} \bp{ WE(W, f, g, t) }.
    $$
    \item \textit{Average over relative workload errors (AvgRelWE)}: as the average relative workload error over workloads in $Q$, that is
    $$
        AvgRelWE(Q, f, g, t) \coloneqq \frac{1}{\abs{Q}} \sum_{W\in Q} RelWE(W, f, g, t).
    $$
    \item \textit{Maximum over relative workload errors (MaxRelWE)}: as the maximum relative workload error over workloads in $Q$, that is
    $$
        MaxRelWE(Q, f, g, t) \coloneqq \max_{W\in Q} \bp{ RelWE(W, f, g, t) }.
    $$
\end{enumerate}

\subsection{Results}\label{s:results}

Figures~\ref{fig:Eviction-weekly}, \ref{fig:Eviction-bi-weekly}, \ref{fig:Adult-randomized-bs-50}, \ref{fig:Adult-randomized-bs-200}, \ref{fig:Adult-ordered-bs-50}, and~\ref{fig:Adult-ordered-bs-200} provide our results for the Eviction-weekly, Eviction-bi-weekly, Adult-randomized-bs-50, Adult-randomized-bs-200, Adult-ordered-bs-50, and Adult-ordered-bs-200 datasets respectively. The horizontal axis in all of these figures represents time and the vertical axis is the metric mentioned in the y-axis label of the corresponding subfigure. At the beginning of time, we see a large variance in the metrics, and the proposed method has a larger error in some experiments. However, as the time index increases, in most cases, our method outperforms the baseline across various datasets, metrics, and privacy budgets. Moreover, we observe that among the various metrics, the most variation occurs in metrics that measure the worst-case errors: MaxWE and MaxRelWE. We also provide these metrics in a tabular view in Tables~\ref{tab:metrics_tab_1} and \ref{tab:metrics_tab_2} to facilitate the comparison. These tables contain the value of each metric averaged over the last $10$ time steps for various datasets and privacy budgets $\e$.


\section{A new (unbounded) Block counter} \label{s:block_counter}

\begin{table}
    \caption{Query error metrics comparing the simple and block counters in the proposed method for the {\em Eviction-weekly}, {\em Adult-randomized-bs-50}, and {\em Adult-ordered-bs-50} datasets.}
    \label{tab:metrics_block_v_simple}
    \begin{tabular}{|p{6em}|p{1.5em}|p{4.5em}|p{3.5em}|p{4em}|}
        \hline
        \textbf{Dataset} & $\bm{\e}$ & \textbf{Metric} & \textbf{Simple} & \textbf{Block} \\
        \hline
        \multirow{16}{6em}{Eviction-weekly} & \multirow{4}{1.5em}{0.5} & AvgRelWE & 0.6024 & \textbf{0.2545}
        \\
         &  & AvgWE & 0.0191 & \textbf{0.0092}
        \\
         &  & MaxRelWE & 1.7817 & \textbf{1.4485}
        \\
         &  & MaxWE & \textbf{0.0467} & 0.0507
        \\
        \cline{2-5}
         & \multirow{4}{1.5em}{1.0} & AvgRelWE & 0.4409 & \textbf{0.2553}
        \\
         &  & AvgWE & 0.014 & \textbf{0.0136}
        \\
         &  & MaxRelWE & 1.3978 & \textbf{1.2361}
        \\
         &  & MaxWE & \textbf{0.0436} & 0.057
        \\
        \cline{2-5}
         & \multirow{4}{1.5em}{2.0} & AvgRelWE & 0.2746 & \textbf{0.2376}
        \\
         &  & AvgWE & \textbf{0.0079} & 0.0126
        \\
         &  & MaxRelWE & \textbf{0.7573} & 1.121
        \\
         &  & MaxWE & \textbf{0.0154} & 0.044
        \\
        \cline{2-5}
         & \multirow{4}{1.5em}{4.0} & AvgRelWE & 0.2632 & \textbf{0.2284}
        \\
         &  & AvgWE & \textbf{0.0075} & 0.012
        \\
         &  & MaxRelWE & 0.71 & \textbf{0.71}
        \\
         &  & MaxWE & \textbf{0.0162} & 0.0411
        \\
        \hline
        \multirow{16}{6em}{Adult-randomized-bs-50} & \multirow{4}{1.5em}{0.5} & AvgRelWE & \textbf{0.3035} & 0.3718
        \\
         &  & AvgWE & \textbf{0.0075} & 0.0092
        \\
         &  & MaxRelWE & \textbf{1.7575} & 1.9395
        \\
         &  & MaxWE & \textbf{0.0504} & 0.0525
        \\
        \cline{2-5}
         & \multirow{4}{1.5em}{1.0} & AvgRelWE & 0.3191 & \textbf{0.2534}
        \\
         &  & AvgWE & 0.0079 & \textbf{0.0058}
        \\
         &  & MaxRelWE & 1.8162 & \textbf{1.5294}
        \\
         &  & MaxWE & 0.0514 & \textbf{0.0383}
        \\
        \cline{2-5}
         & \multirow{4}{1.5em}{2.0} & AvgRelWE & 0.1863 & \textbf{0.1862}
        \\
         &  & AvgWE & 0.0039 & \textbf{0.0039}
        \\
         &  & MaxRelWE & \textbf{0.977} & 0.9804
        \\
         &  & MaxWE & 0.0208 & \textbf{0.0208}
        \\
        \cline{2-5}
         & \multirow{4}{1.5em}{4.0} & AvgRelWE & 0.1673 & \textbf{0.1629}
        \\
         &  & AvgWE & 0.0033 & \textbf{0.0032}
        \\
         &  & MaxRelWE & \textbf{0.9778} & 0.9779
        \\
         &  & MaxWE & 0.0208 & \textbf{0.0208}
        \\
        \hline
        \multirow{16}{6em}{Adult-ordered-bs-50} & \multirow{4}{1.5em}{0.5} & AvgRelWE & \textbf{0.3165} & 0.3698
        \\
         &  & AvgWE & \textbf{0.008} & 0.0093
        \\
         &  & MaxRelWE & \textbf{1.9028} & 2.2472
        \\
         &  & MaxWE & \textbf{0.0497} & 0.0572
        \\
        \cline{2-5}
         & \multirow{4}{1.5em}{1.0} & AvgRelWE & 0.3214 & \textbf{0.2552}
        \\
         &  & AvgWE & 0.008 & \textbf{0.0059}
        \\
         &  & MaxRelWE & 1.9528 & \textbf{1.3285}
        \\
         &  & MaxWE & 0.0547 & \textbf{0.0338}
        \\
        \cline{2-5}
         & \multirow{4}{1.5em}{2.0} & AvgRelWE & 0.1904 & \textbf{0.1864}
        \\
         &  & AvgWE & 0.004 & \textbf{0.0039}
        \\
         &  & MaxRelWE & \textbf{0.9798} & 1.0114
        \\
         &  & MaxWE & 0.0208 & \textbf{0.0208}
        \\
        \cline{2-5}
         & \multirow{4}{1.5em}{4.0} & AvgRelWE & 0.1667 & \textbf{0.1632}
        \\
         &  & AvgWE & 0.0033 & \textbf{0.0032}
        \\
         &  & MaxRelWE & 0.978 & \textbf{0.978}
        \\
         &  & MaxWE & 0.0208 & \textbf{0.0208}
        \\
        \hline
    \end{tabular}
\end{table}

 \begin{figure*}[ht]
     \centering
     \begin{subfigure}[t]{0.23\linewidth}
         \centering
         \includegraphics[height=0.14\textheight]{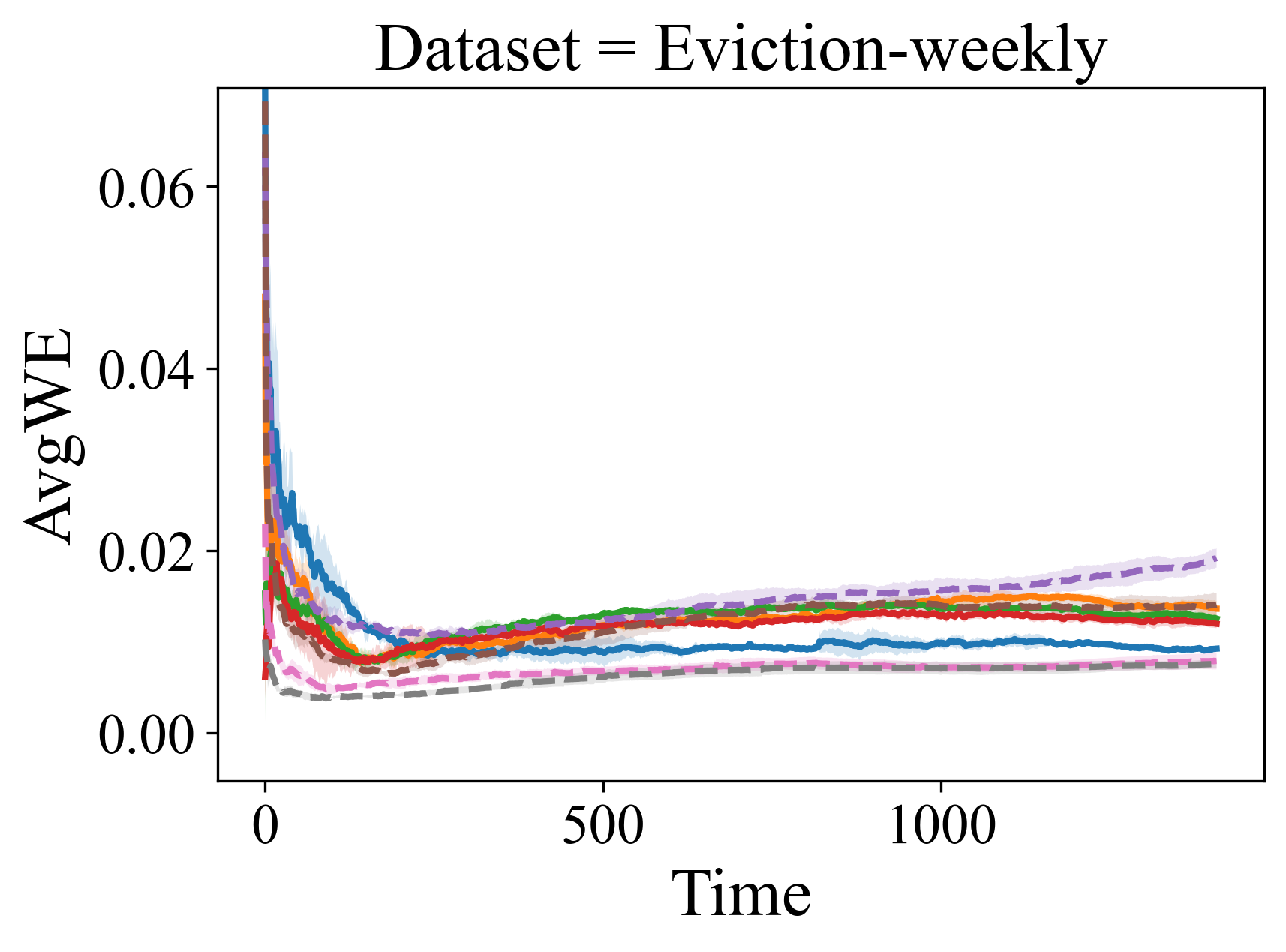}
        \caption{Average of workload errors}
     \end{subfigure}\hspace*{\fill}
     \begin{subfigure}[t]{0.23\linewidth}
         \centering
         \includegraphics[height=0.14\textheight]{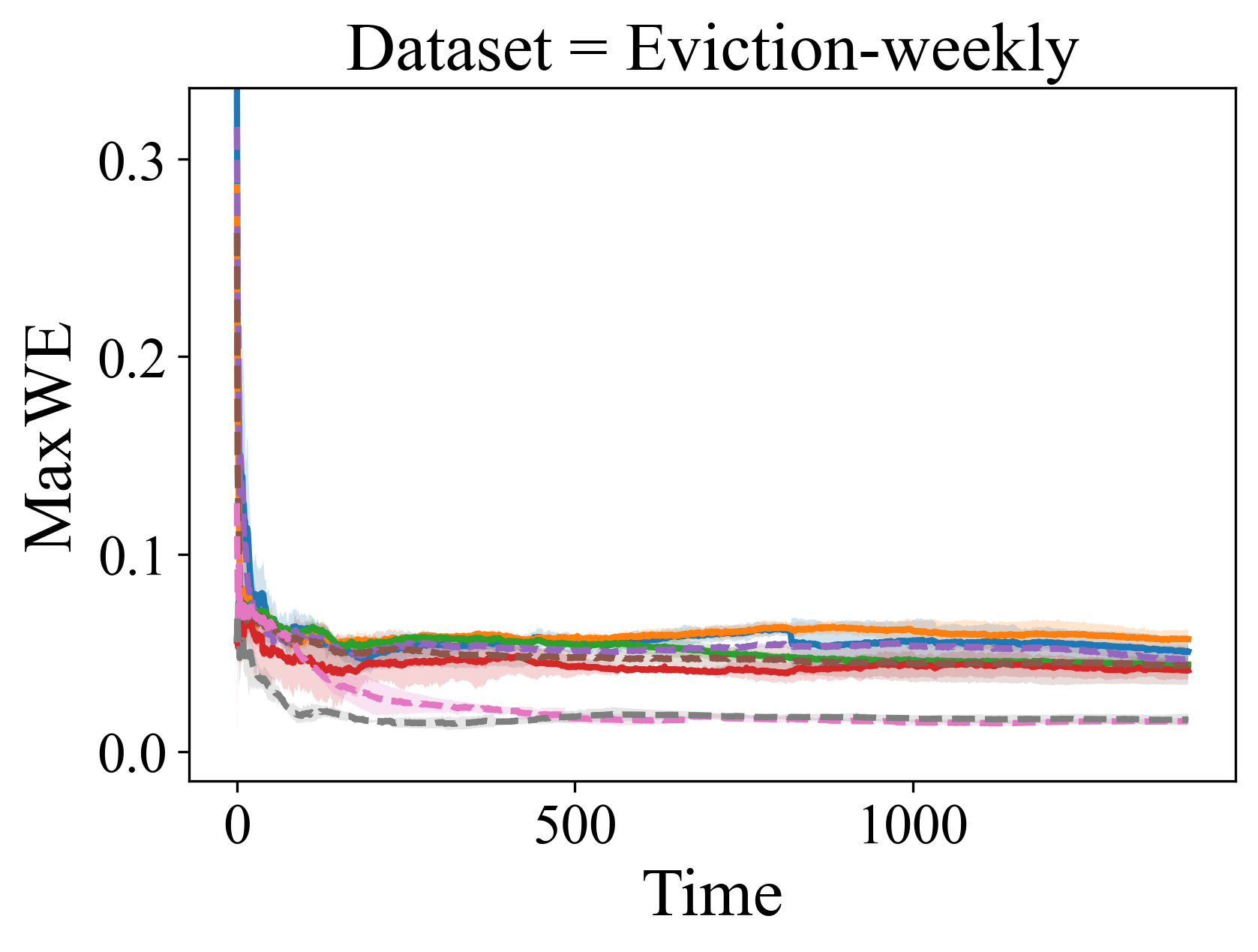}
        \caption{Maximum of workload errors}
     \end{subfigure}\hspace*{\fill}
     \begin{subfigure}[t]{0.23\linewidth}
         \centering
         \includegraphics[height=0.14\textheight]{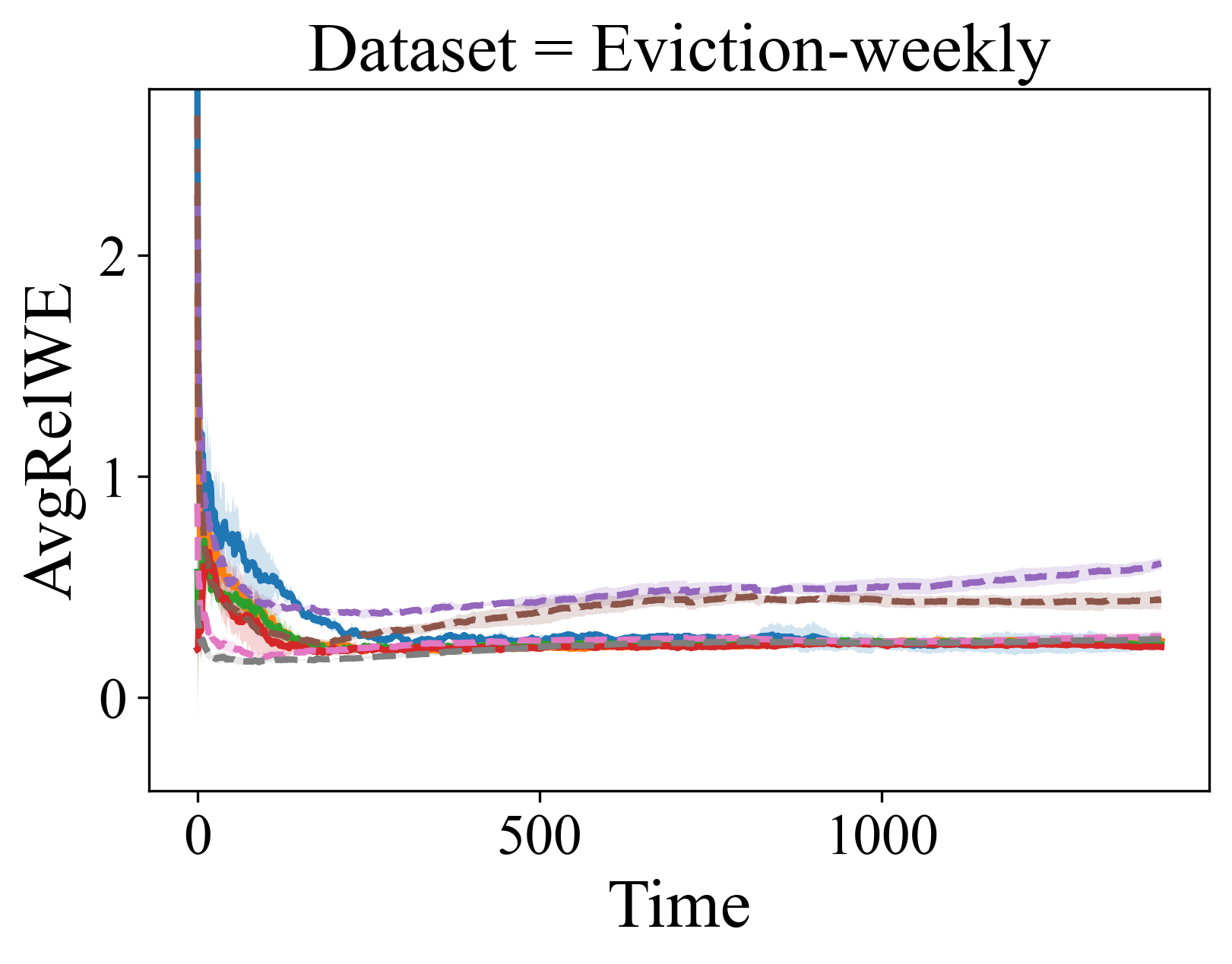}
        \caption{Average of relative workload errors}
     \end{subfigure}\hspace*{\fill}
     \begin{subfigure}[t]{0.23\linewidth}
         \centering
         \includegraphics[height=0.14\textheight]{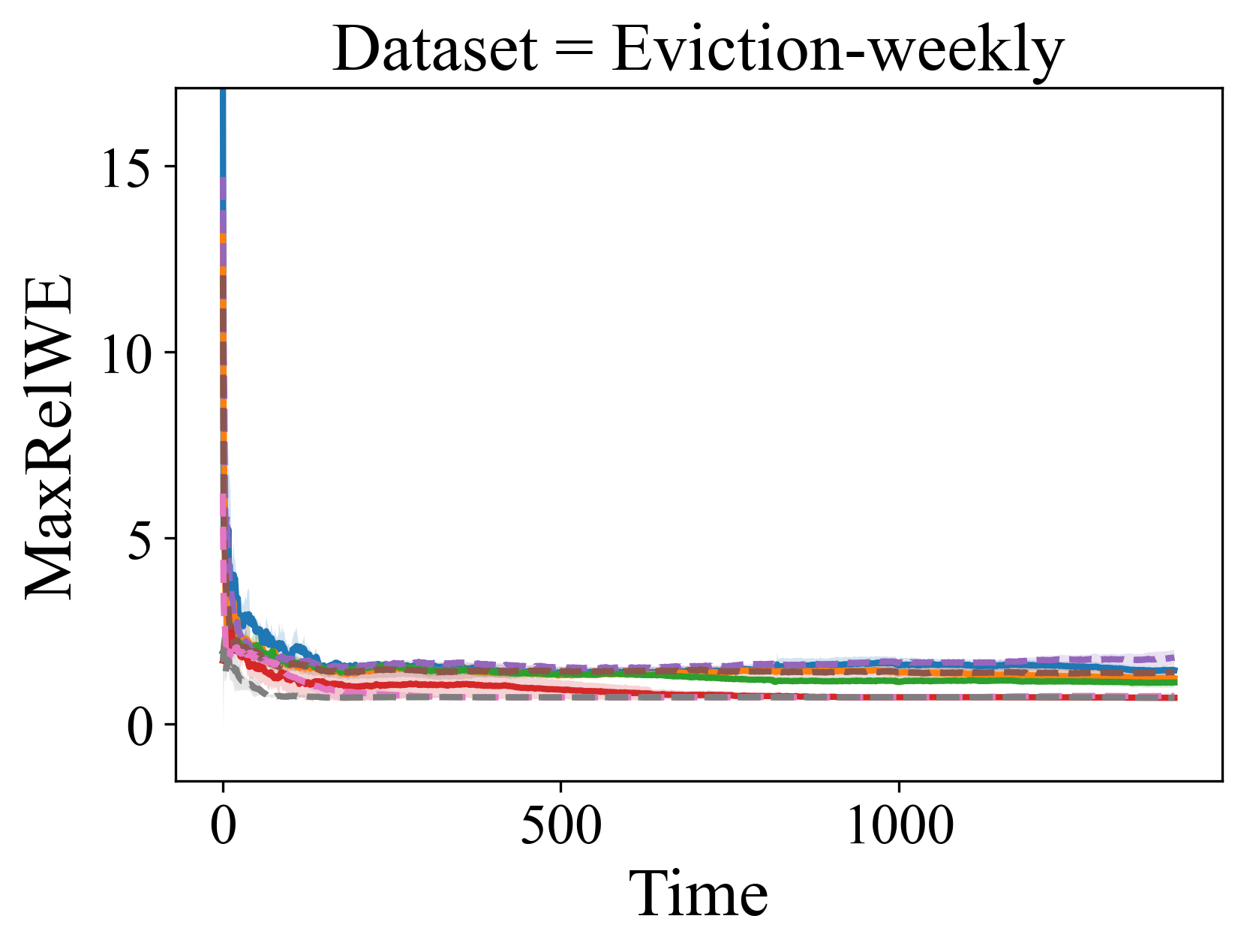}
        \caption{Maximum of relative workload errors}
     \end{subfigure}
    \hfill
     \begin{subfigure}[t]{\linewidth}
         \centering
         \includegraphics[height=0.03\textheight]{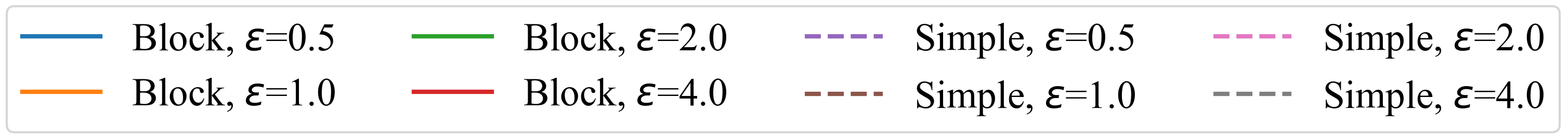}
     \end{subfigure}
    \caption{Metrics over time to compare the performance of simple and block counters for the Eviction-weekly dataset.}
    \label{fig:Eviction-weekly_simple_vs_block}
\end{figure*}

 \begin{figure*}[ht]
     \centering
     \begin{subfigure}[t]{0.23\linewidth}
         \centering
         \includegraphics[height=0.14\textheight]{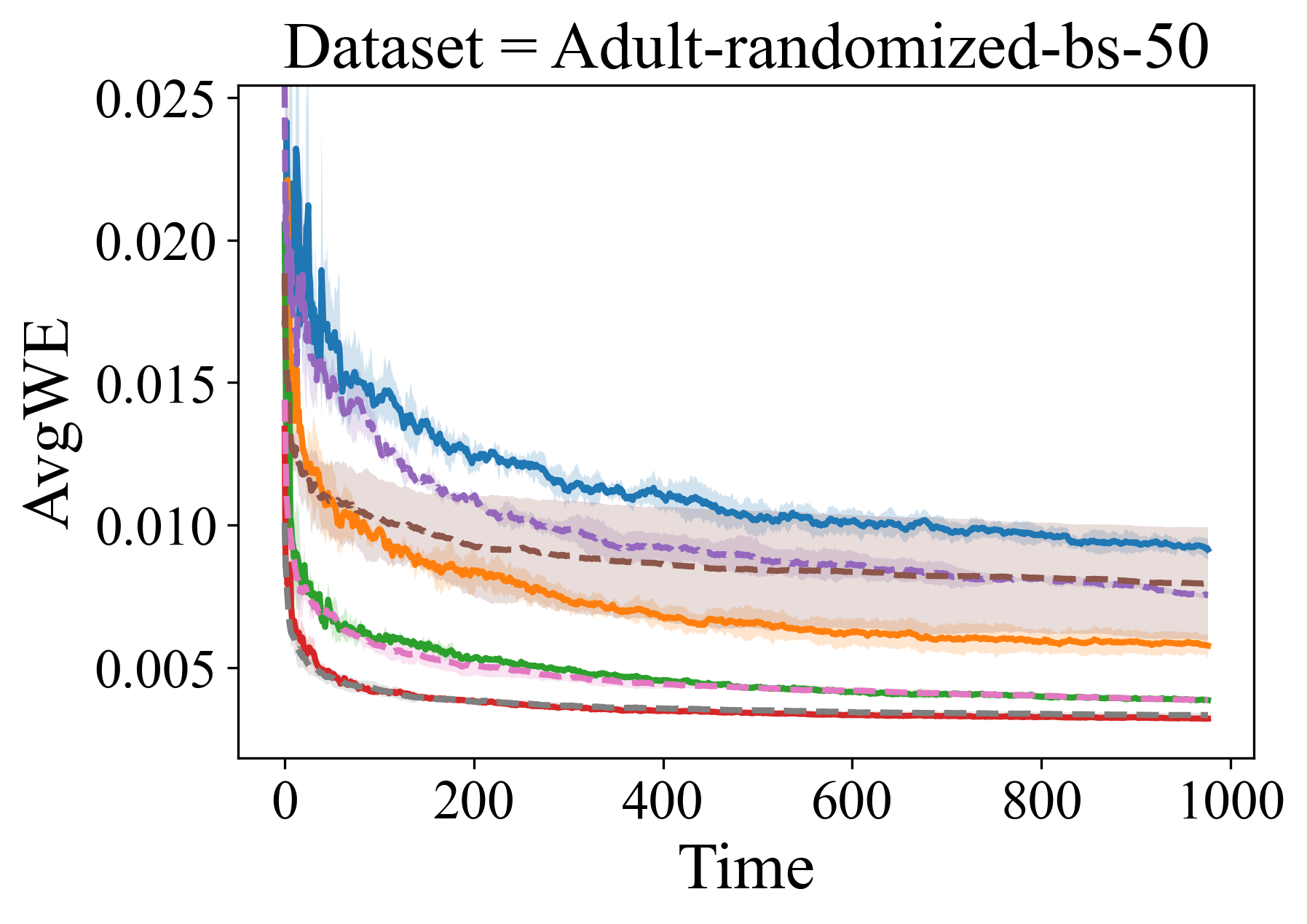}
        \caption{Average of workload errors}
     \end{subfigure}\hspace*{\fill}
     \begin{subfigure}[t]{0.23\linewidth}
         \centering
         \includegraphics[height=0.14\textheight]{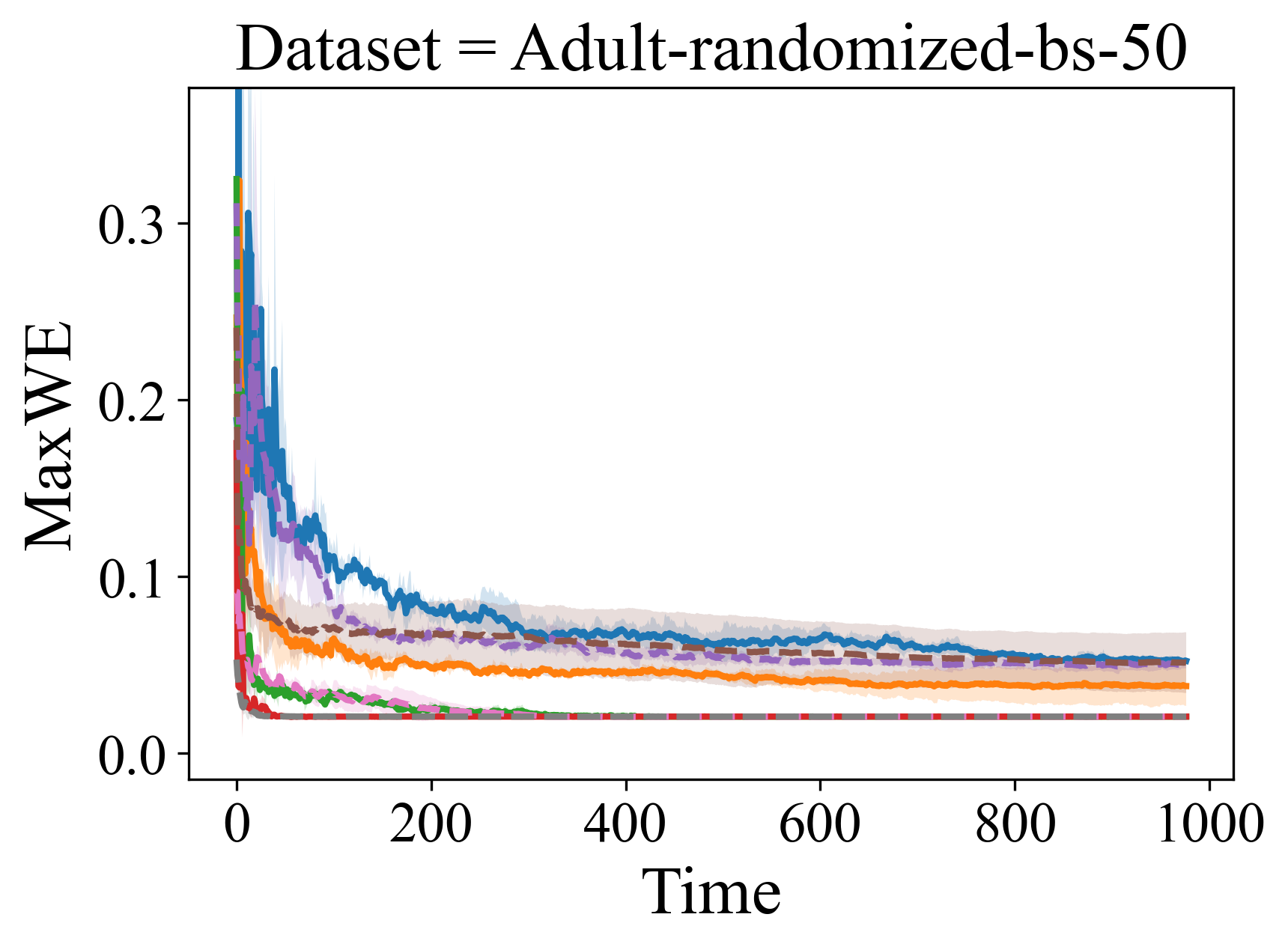}
        \caption{Maximum of workload errors}
     \end{subfigure}\hspace*{\fill}
     \begin{subfigure}[t]{0.23\linewidth}
         \centering
         \includegraphics[height=0.14\textheight]{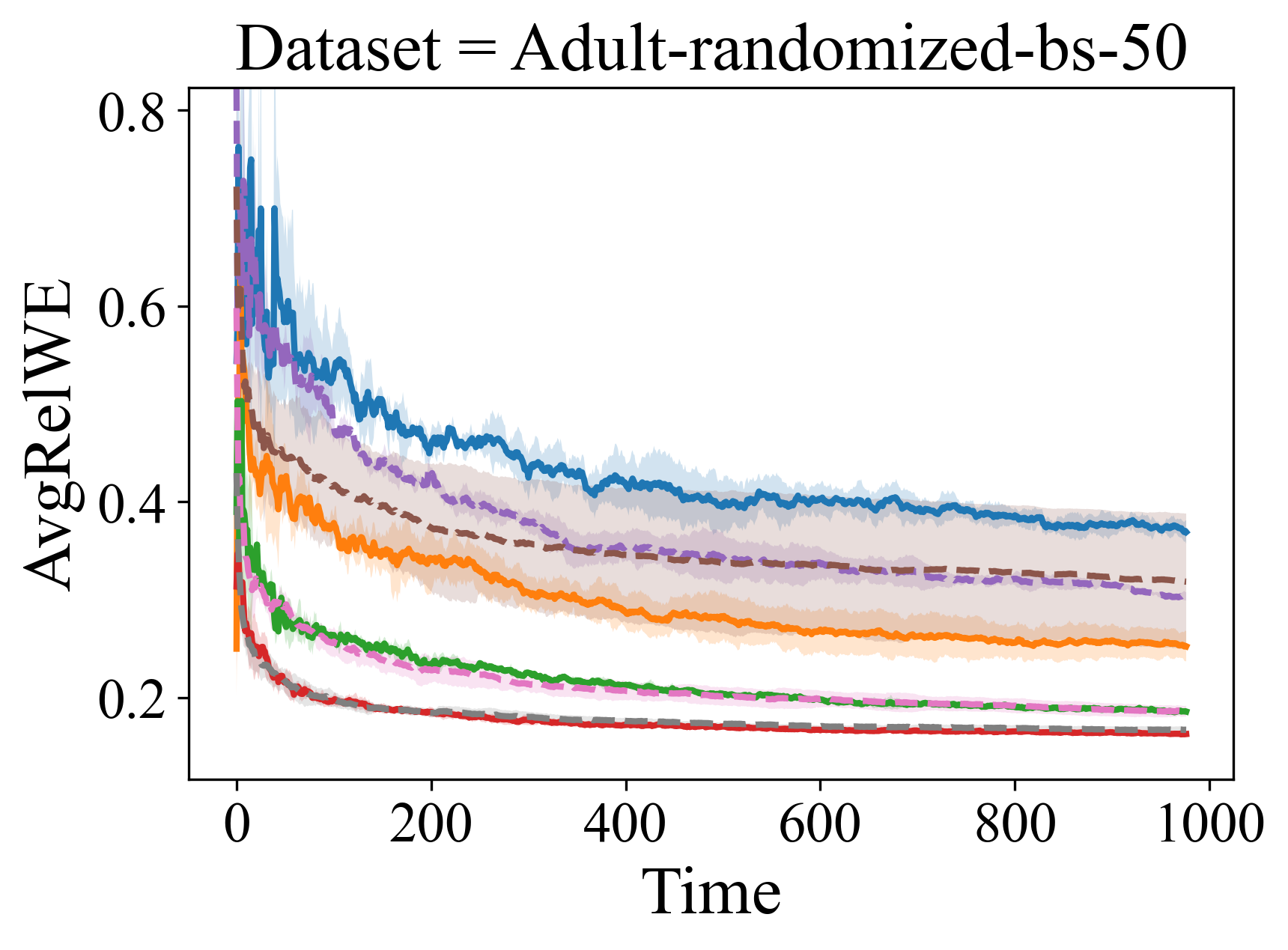}
        \caption{Average of relative workload errors}
     \end{subfigure}\hspace*{\fill}
     \begin{subfigure}[t]{0.23\linewidth}
         \centering
         \includegraphics[height=0.14\textheight]{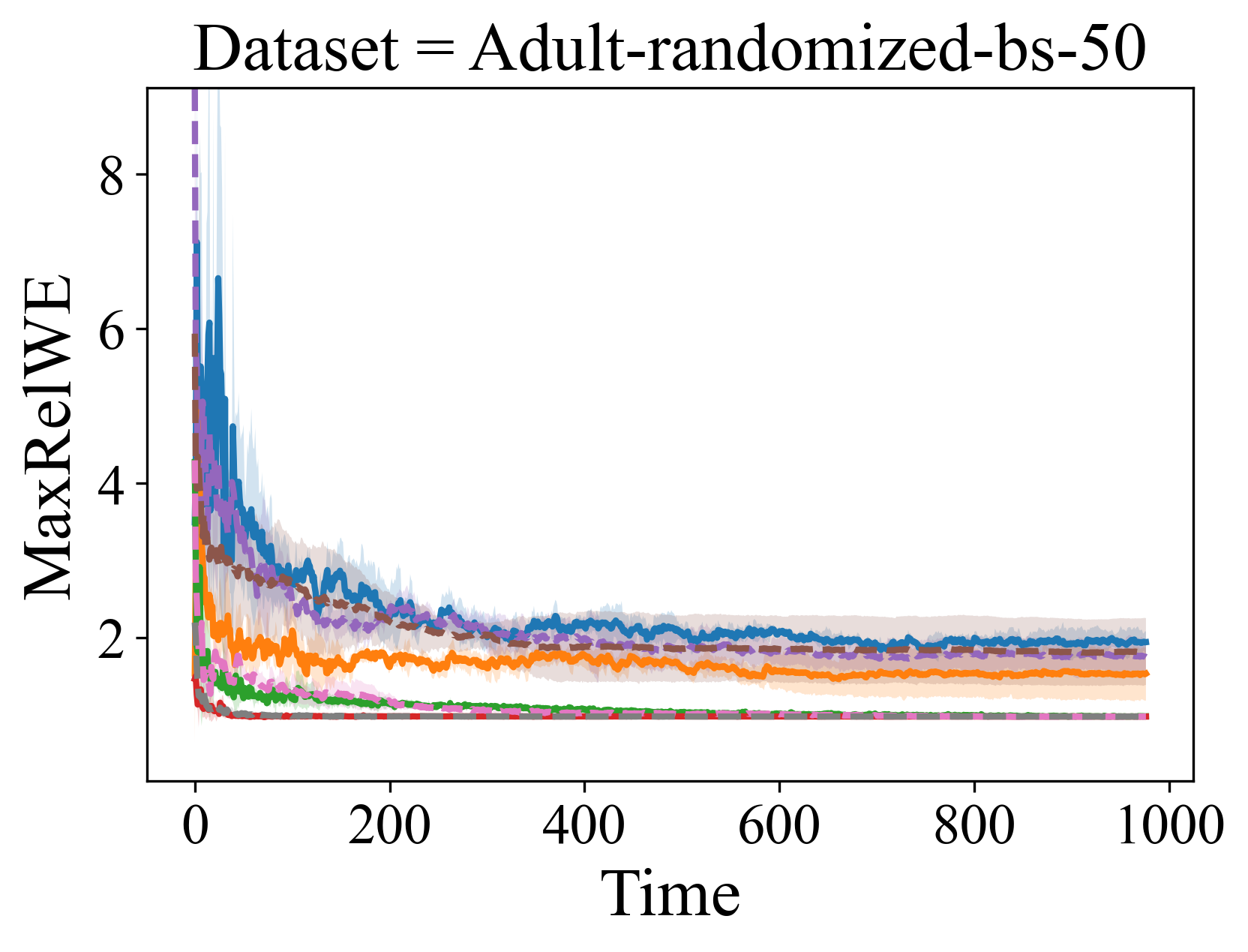}
        \caption{Maximum of relative workload errors}
     \end{subfigure}
    \hfill
     \begin{subfigure}[t]{\linewidth}
         \centering
         \includegraphics[height=0.03\textheight]{block_figures/legend.png}
     \end{subfigure}
    \caption{Metrics over time to compare the performance of simple and block counters for the Adult-randomized-bs-50 dataset.}
    \label{fig:Adult-randomized-bs-50_simple_vs_block}
\end{figure*}

 \begin{figure*}[ht]
     \centering
     \begin{subfigure}[t]{0.23\linewidth}
         \centering
         \includegraphics[height=0.14\textheight]{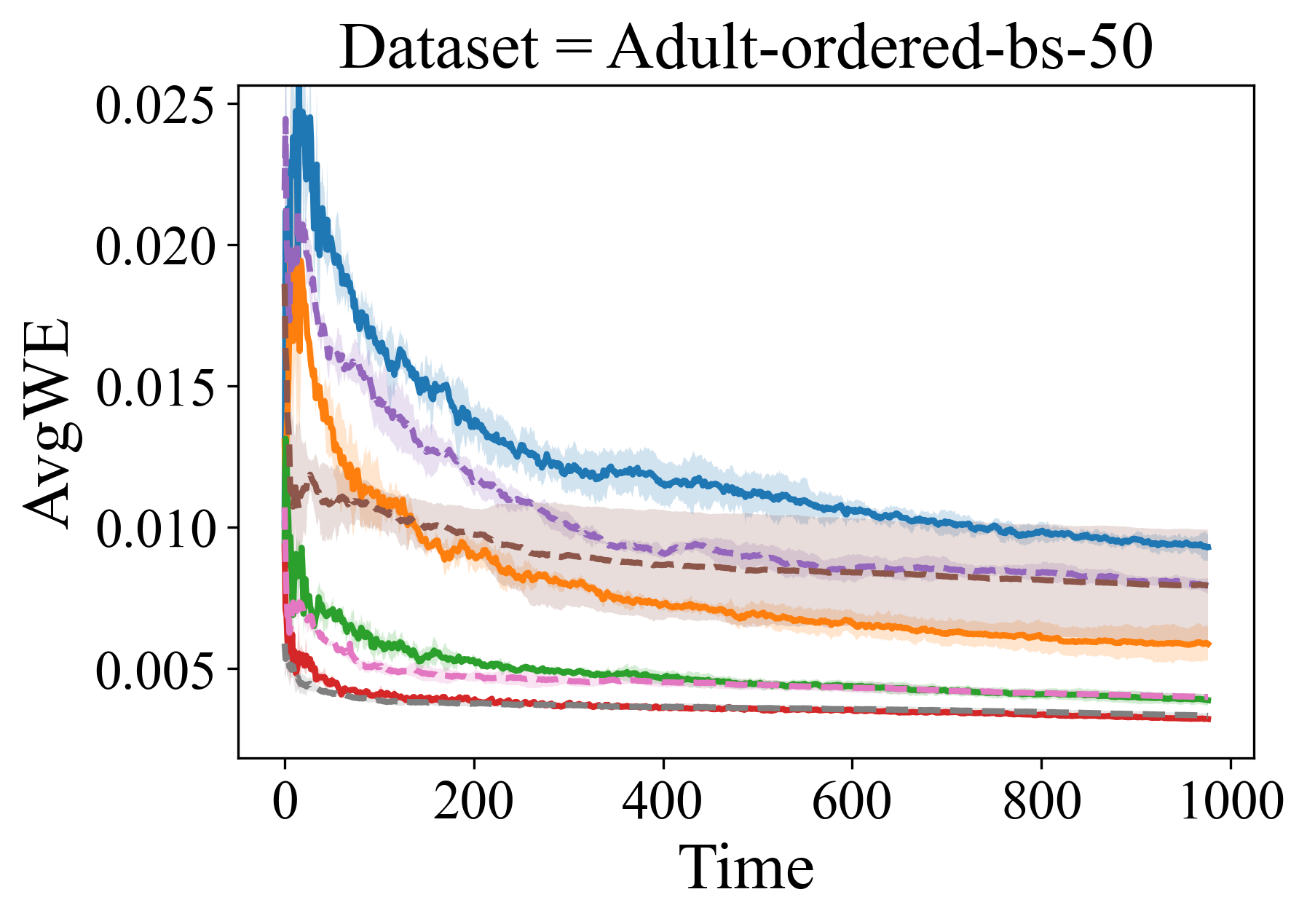}
        \caption{Average of workload errors}
     \end{subfigure}\hspace*{\fill}
     \begin{subfigure}[t]{0.23\linewidth}
         \centering
         \includegraphics[height=0.14\textheight]{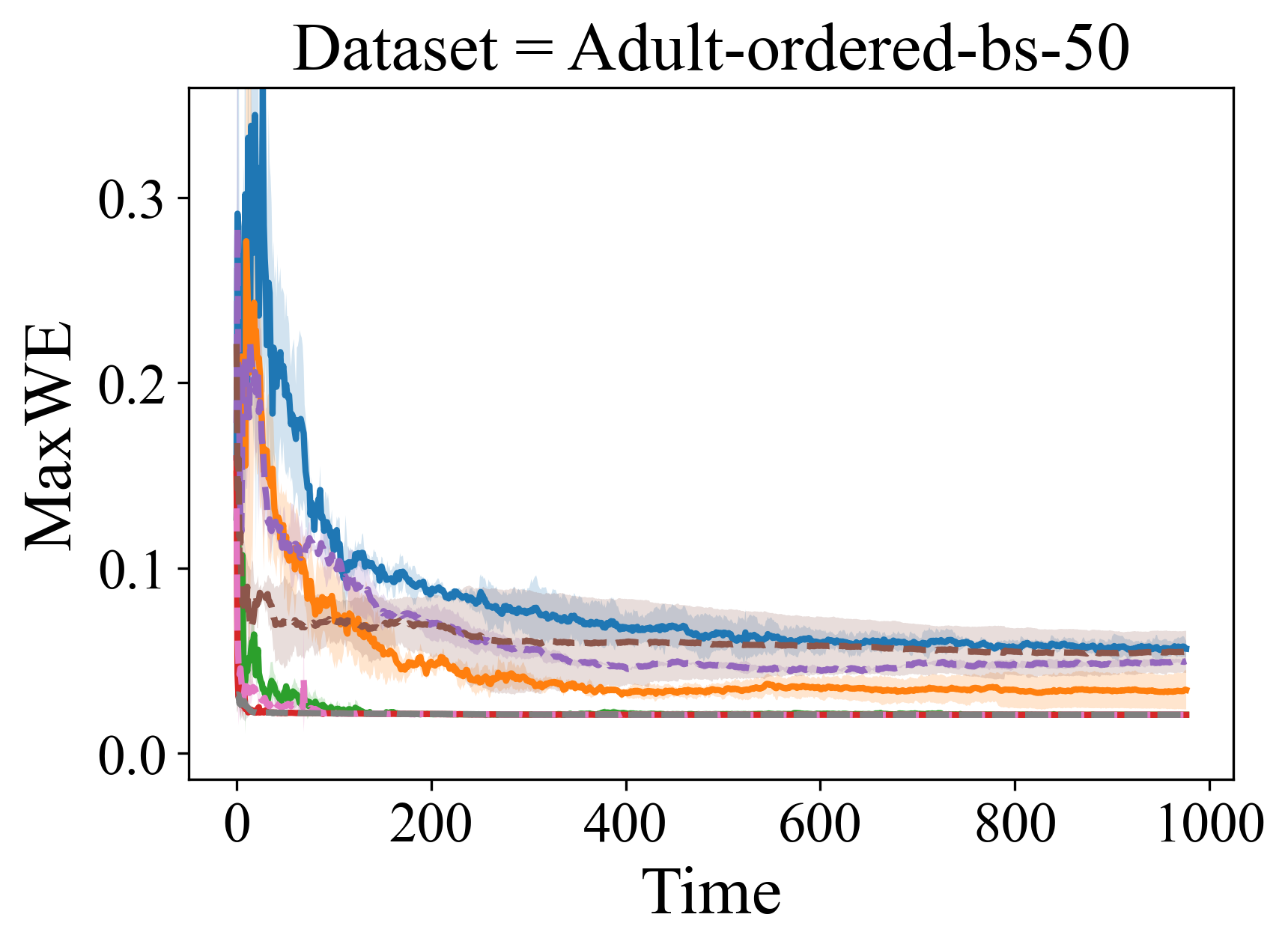}
        \caption{Maximum of workload errors}
     \end{subfigure}\hspace*{\fill}
     \begin{subfigure}[t]{0.23\linewidth}
         \centering
         \includegraphics[height=0.14\textheight]{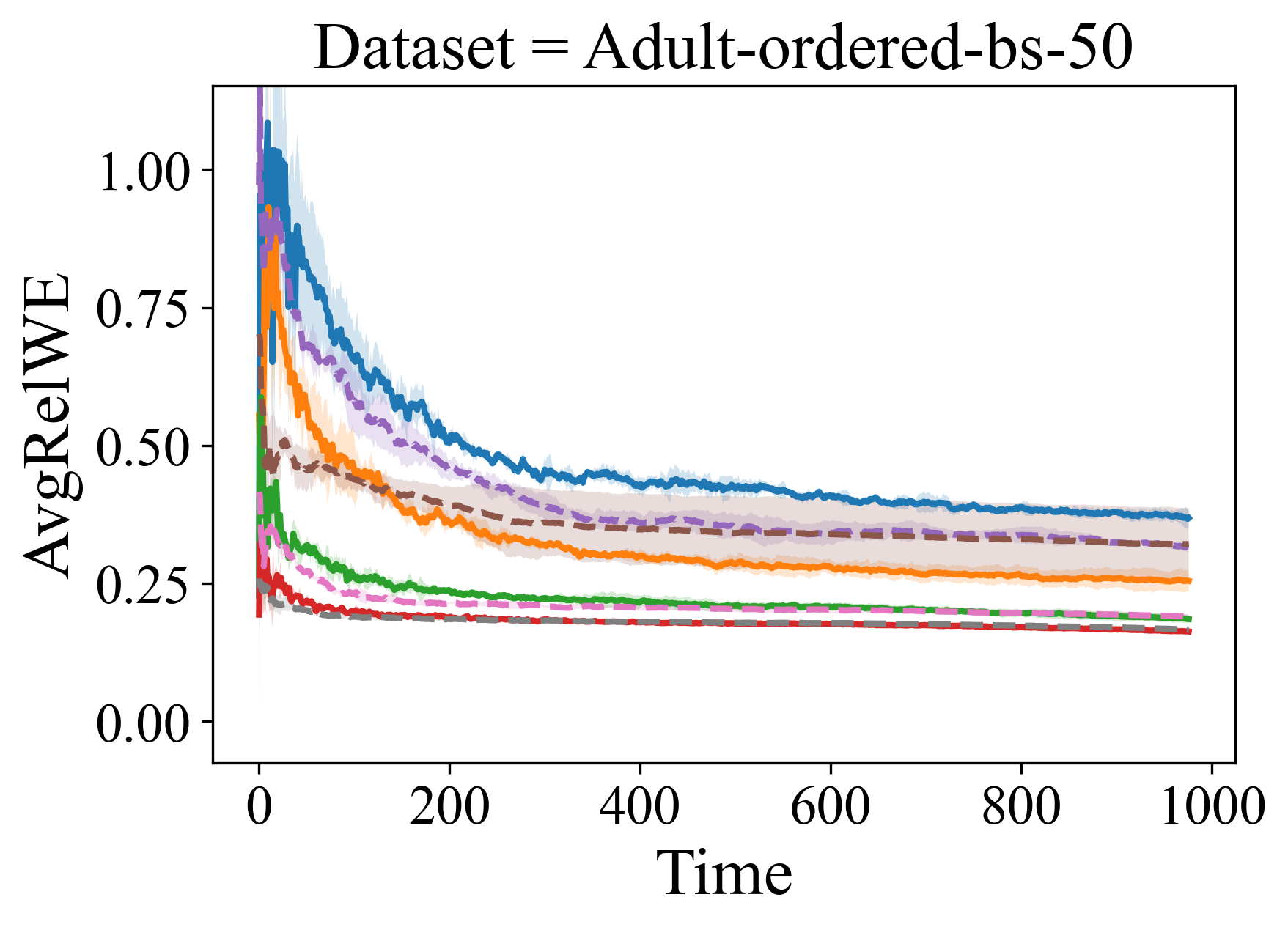}
        \caption{Average of relative workload errors}
     \end{subfigure}\hspace*{\fill}
     \begin{subfigure}[t]{0.23\linewidth}
         \centering
         \includegraphics[height=0.14\textheight]{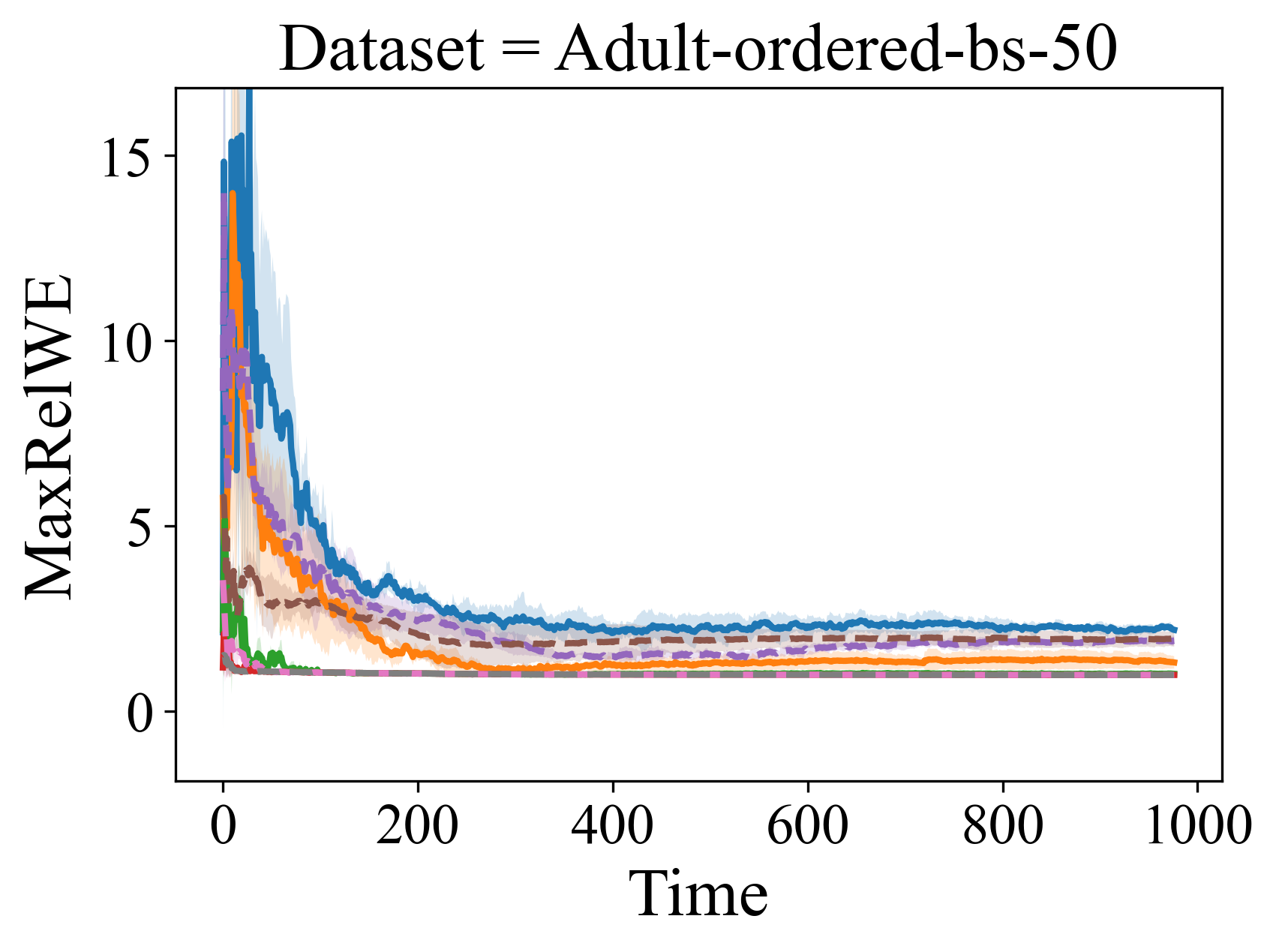}
        \caption{Maximum of relative workload errors}
     \end{subfigure}
    \hfill
     \begin{subfigure}[t]{\linewidth}
         \centering
         \includegraphics[height=0.03\textheight]{block_figures/legend.png}
     \end{subfigure}
    \caption{Metrics over time to compare the performance of simple and block counters for the Adult-ordered-bs-50 dataset.}
    \label{fig:Adult-ordered-bs_50_simple_vs_block}
\end{figure*}

We extend the Two-Level counter mechanism (also referred to as Block counter) due to \cite{Chan2010ContinualPrivateStats}  to unbounded streams. We present it formally in Algorithm~\ref{alg:unbounded_block_counter}. The idea is similar to how the bounded Binary Mechanism is extended to the unbounded Hybrid Mechanism in \cite{Chan2010ContinualPrivateStats}. As shown in \cite{Chan2010ContinualPrivateStats}, an optimal block size of the Bounded Block Counter for a stream of size $T$ is $\sqrt{T}$. The key idea is to partition the time dimension of the stream $f:\N\to\R$ into intervals of size $4, 9, 16, \ldots$ (that is perfect squares), and within each of the corresponding intervals, we use a bounded block counter of block size $2, 3, 4, \ldots$ respectively.

\begin{algorithm}[ht]
    \begin{algorithmic}[1]
        \State {\bf Input:} An input data stream $f:\N \to \R$, the privacy budget $\e$.
        \State {\bf Output:} A synthetic stream $g: \N \to \R$.
        \State Initialize partition size $T \leftarrow 4$.
        \State Initialize block size $B \leftarrow 2$.
        \State Last block value $\a_{lastBlock} \leftarrow 0$.
        \State True value within block $\a_{trueInBlock} \leftarrow 0$.
        \State Synthetic value within block $\a_{synthInBlock} \leftarrow 0$.
        \State Time when the last partition changed $t_{atPartition} \leftarrow 0$.
        \State Set $g(0)=0$.
        \For{$t=1, 2, \ldots$}
            \State Set $\d \leftarrow t-t_{atPartition}$.
            \State Update $\a_{trueInBlock} \leftarrow \a_{trueInBlock} + f(t)$.
            \If{$\d = kB$ for some $k \in Z$}
                \State Update $\a_{lastBlock} \leftarrow \a_{lastBlock} + \a_{tueInBlock} + \Lap \bp{ \frac{2}{\e} }$.
                \State Update $\a_{trueInBlock} \leftarrow 0$ and  $\a_{synthInBlock} \leftarrow 0$.
                \State Set $g(t) \leftarrow \a_{lastBlock}$.
                \If{$\d = T$}
                    \State Update $t_{atPartition} \leftarrow t$.
                    \State Update $B \leftarrow B+1$ and $T \leftarrow B^2$.
                \EndIf
            \Else
                \State Update $\a_{synthInBlock} \leftarrow \a_{synthInBlock} + f(t) + \Lap \bp{ \frac{2}{\e} }$.
                \State Set $g(t) \leftarrow \a_{lastBlock} + \a_{synthInBlock}$.
            \EndIf
            \State Release $g(t)$.
        \EndFor
    \end{algorithmic}
    \caption{Unbounded Block Counter}
    \label{alg:unbounded_block_counter}
\end{algorithm}

\begin{theorem}[Privacy of unbounded block counter]
    The unbounded block counter, as presented in Algorithm~\ref{alg:unbounded_block_counter}, satisfies $\e$-differential privacy.
\end{theorem}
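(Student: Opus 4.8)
The plan is to exhibit the entire output stream $g$ as a post-processing of two independent, data-independent Laplace releases, each of which is $\e/2$-differentially private, and then invoke sequential composition together with closure under post-processing. The first and most important observation is that the block/partition schedule is governed entirely by the time index: the quantities $B$, $T$, and $t_{atPartition}$ (and hence which time steps are block boundaries and to which block each step belongs) evolve deterministically in $t$ and never depend on the input $f$. Consequently the partition into blocks is public, which is what makes the following decomposition well defined.

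Next I would split the Laplace noises into two families. The \emph{block noises} $\{\nu_b\}$ are the draws $\Lap(2/\e)$ added to $\a_{lastBlock}$ at each block boundary; writing $\sigma_b$ for the true sum of the data over block $b$, the \emph{block view} is the sequence $\{\sigma_b + \nu_b\}_b$. The \emph{increment noises} $\{\mu_\tau\}$ are the draws $\Lap(2/\e)$ added in the non-boundary branch, where the amount added at a non-boundary time $\tau$ is $f(\tau) + \mu_\tau$; the \emph{partial view} is the sequence of these noisy increments over non-boundary times. Because the block assignment is data-independent, changing the input at a single time $t_0$ by at most one (the neighboring relation for counters) alters exactly one coordinate of the block view, namely $\sigma_b$ for the unique block $b \ni t_0$, by at most one, and alters at most one coordinate of the partial view, namely the increment at $\tau = t_0$ and only when $t_0$ is not a boundary, again by at most one. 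Hence each view has $\ell_1$-sensitivity $1$, and since each coordinate is perturbed by independent $\Lap(2/\e)$ noise, the Laplace mechanism makes the block view and the partial view each $\e/2$-differentially private (for every finite time horizon, and therefore for the stream in the sense of Definition~\ref{def:dp_stream}).

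It then remains to check that $g$ is a deterministic function of these two views. Unwinding the update rules, $\a_{lastBlock}$ at any time is a prefix sum of the completed entries of the block view, while $\a_{synthInBlock}$ is a prefix sum of the entries of the partial view accumulated since the last boundary, the resets being dictated solely by the public schedule. At a boundary step $g(t) = \a_{lastBlock}$, and at a non-boundary step $g(t) = \a_{lastBlock} + \a_{synthInBlock}$; in both cases $g(t)$ is obtained from the two views by public, data-independent operations, so $g$ is a post-processing of the pair $(\text{block view}, \text{partial view})$. The two views use independent randomness, so their joint release is $(\e/2 + \e/2) = \e$-differentially private by sequential composition, and post-processing preserves this, yielding the claim.

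I expect the main obstacle to be the bookkeeping in the second and third steps: carefully verifying from the pseudocode that a single input count enters exactly one block sum and at most one increment, so that the sensitivities are genuinely $1$ and do not accumulate, and confirming that the carry-over of $\a_{lastBlock}$ across partitions is already absorbed into the single global block view rather than introducing dependence across partitions. The unbounded horizon is not itself a difficulty, precisely because the composition is over the two views and not over time steps, so the privacy cost stays at $\e$ uniformly in $t$; one only needs the standard remark that a streaming mechanism is $\e$-differentially private once all of its finite-time-horizon restrictions are.
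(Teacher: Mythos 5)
Your proof is correct, and it rests on the same pivotal observation as the paper's: the partition/block schedule ($B$, $T$, $t_{atPartition}$) evolves deterministically in $t$ and never reads the data, so the block structure is public. Where you diverge is in what you do with that observation. The paper's proof is a two-sentence reduction: since the schedule is data-independent, Algorithm~\ref{alg:unbounded_block_counter} is ``the block counter with a varying block size,'' and the privacy guarantee is inherited from the bounded Two-Level Mechanism of \cite{Chan2010ContinualPrivateStats}. You instead reprove that underlying guarantee from scratch: the decomposition of the released randomness into the block view (noisy block sums, $\ell_1$-sensitivity $1$, hence $\e/2$-DP under $\Lap(2/\e)$) and the partial view (noisy within-block increments, again sensitivity $1$ and $\e/2$-DP), followed by sequential composition and post-processing. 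This is essentially the argument inside Chan et al.'s proof of the Two-Level Mechanism, transplanted to the unbounded setting. What your version buys is self-containedness and an explicit check that the carry-over of $\a_{lastBlock}$ across partitions does not create any cross-partition dependence (it is absorbed into a single global block view) --- a point the paper's reduction glosses over; what the paper's version buys is brevity by leaning on the cited result. Both are valid.
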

\begin{proof}
    Note, Algorithm~\ref{alg:unbounded_block_counter} is exactly the block counter algorithm, except the size of the block changes over time. However, the change in the block size is independent of the input data stream. Hence, similar to the Block counter, Algorithm~\ref{alg:unbounded_block_counter} is $\e$-differentially private. 
\end{proof}

\subsection{Results}

In this section, we present our results for empirical analysis of the proposed unbounded block counter (Algorithm~\ref{alg:unbounded_block_counter}) as compared to the simple counter when used as the subroutine $\AA_{Dataset}$ in Algorithm~\ref{alg:main}. Based on the evidence in \cite{kumar_algorithm_2024}, we know that the block counter performs better than the simple counter only after sufficiently large time $t$, hence we only use the datasets Eviction-weekly, Adult-ordered-bs-50, and Adult-randomized-bs-50 in our experiments. The length of time horizons for these datasets are $1409$, $977$, and $977$ respectively.

We present the findings of our experiments in Figures~\ref{fig:Eviction-weekly_simple_vs_block}, \ref{fig:Adult-randomized-bs-50_simple_vs_block}, and~\ref{fig:Adult-ordered-bs_50_simple_vs_block} which show various error metrics over time, analogous to Section~\ref{s:results}. We also provide a tabular view of these metrics in Table~\ref{tab:metrics_block_v_simple}. Let us first focus on the Adult dataset and privacy budget $\e\geq 1$. Using the block counter in Algorithm~\ref{alg:main}  is typically better than using the simple counter. However, for $\e=0.5$, we see that the simple counter performs better. The results of the experiments over the Eviction dataset do not yield a clear conclusion whether the block counter is better than the simple counter for any particular $\e$. We believe that the high variance of the block counter at the beginning of time, together with the selection error due to the Exponential mechanism, leads to such behavior.

\section{Conclusion}
In this work, we discuss the task of streaming differentially private high-dimensional synthetic data that accurately represents the true data over a set of marginal queries. Our focus is on developing an algorithm that can be used in practical applications and is better than the naive algorithm of running independent instances of offline algorithms on differential data stream at any time. We build upon existing research in offline synthetic data generation and counters for streaming algorithms to create a framework for the task. We also show with experiments over real-world datasets that our method outperforms the baseline. 


\begin{acks}
G.K.\ and T.S.\ acknowledge support from NSF DMS-2027248, NSF DMS-2208356, and NIH R01HL16351.  R.V.\ acknowledges support from NSF DMS-1954233, NSF DMS-2027299, and NSF+Simons Research Collaborations on the Mathematical and Scientific Foundations of Deep Learning.
\end{acks}


\bibliographystyle{ACM-Reference-Format}
\bibliography{intro_bib,main_bib}


\appendix
\onecolumn

\section{Accuracy of baseline algorithm}\label{s:appendix_baseline_accuracy}
\begin{proof}[Proof of Theorem~\ref{thm:streaming_mwem_acc}]
    The below analysis is similar to the analysis of (offline) MWEM algorithm due to \cite{hardt2012mwem}. Let us focus the analysis on iteration $l$ of time $t$.
    
    \textit{Selection error:} First, we will analyze the error in query selection at Step~\ref{lst:alg_smwem_selection}. Let $\maxerr_{t,l}$ denote the maximum possible absolute difference between values of any query in $Q$ as measured on $h_{t,l-1}$ and $\nabla f_t$, that is,
    \begin{equation}
        \maxerr_{t,l} = \max_{q \in Q} \abs{q(h_{t, l-1}) - q(f_t)}.
    \end{equation}
    
    
    At the $l^{th}$ iteration at time $t$, we select the query with index $j$, where $j$ is a shorthand for $e_{t,l}$. By the utility of exponential mechanism (Theorem~\ref{thm:exp_mech_accuracy}) invocated with a privacy budget $\e/2k$ and sensitivity $1$, for any $\b>0$, we have,
    
    \begin{equation}\label{eq:smwem_selection_err_t_l}
        \prob{ \abs{q_j(h_{t,l-1}) - q_j(\nabla f_t)} \leq \maxerr_{t,l} - \frac{4k}{\e} \ln{\frac{|Q|}{\b}} } \leq \b.
    \end{equation}

    \textit{Additive error:} Let us now analyze the error due to the Laplace Mechanism at Step~\ref{lst:alg_smwem_measure}. Let $\adderr_{t,l}$ denote the additive error when measuring the query $q_{e_{t,l}}$, that is,
    \begin{equation}
        \adderr_{t,l} = \abs{ m_{t,l} - q_{e_{t,l}} (\nabla f_t) }.
    \end{equation}
    Again using $j$ as shorthand for $e_{t,l}$. By concentration of the Laplace random variable we have, that for a noise of scale $\frac{2k}{\e}$,
    \begin{equation}\label{eq:smwem_add_err_t_l}
        \prob{\abs{ m_{t,l} - q_{e_{t,l}} (\nabla f_t) } > \frac{2k}{\e}\log{\frac{1}{\b}}}=\b.
    \end{equation}
    
    \textit{Relative entropy:} Similar to \cite{hardt2012mwem} we rely on relative entropy to show improvement in each iteration by using the multiplicative weights algorithm. Let the relative entropy at the end of iteration $l$ at time $t$ be given as
    \begin{equation}
        \Psi_{t,l} = \frac{1}{|\nabla f_t|} \sum_{x \in \XX} \nabla f_t(x) \ln\bp{\frac{\nabla f_t(x)}{h_{t,l}(x)}}.
    \end{equation}
    Then we have the following relations,
    \begin{align}
        \Psi_{t,l} &\geq 0, \\
        \Psi_{0,0} &\leq ln\abs{\XX}, \\
         \Psi_{t,l-1}-\Psi_{t,l} &\geq \bp{\frac{q_{e_{t,l}}(h_{t,l}) - q_{e_{t,l}}(\nabla f_t)}{2|\nabla f_t|}}^2 - \bp{\frac{m_{t,l} - q_{e_{t,l}}(\nabla f_t)}{2|\nabla f_t|}}^2. \label{eq:relative_entropy_diff} 
    \end{align}
    Equation~\eqref{eq:relative_entropy_diff} can be derived as follows,
    \begin{align*}
        \Psi_{t,l-1} - \Psi_{t,l} &= \frac{1}{|\nabla f_t|} \sum_{x \in \XX} \nabla f_t(x) \ln\bp{\frac{h_{t,l}(x)}{h_{t,l-1}(x)}}
        = \frac{1}{|\nabla f_t|} \sum_{x \in \XX} \nabla f_t(x) \ln\bp{ \frac{ h_{t,l-1}(x) \cdot \exp\bp{q_{e_{t,l}}(x) \cdot \bp{ \frac{m_{t,l}-q_{e_{t,l}}(h_{t,l-1})}{2|\nabla f_t|} }} } { h_{t,l-1}(x) Z_{t,l} } },
    \end{align*}
    where $Z_{t,l} = \frac{1}{|\nabla f_t|} \sum_{x \in \XX} h_{t,l-1}(x) \exp\bp{q_{e_{t,l}}(x) \cdot \bp{ \frac{m_{t,l}-q_{e_{t,l}}(h_{t,l-1})}{2|\nabla f_t|} }}$ is the normalization constant. So,
    \begin{align*}
        \Psi_{t,l-1} - \Psi_{t,l}
        &= \frac{1}{|\nabla f_t|} \sum_{x \in \XX} \nabla f_t(x) \bp{ q_{e_{t,l}}(x) \cdot \bp{ \frac{m_{t,l}-q_{e_{t,l}}(h_{t,l-1})}{2|\nabla f_t|}}  - \ln Z_{t,l} }
        = \bp{ \frac{m_{t,l}-q_{e_{t,l}}(h_{t,l-1})}{2|\nabla f_t|^2}} q_{e_{t,l}}(\nabla f_t)  - \ln Z_{t,l}.
    \end{align*}
    Using $e^x \leq 1+x+x^2$ for all $|x|\leq 1$ and  $\abs{q_{e_{t,l}}(x) \cdot \frac{m_{t,l}-q_{e_{t,l}}(h_{t,l-1})}{2|\nabla f_t|}} \leq 1$, we have,
    \begin{align*}
        Z_{t,l}
        &= \frac{1}{|\nabla f_t|} \sum_{x \in \XX} h_{t,l-1}(x) \exp\bp{q_{e_{t,l}}(x) \cdot \bp{ \frac{m_{t,l}-q_{e_{t,l}}(h_{t,l-1})}{2|\nabla f_t|} }}
        \\
        &\leq \frac{1}{|\nabla f_t|} \sum_{x \in \XX} h_{t,l-1}(x) \bp{1 + q_{e_{t,l}}(x) \cdot \bp{ \frac{m_{t,l}-q_{e_{t,l}}(h_{t,l-1})}{2|\nabla f_t|} } + \bp{ q_{e_{t,l}}(x) \cdot \bp{ \frac{m_{t,l}-q_{e_{t,l}}(h_{t,l-1})}{2|\nabla f_t|} } }^2 }
        \\
        &\leq 1 + \bp{ \frac{m_{t,l}-q_{e_{t,l}}(h_{t,l-1})}{2|\nabla f_t|} }^2 + q_{e_{t,l}}(h_{t, l-1}) \cdot \bp{ \frac{m_{t,l}-q_{e_{t,l}}(h_{t,l-1})}{2|\nabla f_t|^2} }.
        \\
        \implies \ln Z_{t,l}
        &\leq \bp{ \frac{m_{t,l}-q_{e_{t,l}}(h_{t,l-1})}{2|\nabla f_t|} }^2 + q_{e_{t,l}}(h_{t, l-1}) \cdot \bp{ \frac{m_{t,l}-q_{e_{t,l}}(h_{t,l-1})}{2|\nabla f_t|^2} }.
    \end{align*}
    Using this in the entropy difference bound we have,
    \begin{align}
        \Psi_{t,l-1} - \Psi_{t,l}
        &\geq \bp{ \frac{m_{t,l}-q_{e_{t,l}}(h_{t,l-1})}{2|\nabla f_t|^2}} \bp{ q_{e_{t,l}}(\nabla f_t) -q_{e_{t,l}}(h_{t, l-1}) }  - \bp{ \frac{m_{t,l}-q_{e_{t,l}}(h_{t,l-1})}{2|\nabla f_t|} }^2 \nonumber
        \\
        &= \bp{ \frac{q_{e_{t,l}}(h_{t, l-1}) -q_{e_{t,l}}(\nabla f_t) }{2|\nabla f_t|} }^2 - \bp{ \frac{ m_{t,l} - q_{e_{t,l}} (\nabla f_t) }{2|\nabla f_t|}}^2.
    \end{align}

    \textit{Finally:} Suppose we are interested in error at time $T$. Let $\b>0$ be the failure probability at some time $t\in[T]$. Then, using Equations~\eqref{eq:smwem_selection_err_t_l}, \eqref{eq:smwem_add_err_t_l} and \eqref{eq:relative_entropy_diff} and a union bound over $l \in [k]$, with probability at least $1-\b$, for all $l \in [k]$ simultaneously,
    \begin{equation}
        \maxerr_{t,l} \leq \abs{q_{e_{t,l}}(h_{t,l-1}) - q_{e_{t,l}}(\nabla f_t)} + \frac{4k}{\e} \log{ \bp{\frac{2k|Q|}{\b}} },
    \end{equation}
    and,
    \begin{equation}
        \adderr_{t,l} = \abs{ m_{t,l} - q_{e_{t,l}} (\nabla f_t) } \leq \frac{2k}{\e}\log{ \bp{\frac{2k}{\b}} }.
    \end{equation}
    
    Combining the above two equations with Equation~\ref{eq:relative_entropy_diff} we have, that with probability at least $1-\b$,
    \begin{align*}
        \maxerr_{t,l}
        &\leq
        \bp{ 4|f_t|^2\bp{\Psi_{t,l-1} - \Psi_{t,l}} + \adderr_{t,l}^2 }^{1/2} + \frac{4k}{\e} \log{ \bp{\frac{2k|Q|}{\b}} }.
    \end{align*}
    Finally, we can bound the maximum error in approximating the differential dataset as,
    \begin{align*}
        \max_{q \in Q} \abs{q(\nabla g_t) - q(\nabla f_t)}
        &= \max_{q \in Q} \abs{ q\bp{ \avg_{l\in[k]} h_{t,l} } - q(\nabla f_t) }        
        \\
        &\leq \avg_{l\in[k]} \max_{q \in Q} \abs{ q(h_{t,l}) - q(\nabla f_t) }
        = \avg_{l\in[k]} \maxerr_{t,l}
        \\
        &\leq 
        \avg_{l\in[k]} \bp{ 4|\nabla f_t|^2\bp{\Psi_{t,l-1} - \Psi_{t,l}} + \adderr_{t,l}^2 }^{1/2}  + \frac{4k}{\e} \log{ \bp{\frac{2k|Q|}{\b}} }
        \\
        &=
        \bp{ \frac{4|\nabla f_t|^2}{k} \bp{\Psi_{t,0} - \Psi_{t,k}} + \adderr_{t,l}^2 }^{1/2}  + \frac{4k}{\e} \log{ \bp{\frac{2k|Q|}{\b}} }
        \\
        &\leq
        \bp{ \frac{4|\nabla f_t|^2}{k} \ln{|\XX|} + \adderr_{t,l}^2  }^{1/2}  + \frac{4k}{\e} \log{ \bp{\frac{2k|Q|}{\b}} }
        \\
        &\leq
        2|\nabla f_t|\sqrt{\frac{\ln{|\XX|}}{k}} + \frac{2k}{\e}\log{ \bp{\frac{2k}{\b}} } + \frac{4k}{\e} \log{ \bp{\frac{2k|Q|}{\b}} }.
    \end{align*}
Let us suppose we are interested in error at time $T\in\N$. Taking a union bound over time we have, that with probability at least $1-\b$, for all $t\leq T$ simultaneously,
\begin{align*}  
    \max_{q \in Q} \abs{q(g_t) - q(f_t)}
    &\leq \sum_{t=1}^{T} \max_{q \in Q} \abs{q(\nabla g_t) - q(\nabla f_t)}
    \\
    &\leq \sum_{t=1}^{T} \bp{ 
        2|\nabla f_t|\sqrt{\frac{\ln{|\XX|}}{k}} + \frac{2k}{\e}\log{ \bp{\frac{2kt}{\b}} } + \frac{4k}{\e} \log{ \bp{\frac{2kt|Q|}{\b}} }
    }
    \\
    &\leq 2|f_T|\sqrt{\frac{\ln{|\XX|}}{k}} + \frac{2k}{\e}\sum_{t=1}^{T} \bp{ \log{ \bp{\frac{2t|Q|}{\b}} } + 2 \log{ \bp{\frac{2t|Q|^2}{\b}} } }
    \\
    &\leq 2|f_T|\sqrt{\frac{\ln{|\XX|}}{k}} + \frac{6k}{\e}\sum_{t=1}^{T} \bp{ \log{ \bp{\frac{2t|Q|^{5/3}}{\b}} } }
    \\
    &\leq 2|f_T|\sqrt{\frac{\ln{|\XX|}}{k}} + \frac{6kT}{\e} \log{ \bp{\frac{2T|Q|^{5/3}}{\b}} }.
\end{align*}

Let us compare the upper bound to a function of the form $u(k) = \frac{a}{\sqrt{k}}+bk$, then we can optimize for the value of $k$ with $k_*=\bp{\frac{a}{2b}}^{2/3}$. This results in $u(k_*) = \bp{2^{1/3}+2^{-1/3}}a^{2/3}b^{1/3}$. Using this optimal value in our upper bound so far, we have,
\begin{align*}  
    \max_{q \in Q} \abs{q(g_t) - q(f_t)}
    &\leq \bigo{ \bp{ |f_T|\sqrt{\ln{|\XX|}} }^{2/3} \bp{ \frac{T}{\e} \log{ \bp{\frac{T|Q|^{5/3}}{\b}} } }^{1/3} }
    \leq \bigo{ |f_T|^{2/3} \bp{ \frac{ \ln{|\XX|}\ln{|Q|} \bp{T\log{T}} }{\e\b} }^{1/3} }.
\end{align*}
\end{proof}

\section{Accuracy analysis for the proposed method}
In this section, we try to find a bound on the accuracy of Algorithm~\ref{alg:main}. The analysis mostly follows what we did in Section~\ref{s:appendix_baseline_accuracy} and we use the notations $\maxerr_{t,l}$, $\adderr_{t,l}$ and $\Psi_{t,l}$ from that proof. Additionally, we use the notation $\maxerr_t$ to denote the maximum error comparing the input and synthetic stream snapshot at time $t$, that is
\begin{equation}
    \maxerr_t \coloneqq \max_{q \in Q} \abs{q ( g_{t} - f_{t}) )},
\end{equation}
note that there is only one index in the subscript here, unlike $\maxerr_{t,l}$.

Furthermore, we use the accuracy guarantees of the binary tree mechanism from \cite{Chan2010ContinualPrivateStats} as stated in Lemma~\ref{alg:unbounded_block_counter}.

\begin{lemma}\label{l:unbounded_binary_counter_utility}[Accuracy of unbounded binary tree counter]
For any $t\in\N$ and $\b>0$, an $\e$-differentially private unbounded binary tree counter is $\bp{ \bigo{ \frac{1}{\e} (\log{t})^{1.5} \log{\frac{1}{\b}} }, \b }$-accurate.
\end{lemma}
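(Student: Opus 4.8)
The plan is to recognize this as the accuracy analysis of the Hybrid Mechanism of \cite{Chan2010ContinualPrivateStats}, specialized to a single query time $t$. Note that the counter accuracy in Definition~\ref{def:counter} is required only pointwise in $t$, so no union bound over the time horizon is needed, which simplifies matters. First I would recall the construction of the unbounded binary tree counter: the time axis is partitioned dynamically into growing dyadic blocks, a fresh bounded binary-tree mechanism is run inside the current block, and the totals of the already-completed blocks are accumulated through a running noisy sum. The key structural fact is that at query time $t$, every subtree and accumulator that is relevant to the estimate at $t$ has depth $\bigo{\log t}$.

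Next I would track the noise scale and the query decomposition together. Since a single increment of the input stream at one time step propagates to at most $\bigo{\log t}$ of the internal partial sums (one per level of the tree, plus the block-level accumulator), the $\ell_1$-sensitivity of the collection of partial sums is $\bigo{\log t}$; to obtain $\e$-differential privacy each partial sum is perturbed by independent Laplace noise of scale $\bigo{\log t/\e}$. This is exactly the scale fixed by the privacy theorem, and it is what governs the error. The released estimate $g(t)$ of $\sum_{t'\le t} f(t')$ is then assembled as a sum of at most $\bigo{\log t}$ of these noisy partial sums, so the error $g(t)-\sum_{t'\le t}f(t')$ is a sum of $\bigo{\log t}$ independent Laplace variables, each of scale $\bigo{\log t/\e}$. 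Applying a Bernstein-type tail bound for sums of independent Laplace random variables (the concentration lemma of \cite{Chan2010ContinualPrivateStats}), with probability at least $1-\b$ this magnitude is, up to constants, at most $\frac{\log t}{\e}\bp{\sqrt{\log t \cdot \ln\tfrac{1}{\b}} + \ln\tfrac{1}{\b}}$, which is absorbed into $\bigo{\frac{1}{\e}(\log t)^{1.5}\ln\frac{1}{\b}}$; here the factor $\sqrt{\log t}$ comes from aggregating $\bigo{\log t}$ noise terms and the remaining $\log t$ from the per-term scale.

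I expect the main obstacle to be twofold. First, obtaining the $(\log t)^{1.5}$ exponent requires genuinely exploiting the concentration of the Laplace \emph{sum} via its variance and a sub-exponential tail inequality; a naive term-by-term union bound over the $\bigo{\log t}$ noise variables would only give $(\log t)^2$, so care is needed to keep the bound tight while still folding the Gaussian and tail regimes into the single clean factor $\ln\frac{1}{\b}$. Second, I would need to handle the unbounded horizon correctly: because no horizon $T$ is committed in advance, the noise scale must grow as the dyadic blocks grow, and the argument must verify that at query time $t$ this scale is controlled by the actual $t$ rather than by any pre-fixed bound, so that the final accuracy depends only on $\log t$.
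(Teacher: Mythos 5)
The paper does not actually prove this lemma: it is stated as an imported fact, citing \cite{Chan2010ContinualPrivateStats}, and no argument is given in the text. Your reconstruction is essentially the standard proof of the Hybrid/Binary Mechanism accuracy from that reference, and it is correct: the estimate at time $t$ decomposes into $\bigo{\log t}$ noisy partial sums, each carrying independent Laplace noise of scale $\bigo{\log t / \e}$ (forced by the $\ell_1$-sensitivity of the p-sum collection), and the sub-exponential concentration bound for sums of independent Laplace variables (Corollary 2.9 of the cited work) yields the $\frac{1}{\e}(\log t)^{1.5}\log\frac{1}{\b}$ rate rather than the $(\log t)^2$ one would get from a per-term union bound. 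Your two flagged obstacles are exactly the right ones, and both are handled as you describe. One small imprecision worth noting: in the hybrid construction the two noise sources have different scales --- the cross-block accumulator contributes $\bigo{\log t}$ terms of scale $\bigo{1/\e}$, while only the in-block binary tree contributes terms of scale $\bigo{\log t/\e}$ --- but since the latter dominates, your uniform bound of $\bigo{\log t/\e}$ per term is harmless and the final statement is unaffected.
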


\textit{Selection error:} In Algorithm~\ref{alg:main}, we use an approximation of the true data in the exponential mechanism, such that at any time $t\in\N$, we use $\nabla f_t + g_{t-1}$ instead of $f_t$ for true data. This introduces bias which can be analyzed as,
\begin{align*}
    \max_{q \in Q} \abs{q (h_{t,l-1}) - q (\nabla f_{t} - g_{t-1}) )}
    &= \max_{q \in Q} \abs{q (h_{t,l-1} - f_t) - q ( g_{t-1} - f_{t-1}) )}
    \\
    &\geq \max_{q \in Q} \abs{ \abs{q (h_{t,l-1} - f_t)} - \abs{q ( g_{t-1} - f_{t-1}) )} }
    \\
    &\geq \max_{q \in Q} \abs{q (h_{t,l-1} - f_t)} - \max_{q \in Q} \abs{q ( g_{t-1} - f_{t-1}) )}
    \\
    &= \maxerr_{t,l} - \maxerr_{t-1}.
\end{align*}

At the $l^{th}$ iteration at time $t$, we select the query with index $i$, where $i$ is a shorthand for $e_{t,l}$. By the utility of the Exponential mechanism (Theorem~\ref{thm:exp_mech_accuracy}) invocated with a privacy budget $\e/2k$ and sensitivity $1$, we have,

\begin{align*}
    \prob{ \abs{q_i(h_{t,l-1}) - q_i(\nabla f_t + g_{t-1})} \leq \max_{q\in Q} \abs{q(h_{t,l-1}) - q (\nabla f_t + g_{t-1})} - \frac{4k}{\e} \ln{\frac{|Q|}{\b}} }
    &\leq \b
    \\
    \prob{ \abs{q_i(h_{t,l-1}) - q_i(\nabla f_t + g_{t-1})} \leq \maxerr_{t,l} - \maxerr_{t-1} - \frac{4k}{\e} \ln{\frac{|Q|}{\b}} }
    &\leq \b
    \\
    \prob{ \abs{q_i(h_{t,l-1}-f_t) + q_i(f_{t-1} - g_{t-1})} \leq \maxerr_{t,l} - \maxerr_{t-1} - \frac{4k}{\e} \ln{\frac{|Q|}{\b}} }
    &\leq \b
    \\
    \prob{ \abs{q_i(h_{t,l-1}-f_t) } + \max_{q \in Q} \abs{ q(f_{t-1} - g_{t-1})} \leq \maxerr_{t,l} - \maxerr_{t-1} - \frac{4k}{\e} \ln{\frac{|Q|}{\b}} }
    &\leq \b
    \\
    \prob{ \abs{q_i(h_{t,l-1}-f_t) } + \maxerr_{t-1} \leq \maxerr_{t,l} - \maxerr_{t-1} - \frac{4k}{\e} \ln{\frac{|Q|}{\b}} }
    &\leq \b,
\end{align*}
which results in,
\begin{equation} \label{eq:acc_proof_select_error}
    \prob{ \abs{q_i(h_{t,l-1}-f_t) } \leq \maxerr_{t,l} -2\maxerr_{t-1} - \frac{4k}{\e} \ln{\frac{|Q|}{\b}} }
    \leq \b.
\end{equation}

\textit{Additive error:} Using $m_{t,l} = C_i(t)+r_i(t)$, additive error becomes,
\begin{equation}
    \adderr_{t,l} = \abs{ C_i(t)+r_i(t) - q_i (f_t) }.
\end{equation}
Let $N_i(t)\subseteq N$ be the times when query index $i$ is selected by the exponential mechanism at or before time $t$. Let $\bar N_i(t) = [t]\setminus N_i(t)$. Then,
\begin{align*}
    \adderr_{t,l}
    & \leq \abs{ C_i(t) - q_i\bp{ f_{N_i(t)} } } + \abs{ r_i(t) - q_i\bp{ f_{\bar N_i(t)} } }
    \\
    & \leq \abs{ C_i(t) - q_i\bp{ f_{N_i(t)} } } + \abs{ q_i(g_{t-1}) - C_i(t-1) - q_i\bp{ f_{\bar N_i(t)} } }
    \\
    & \leq \abs{ C_i(t) - q_i\bp{ f_{N_i(t)} } } + \abs{ q_i(g_{t-1}) - C_i(t-1) - q_i\bp{ f_{\bar N_i(t)} } }
    \\
    & \leq \abs{ C_i(t) - q_i\bp{ f_{N_i(t)} } } + \abs{ q_i(g_{t-1}) - q_i(f_{t-1}) } + \abs{ C_i(t-1) - q_i \bp{f_{N_i(t-1)}} }
\end{align*}

By Lemma~\ref{l:unbounded_binary_counter_utility}, we have,
\begin{equation} \label{eq:acc_proof_bt_error}
    \prob{ \abs{ C_i(t) - q_i\bp{ f_{N_i(t)} } } \geq \frac{c}{\e} \ln{\bp{\frac{1}{\beta}}} \bp{\ln{\abs{N_i(t)}}}^{3/2} } \leq \beta,
\end{equation}
for some constant $c$.

\textit{Overall:}
Then, using a union bound and Equations~\eqref{eq:acc_proof_select_error} and \eqref{eq:acc_proof_bt_error}, with probability at least $1-\b$, for all $l \in [k]$ and $t \in [T]$ simultaneously,
\begin{equation}
    \maxerr_{t,l} \leq \abs{q_i(h_{t,l-1}-f_t) } + 2\maxerr_{t-1} + \frac{4k}{\e} \ln\bp{\frac{2kT|Q|}{\b}},
\end{equation}
and
\begin{equation}
    \adderr_{t,l} \leq \frac{c}{\e} \ln{\bp{\frac{2kT}{\beta}}} \bp{ \bp{\ln{\abs{N_i(t)}}}^{3/2} + \bp{\ln{\abs{N_i(t-1)}}}^{3/2} } + \maxerr_{t-1}.
\end{equation}

\textit{Relative entropy}
Recall that Equation~\eqref{eq:relative_entropy_diff} from the proof of Theorem~\ref{thm:streaming_mwem_acc} states, 
$$
     \Psi_{t,l-1}-\Psi_{t,l} \geq \bp{\frac{q_{e_{t,l}}(h_{t,l}) - q_{e_{t,l}}(f_t)}{2|f_t|}}^2 - \bp{\frac{m_{t,l} - q_{e_{t,l}}(f_t)}{2|f_t|}}^2.
$$

Combining the equations we have,
\begin{align*}
    \maxerr_{t,l}
    &\leq
    \bp{ 4|f_t|^2\bp{\Psi_{t,l-1} - \Psi_{t,l}} + \adderr_{t,l}^2 }^{1/2} + 2\maxerr_{t-1} + \frac{4k}{\e} \ln\bp{\frac{2kT|Q|}{\b}}.
\end{align*}

Then, similar to the proof of Theorem~\ref{thm:streaming_mwem_acc}, for any $t\in[T]$, we have, 
\begin{align*}
    \max_{q \in Q} \abs{q(g_t) - q(f_t)}
    \leq
    2|f_t|\sqrt{\frac{\ln{|\XX|}}{k}} + \adderr_{t,l}  + 2\maxerr_{t-1} + \frac{4k}{\e} \ln\bp{\frac{2kT|Q|}{\b}}.
\end{align*}

This results in the following recursive relation
\begin{equation} \label{eq:acc_proof_recursive_rel}
    \max_{q \in Q} \abs{q(g_t) - q(f_t)}
    \leq
    2|f_t|\sqrt{\frac{\ln{|\XX|}}{k}} + \frac{2c}{\e} \ln{\bp{\frac{2kT}{\beta}}} \bp{\ln{t}}^{3/2} + \frac{4k}{\e} \ln\bp{\frac{2kT|Q|}{\b}} + 3\maxerr_{t-1}.
\end{equation}

We can simplify the above recursive relation such that at time $T$ we have,
$$
\max_{q \in Q} \abs{q(g_T) - q(f_T)} \leq \sum_{t=1}^T 3^{T-t} \bp{ 2|f_t|\sqrt{\frac{\ln{|\XX|}}{k}} + \frac{2c}{\e} \ln{\bp{\frac{2kT}{\beta}}} \bp{\ln{t}}^{3/2} + \frac{4k}{\e} \ln\bp{\frac{2kT|Q|}{\b}} }.
$$

Note that the above bound has a term exponential in time and thus is not better than what we had for the accuracy of StreamingMWEM in Theorem~\ref{thm:streaming_mwem_acc}. However, the results of our empirical experiments (Section ~\ref{s:results}) suggest that the proposed method outperforms the StreamingMWEM. The reason that we do not have a better bound in the theoretical proof is that our algorithm depends on the output of the previous time step, which results in the worst-case recursive relation as mentioned in Equation~\eqref{eq:acc_proof_recursive_rel}. Moreover, the theorem and proof do not exploit potential cancellations in the added noise and we conjecture that being able to utilize  cancellations should give an improved bound.

\end{document}